\tikzstyle{vertex}=[circle, fill=black!80!white, inner sep=0pt, minimum size=4pt]
\newtheorem{theorem}{Theorem}
\newtheorem{lemma}{Lemma}
\newtheorem{claim}{Claim}[section]
\newtheorem{definition}{Definition}
\newtheorem{observation}{Observation}
\newtheorem{proposition}{Proposition}
\theoremstyle{definition}
\newcommand{\cps}{{\sf compass}}
\DeclareMathOperator{\operatorClassFPT}{FPT\xspace}
\newcommand{\classFPT}{\ensuremath{\operatorClassFPT}\xspace}
\newlength{\RoundedBoxWidth}
\newsavebox{\GrayRoundedBox}
\newenvironment{GrayBox}[1]%
   {\setlength{\RoundedBoxWidth}{.93\textwidth}
    \def\boxheading{#1}
    \begin{lrbox}{\GrayRoundedBox}
       \begin{minipage}{\RoundedBoxWidth}}%
   {   \end{minipage}
    \end{lrbox}
    \begin{center}
    \begin{tikzpicture}%
       \node(Text)[draw=black!20,fill=white,rounded corners,%
             inner sep=2ex,text width=\RoundedBoxWidth]%
             {\usebox{\GrayRoundedBox}};
        \coordinate(x) at (current bounding box.north west);
        \node [draw=white,rectangle,inner sep=3pt,anchor=north west,fill=white] 
        at ($(x)+(6pt,.75em)$) {\boxheading};
    \end{tikzpicture}
    \end{center}}
\newenvironment{defproblemx}[2][]{\noindent\ignorespaces%
                                \FrameSep=6pt%
                                \parindent=0pt%
                \vspace*{-1.5em}
                \ifthenelse{\isempty{#1}}{%
                  \begin{GrayBox}{\textsc{#2}}%
                }{%
                  \begin{GrayBox}{\textsc{#2} parameterized by~{#1}}%
                }
                \begin{tabular*}{\textwidth}{@{\hspace{.1em}} >{\itshape} p{1.8cm} p{0.8\textwidth} @{}}%
            }{
                \end{tabular*}%
                \end{GrayBox}%
                \ignorespacesafterend
            }
\newcommand{\defproblema}[3]{
  \begin{defproblemx}{#1}
    Input:  & #2 \\
    Task: & #3
  \end{defproblemx}
}%
\newcommand{\pname}{\textsc}
\newcommand{\ProblemFormat}[1]{\pname{#1}}
\newcommand{\ProblemIndex}[1]{\index{problem!\ProblemFormat{#1}}}
\newcommand{\ProblemName}[1]{\ProblemFormat{#1}\ProblemIndex{#1}{}\xspace}
\newcommand{\probMaxSTP}{\ProblemName{Max Rank  $(s,t)$-Path}}
\newif\iflong
\begin{document}
\longtrue

\title{Computing paths of large rank in  planar frameworks deterministically\thanks{The research leading to these results has received funding from the Research Council of Norway via the project  BWCA (grant no. 314528). Giannos Stamoulis acknowledges support by the ANR project ESIGMA (ANR-17-CE23-0010) and the French-German Collaboration ANR/DFG Project UTMA (ANR-20-CE92-0027).}
}

\author{
Fedor V. Fomin\thanks{
Department of Informatics, University of Bergen, Norway.}
\and
Petr A. Golovach\addtocounter{footnote}{-1}\footnotemark{}
\and
Tuukka Korhonen\addtocounter{footnote}{-1}\footnotemark{}
\and 
Giannos Stamoulis\thanks{LIRMM, Universit\'e de Montpellier, CNRS, France.}
}

\date{}

\maketitle

\thispagestyle{empty}

\begin{abstract}
A framework consists of an undirected graph $G$ and a matroid $M$ whose elements correspond to the vertices of $G$. 
Recently, Fomin et al. [SODA 2023] and Eiben et al. [ArXiV 2023] developed parameterized algorithms for computing paths of rank $k$ in frameworks. More precisely, for vertices $s$ and $t$ of $G$, and an integer $k$, they gave FPT algorithms parameterized by $k$
deciding
whether there is an $(s,t)$-path in $G$ whose vertex set contains a subset of elements of $M$ of rank $k$. 
These algorithms are based on Schwartz-Zippel lemma for polynomial identity testing and thus are randomized,
and therefore the existence of a deterministic FPT algorithm for this problem remains open.

We present the first deterministic FPT algorithm that solves the problem in frameworks whose underlying graph $G$ is planar. While the running time of our algorithm is worse than the running times of the recent randomized algorithms, our algorithm works on more general classes of matroids. In particular, this is the first FPT algorithm  for the case when matroid $M$ is represented over rationals.

Our main technical contribution is the nontrivial adaptation of the classic irrelevant vertex technique to frameworks
to reduce the given instance to one of bounded treewidth. This allows us to employ the toolbox of representative sets to design a dynamic programming procedure solving the problem efficiently on instances of bounded treewidth. 
\end{abstract}
\newpage
\pagestyle{plain}
\setcounter{page}{1}

\section{Introduction}\label{sec:intro}
A \emph{framework} is a pair $(G,M)$, where $G$ is a graph and $M=(V(G),\mathcal{I})$ is a matroid on the vertex set of $G$.
This term appears in the recent monograph of Lov{\'{a}}sz~\cite{Lovasz19}, where he defines frameworks as graphs with a collection of vectors of $\mathbb{R}^d$ labeling their vertices.
Frameworks have appeared in the literature under many different names.
For example, they are mentioned as \emph{pregeometric graphs} in the influential work of Lov{\'{a}}sz~\cite{Lovasz77} 
on representative families of linear matroids and as \emph{matroid graphs}
in the book of Lov{\'{a}}sz and Plummer \cite{LovaszPlummerbook876}.
The problem of computing maximum matching in frameworks is closely related to the matchoid, the matroid parity, and polymatroid matching problems (see~\cite{LovaszPlummerbook876} for an overview).
More broadly, the problems of finding specific subgraphs of large ranks in frameworks belong to the wide family of problems about submodular function optimization under combinatorial constraints \cite{CalinescuCPV11,ChekuriP05,NemhauserWF78,FominGKLS23shor}.

Fomin et al. in~\cite{FominGKSS23fixe} introduced the following  \textsc{Maximum Rank $(s,t)$-Path} problem.  In this problem, 
 given a framework $(G,M)$, two vertices $s$ and $t$ of $G$, and an integer $k$,
we seek for an $(s,t)$-path in $G$ where the rank function of $M$ evaluates to at least $k$. 
We say that such a path \emph{has rank at least $k$}.

\defproblema{\probMaxSTP}%
{A framework $(G,M)$, vertices $s$ and $t$ of $G$, and an  integer $k\geq 0$.}%
{Decide whether $G$ contains an $(s,t)$-path of rank at least $k$.}

\probMaxSTP encompasses several fundamental and well-studied problems about paths and cycles in undirected graphs.

\medskip
\noindent\emph{Longest path}. Of course, when $M$ is a uniform matroid, then a path is of rank at least $k$ if and only if it contains at least $k$ vertices. In this case, we have the classical \textsc{Longest Path} problem, where for a graph $G$ and integer $k$ the task is to identify whether $G$ contains a path with at least $k$ vertices~\cite{AlonYZ95}.

\medskip
\noindent\emph{$T$-cycle}.  In this problem, we are given a set $T$ of terminals and the task is to decide whether there is a cycle through all terminals \cite{BjorklundHT12,Kawarabayashi08,Wahlstrom13}. {$T$-cycle} is the special case of \probMaxSTP.
 Consider the following linear matroid. For every vertex of $G$ not in $T$ we assign a $|T|$-dimensional vector whose all entries are zero. To vertices of $T$ we assign vectors forming an orthonormal basis of $\mathbb{R}^{|T|}$. Then $G$ has a cycle passing through all terminals if and only if $(G,M)$ has an $(s,t)$-path of rank $|T|$, for some $\{s,t\}\in E(G)$.

\medskip
\noindent\emph{Maximum Colored Path.}  
In the  \textsc{Maximum Colored $(s,t)$-Path} problem, we are given a colored graph $G$, two vertices $s$ and $t$ of $G$, and an integer $k$. The task is to decide whether $G$ has an $(s,t)$-path containing at least $k$ different colors~\cite{broersma2005paths,FominGKSS23fixe} (see also~\cite{CohenIMTP21,CouetouxNV17}). \textsc{Maximum Colored $(s,t)$-Path} is the special case of \probMaxSTP where the matroid $M$ is a partition matroid. Indeed, in this matroid the ground set $V(G)$ is partitioned into classes $L_1, \dots, L_t$ and a set $I$ is independent if $|I\cap L_i|\leq 1$ for every label $i\in \{1,\dots, t\}$.
In this way, a path of $G$ of rank at least $k$ is a path containing vertices of at least $k$ different (color) classes among $L_1, \dots, L_t$.

\paragraph{Randomized FPT algorithms for \textsc{Maximum Rank $(s,t)$-Path}.}
The parameterized complexity of \textsc{Maximum Rank $(s,t)$-Path} was unknown until very recently.
The first FPT algorithm for \textsc{Maximum Rank $(s,t)$-Path} was given in~\cite{FominGKSS23fixe}.
This algorithm runs in time $2^{\mathcal{O}(k^2\log(q+k))}n^{\mathcal{O}(1)}$ and works on frameworks with matroids represented in finite fields of order $q$.
Also, Eiben, Koana, and Wahlström~\cite{EibenKW23dete}, using different techniques, obtained an FPT algorithm for the same problem that runs in time $2^k n^{\mathcal{O}(1)}$ on frameworks with matroids representable over fields of characteristic two.
These two algorithms use two different algebraic methods.
The algorithm of~\cite{FominGKSS23fixe} extends the celebrated algebraic technique based on \emph{cancellation of monomials} used by Björklund, Husfeldt, and Taslaman~\cite{BjorklundHT12} to solve the \textsc{$T$-Cycle} problem, while the algorithm of~\cite{EibenKW23dete} utilizes the toolbox of \emph{(constrained) multilinear detection}~\cite{Koutis08fast,KoutisW15alge,DBLP:journals/siamcomp/Bjorklund14,BjorklundHKK17} 
combined with \emph{determinantal sieving}~\cite{EibenKW23dete}.
Both these algorithms involve polynomial identity testing and invoke the Schwartz-Zippel lemma, and therefore are randomized.
In fact, because of the crucial use of the Schwartz-Zippel lemma in both these algorithms, as the authors of~\cite{EibenKW23dete} state it, ``derandomization appears infeasible" for the algorithms of~\cite{FominGKSS23fixe} and~\cite{EibenKW23dete} for  \textsc{Maximum Rank $(s,t)$-Path}.
Therefore, the next challenge is to obtain \textit{derandomized} FPT algorithms for this problem.

\paragraph{Our results.}
Our main result establishes the first \textit{deterministic} FPT algorithm for \textsc{Maximum Rank $(s,t)$-Path} on frameworks of planar graphs and matroids representable over finite fields or  over the field of rationals.

 \begin{restatable}{theorem}{thmirrelevantv}
\label{thm:fpt-planar}
There is a deterministic algorithm that, given a framework $(G,M)$, where $G$ is a planar graph $G$ and $M$ is represented as a matrix over a finite field or over $\mathbb{Q}$, two vertices $s,t\in V(G)$ and an integer $k$, in time $2^{2^{\mathcal{O}(k\log k)}}\cdot (|G|+\|M\|)^{\mathcal{O}(1)}$ either returns an $(s,t)$-path of $G$ of rank at least $k$, or determines that $G$ has no such $(s,t)$-path.
\end{restatable}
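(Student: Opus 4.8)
The plan is to realise the two-phase strategy announced above. \emph{Phase~1:} use an irrelevant-vertex argument to reduce, in polynomial time, the input to an equivalent instance whose underlying graph has treewidth $w(k)=2^{\mathcal{O}(k\log k)}$. \emph{Phase~2:} solve the bounded-treewidth instance by dynamic programming over a tree decomposition, using representative sets to keep the tables of size $2^{2^{\mathcal{O}(k\log k)}}$.

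The starting point for Phase~1 is a simple but crucial observation: if $G$ has an $(s,t)$-path $P$ with $r_M(V(P))\ge k$, then $V(P)$ contains a subset $S$ with $|S|\le k$ and $r_M(S)=k$ (a basis of $\spn_M(V(P))$ truncated at size $k$); conversely, \emph{any} $(s,t)$-path whose vertex set contains such an $S$ automatically has rank at least $k$. Thus the problem is a topological routing question "is there an $(s,t)$-path through a rank-$k$ set of $\le k$ vertices", except that $S$ is not fixed in advance. Now suppose $\tw(G)$ exceeds the threshold $w(k)$. Since $G$ is planar it contains, disjointly from $\{s,t\}$, a (flat) wall $W$ of height $\Omega(w(k))$ that decomposes into $\Omega(w(k))$ concentric layers, and we take $v$ to be a central vertex, enclosed by every layer. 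I claim $v$ is irrelevant, i.e.\ $G$ has an $(s,t)$-path of rank $\ge k$ iff $G-v$ does. One direction is trivial. For the other, take such a path $P$ and a witness $S\subseteq V(P)$ with $|S|\le k$, $r_M(S)=k$. At most $k$ layers contain a vertex of $S$; discarding these and the innermost one, there remain $2^{\mathcal{O}(k\log k)}-\mathcal{O}(k)$ "deep" layers, which exceeds the number of distinct "linkage patterns" that $P$ can induce on $\cps(W)$ relative to the $\mathcal{O}(k)$ relevant vertices (the $\le k$ elements of $S$ lying in $\cps(W)$, plus the $\mathcal{O}(1)$ vertices attaching $W$ to the rest of $G$). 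By pigeonhole two deep layers carry the same pattern, and the classical rerouting along the concentric cycles of $W$ then lets us replace the portion of $P$ enclosed by the inner of the two by a subpath confined to the layers in between, yielding an $(s,t)$-path $P'$ avoiding $v$. Since no vertex of $S$ was enclosed by that inner layer, $P'$ still contains all of $S$, so $r_M(V(P'))\ge r_M(S)=k$. Deleting $v$ and iterating gives, in polynomial time, an equivalent instance of treewidth $\le w(k)$; the wall extraction and the (approximate) tree decompositions are supplied by standard polynomial-time planar-graph subroutines.

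For Phase~2 I would run a dynamic program over a tree decomposition of width $w=2^{\mathcal{O}(k\log k)}$. A partial solution at a bag $B$ is a collection of vertex-disjoint simple paths with all endpoints in $B$ (a "half-built" $(s,t)$-path), recorded only through its \emph{type}: how the $\le w$ boundary vertices are matched into segments, together with, for the set $U$ of vertices already covered, an independent subset $I\subseteq U$ of size $\ell\le k$. For each of the $w^{\mathcal{O}(w)}$ matching types and each $\ell\le k$ I keep a $(k-\ell)$-representative family (in the matroid sense of Lov{\'a}sz and Marx) of the collected independent $\ell$-sets; such a family has size $\binom{k}{\ell}\le 2^{k}$ and is computed from the representation matrix in polynomial time, so each table has size $w^{\mathcal{O}(w)}\cdot 2^{\mathcal{O}(k)}=2^{2^{\mathcal{O}(k\log k)}}$, and the introduce/forget/join operations (interleaved with recomputing representative families to prune) take time polynomial in the table size. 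Correctness is the representative-set replacement lemma: discarding a dominated partial core cannot destroy a solution, since any completion of rank $\ge k$ adds at most $k-\ell$ new independent elements; we accept iff the root table contains a single $(s,t)$-path carrying an independent $k$-set. For matroids over a finite field this is standard; over $\mathbb{Q}$ I would additionally invoke Hadamard's bound on subdeterminants to argue that every number occurring in the rank computations and in the representative-set extractions has bit-length polynomial in $\|M\|$ (note we never truncate the matroid, only track independent sets of size $\le k$), so the whole procedure stays polynomial per step.

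I expect the main obstacle to be the irrelevant-vertex lemma, and in particular the interaction between the topological rerouting and the matroid rank: the rerouting has to be carried out so that the new path still contains a rank-$k$ core of the old one, which forces the wall to be large enough to absorb \emph{simultaneously} the $2^{\mathcal{O}(k\log k)}$ linkage patterns and the at most $k$ layers "contaminated" by $S$, and it requires making the compass/flatness bookkeeping in the planar setting fully precise. The bounded-treewidth dynamic program and the bit-complexity analysis over $\mathbb{Q}$ are, by comparison, routine but laborious.
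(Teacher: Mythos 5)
Your Phase~2 (the treewidth dynamic program with semi-matching types and representative families, including the deterministic computation over $\mathbb{Q}$) matches the paper's and is fine. The genuine gap is in Phase~1, in the sentence ``Since no vertex of $S$ was enclosed by that inner layer, $P'$ still contains all of $S$.'' Discarding the at most $k$ layers that \emph{contain} a vertex of $S$ does not prevent vertices of $S$ from being \emph{enclosed by} the deep layers you pigeonhole: a witness vertex $u\in S$ may sit in the innermost region of the wall, in which case every layer encloses it, and the classical rerouting --- which replaces the portion of $P$ inside the inner pigeonholed layer by a subpath confined to the annulus between the two layers --- necessarily loses $u$. Your argument therefore only establishes irrelevance when the entire rank-$k$ witness lies outside the deep part of the wall, which cannot be assumed. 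This is precisely the point where \textsc{Maximum Rank $(s,t)$-Path} departs from \textsc{Disjoint Paths}/\textsc{$T$-cycle}, where the terminals to be preserved are fixed and can be forced outside the wall.

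The paper's resolution is not to preserve $S$ but to \emph{replace} its deep part: it splits $S$ into $S_{\mathrm{out}}$ (outside a suitable compass) and $S_{\mathrm{in}}$ (inside), reroutes the path so as to keep only $S_{\mathrm{out}}$ (its Lemma on rerouting, which needs only $\mathcal{O}(k^2)$ insulation layers), and then compensates for the discarded $S_{\mathrm{in}}$ by picking a fresh independent set of the same rank from a packing of $k+1$ disjoint subwalls whose compasses all have \emph{equal} matroid rank, one of which (the one containing the candidate irrelevant vertex) is avoided. Producing such an equal-rank packing requires a recursive ``zooming'' argument, and it is this recursion --- not a count of linkage patterns --- that forces the treewidth threshold up to $2^{\mathcal{O}(k\log k)}$. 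Your sketch contains no mechanism for restoring the rank lost to the rerouting, so the irrelevant-vertex lemma as you state it would fail; relatedly, your claim that Phase~1 runs in polynomial time is also off, since even in the paper each iteration costs $2^{2^{\mathcal{O}(k\log k)}}\cdot(|G|+\|M\|)^{\mathcal{O}(1)}$.
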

Note that the randomized FPT algorithms of~\cite{FominGKSS23fixe} and~\cite{EibenKW23dete} work for matroids representable over finite fields or fields of characteristic two.
The algorithm of~\Cref{thm:fpt-planar}, apart from being the first deterministic algorithm for \textsc{Maximum Rank $(s,t)$-Path},
is also the first FPT algorithm for frameworks whose matroids are \textsl{not} represented over a finite field or a field of characteristic two, but are represented over $\mathbb{Q}$.

\paragraph{Our techniques.}
To design the deterministic FPT algorithm of~\Cref{thm:fpt-planar}, we follow a different proof strategy than that of~\cite{FominGKSS23fixe} and~\cite{EibenKW23dete}.
Our approach is based on the \textsl{win/win} arguments of the celebrated \emph{irrelevant vertex technique} of Robertson and Seymour~\cite{RobertsonS95b}.
The general scheme of this technique is the following.
If the graph satisfies certain combinatorial properties,
then one can identify a vertex of the graph that can be declared \emph{irrelevant}, meaning that its deletion results in an equivalent instance of the problem.
Therefore, after deleting this vertex, we can iterate on the (equivalent) reduced instance.
Once this reduction rule can not be further applied, the obtained reduced instance is equivalent to the original one and also ``simpler''.
Therefore, one remains to argue that the problem can be solved efficiently in the reduced equivalent instance.
This is a standard technique in parameterized algorithms design -- see, for example,~\cite{FominGT19modif,JansenK021verte,KawarabayashiMR08asim,MarxS07obta,KobayashiK09algo,GroheKMW11find,KawarabayashiR09hadw,FominLST12line,Kawarabayashi08,KawarabayashiKM10link,KawarabayashiR10oddc,Kawarabayashi09plan,FominLP0Z20hittin,BasteST20acomp,SauST21kapiII,GolovachST23model} (see also~\cite[Section 7.8]{cygan2015parameterized}).
The standard mesure of complexity of instances for the application of the irrelevant vertex technique is \emph{treewidth}.
In particular, the strategy is formulated as follows. As long as the treewidth of the instance is large enough, detect and remove irrelevant vertices.
If the treewidth is small, then solve the problem on this equivalent instance using dynamic programming.

Our application of the irrelevant vertex technique is inspired by the algorithm of Kawa\-ra\-bayashi~\cite{Kawarabayashi08} for \textsc{$T$-cycle}
and extends its methods.
In a typical irrelevant-vertex argument, one has to prove that every solution can ``avoid'' a vertex that will be declared irrelevant.
For example, in the classical application of Robertson and Seymour~\cite{RobertsonS95b} for the \textsc{Disjoint Paths} problem, one should argue that (if the graph has large treewidth) any collection of disjoint paths between certain terminals can be ``rerouted away'' from a vertex $v$ and this vertex should be declared irrelevant.
In our case, where we seek an $(s,t)$-path of \emph{large rank} in a framework, this rerouting should guarantee that large rank is preserved.
In general, to deal with such problems on frameworks, one should employ new arguments to adjust this technique to take into account the structure of the matroid.
The way we circumvent this problem for \textsc{Maximum Rank $(s,t)$-Path} is to formulate such a rerouting argument in a ``sufficiently insulated'' area of the graph where independent sets of the matroid $M$ appear in a homogeneous way.
Planarity of the input graph allows to find such an area using the grid-like structure of \emph{walls}.
An overview of this approach is provided in~\Cref{subsec:overview}.
This application of the irrelevant vertex technique for frameworks is novel and illustrates an interesting interplay between combinatorial structures and algebraic properties, that may be of independent interest.

The dynamic programming on graphs of bounded treewidth is pretty standard (see, e.g.,~\cite{CyganFKLMPPS15}) up to one detail. 
To encode a partial solution, we keep the information about vertices forming independent sets of matroid $M$ visited by a partial solution.
However,  the number of independent sets of size at most $k$ in $M$ could be of order $n^k$.  Thus a naive encoding of partial solutions would result in blowing-up of the computational complexity.  To avoid this, we store only \emph{representative} sets (see~\cite{FominLPS16,LokshtanovMPS18}) instead of all possible independent sets.
Both randomized~\cite{FominLPS16} and deterministic~\cite{LokshtanovMPS18} constructions of representative sets require a linear representation of $M$. This is the reason why \Cref{thm:fpt-planar} is stated for linear matroids. We point out that the dynamic programming subroutine for graphs of bounded treewidth is the only place in the proof of \Cref{thm:fpt-planar} requiring a representation of $M$. It is an interesting open question, whether  \textsc{Maximum Rank $(s,t)$-Path} is \classFPT when parameterized by $k$ and the treewidth if the input matroid is given by its independence oracle. 

 \subsection{Overview of the proof of \Cref{thm:fpt-planar}}
 \label{subsec:overview}

Our general approach is the following. We show that if the {treewidth} of the input graph $G$ is $2^{\mathcal{O}(k\log k)}$, then \textsc{Maximum Rank $(s,t)$-Path} can be solved in \classFPT time by a dynamic programming algorithm. Otherwise, if the treewidth is sufficiently large,  we give an algorithm that either finds an $(s,t)$-path of rank at least $k$ or identifies 
 an \emph{irrelevant} vertex $v$, that is, a vertex whose deletion results in an equivalent instance of the problem. In the latter case, we delete $v$ and iterate on the reduced instance. 

If the treewidth of the input graph is large, i.e., of order $2^{\Omega(k\log k)}$, we exploit the grid-minor theorem of Robertson and Seymour for planar graphs~\cite{RobertsonST94} that asserts that a planar graph either contains $(w\times w)$-grid as a minor or the treewidth is $\mathcal{O}(w)$. More precisely, we have that given a plane embedding of $G$, we can find a plane 
 $h$-wall for $h=2^{\Omega(k\log k)}$ as a topological minor or, equivalently, a plane subgraph of $G$ that is a subdivision of such a wall. To explain our arguments, we need some notions that are informally explained here by making use of figures.  In particular, an example of an $h$-wall for $h=7$ is given in \Cref{fig:wall}.
  \begin{figure}[ht]
\centering
\begin{tikzpicture}[scale=0.6]
\def\l{14} 
\def\lminus{13} 
\def\h{7} 
\def\hminus{6} 
\foreach \i in {1,2,...,\l}{
\foreach \j in {1,2,...,\h}{
   \ifthenelse{\(\not \i=\l\) \AND \not \(\i=\lminus \and \j = \h\) \AND \not \(\i=1 \and \j = 1\)}
   {\draw[black!50!white,line width=0.8pt] (\i,\j) -- (\i+1,\j)}
   {};
   \ifthenelse{\(\isodd{\j} \AND \not \(\isodd{\i}\) \AND \not \j=\h \)\OR \(\isodd{\i} \AND \not \(\isodd{\j}\)\)}
  {\draw[black!50!white,line width=0.8pt] (\i,\j) -- (\i,\j+1)}
  {}
  ;}
 }

\foreach \i in {2,...,\lminus}{
\draw[red,line width=1.7pt] (\i,1) -- (\i+1,1) (\i-1,\h) -- (\i,\h);
}

\foreach \j in {2,4,...,\hminus}{
\draw[red,line width=1.7pt] (1,\j) -- (1,\j+1) (1,\j) -- (2,\j) (1,\j+1) -- (2,\j+1) (\lminus,\j) -- (\lminus,\j+1) (2,\j-1) -- (2,\j) (\l,\j-1) -- (\l,\j) (\l-1,\j) -- (\l,\j) (\l,\j-1) -- (\l-1,\j-1);
}

\foreach \i in {6,...,\lminus}{
\draw[blue!80!green,line width=1.9pt] (\i-3,2) -- (\i-2,2) (\i-2,\h-1) -- (\i-1,\h-1);
}

\foreach \j in {3,5,...,\hminus}{
\draw[blue!80!green,line width=1.9pt] (3,\j-1) -- (4,\j-1) (3,\j-1) -- (3,\j) (3,\j) -- (4,\j) (4,\j) -- (4,\j+1)
 (\l-2,\j) -- (\l-2,\j+1) (\l-3,\j-1) -- (\l-3,\j) (\l-2,\j) -- (\l-3,\j) (\l-2,\j+1) -- (\l-3,\j+1);
}

\foreach \i in {6,7,...,9}{
\draw[blue!80!green,line width=1.9pt] (\i,3) -- (\i+1,3) (\i,5) -- (\i-1,5);
}

\draw[blue!80!green,line width=1.8pt] (6,3) -- (6,4) -- (5,4) -- (5,5) (9,5) -- (9,4) -- (10,4) -- (10,3);

\foreach \i in {1,2,...,\l}{
\foreach \j in {1,2,...,\h}{
  \ifthenelse{\(\i=\l \and \j = \h\) \OR \(\i=1 \and \j = 1\)}
  {}
  {\node[vertex] (A\i\j) at (\i,\j) {}};
  }
}

\end{tikzpicture}
\caption{A $7$-wall and its layers.}
\label{fig:wall}
\end{figure}
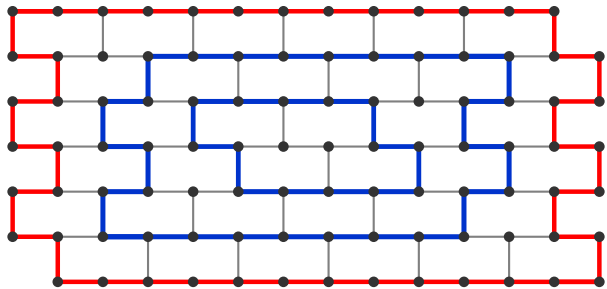 
 
 Note that an $h$-wall has 
 $\lfloor h/2\rfloor$ nested cycles, called \emph{layers}, that are shown in \Cref{fig:wall} in red and blue. The layer forming the boundary of a wall is called the \emph{perimeter} of the wall and is shown in red in the figure.  We extend the notions of layers and perimeter for a \emph{subdivided} $h$-wall, that is, the graph obtained from an $h$-wall by replacing some of its edges by paths. Given a plane subdivided $h$-wall $W$ in $G$, we call the subgraph of $G$ induced by the vertices on the perimeter and inside the inner face of the perimeter the \emph{compass} of $W$ and denote it by $\cps(W)$.   Notice that we can assume that the compass of the subdivided $h$-wall $W$ in $G$ does not contain the terminal vertices $s$ and $t$ by switching to a smaller subwall if necessary. Furthermore, we can assume that $\cps(W)$ is a 2-connected graph as any $(s,t)$-path can only contain vertices of the biconnected component of $\cps(W)$ containing $W$. Also we can assume that  $G$ has two disjoint paths connecting $s$ and $t$ with two distinct vertices on the perimeter of $W$; otherwise, any vertex of $\cps(W)$ outside the perimeter is trivially irrelevant.  
 
Observe that for any nontrivial subwall $W'$ of $W$, $\cps(W')$ is also 2-connected. Therefore, for every two distinct vertices $x$ and $y$ on the perimeter of $W'$ and any $z\in V(\cps(W'))$, $\cps(W')$ has internally disjoint $(x,z)$ and $(y,z)$-paths. In particular, given a set of vertices $S\subseteq V(\cps(W'))$ that are independent with respect to $M$, we can join any $z\in S$ with $x$ and $y$ by disjoint paths.  This observation is crucial for us. 

 \begin{figure}[h!]
\centering
\includegraphics[scale=0.7]{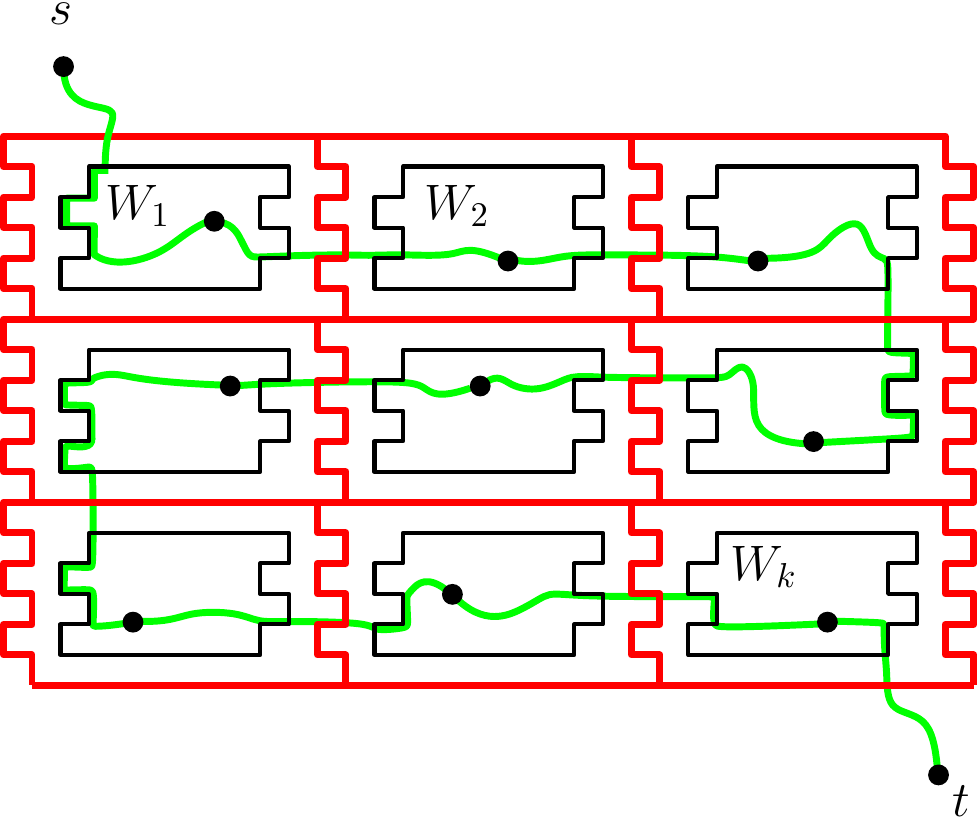}
\caption{An $(s,t)$-path for walls of big rank.}
\label{fig:path}
\end{figure} 

Suppose that there is a packing of $k$ subwalls $W_1,\ldots,W_k$ in $W$ separated by paths in $W$ as it is shown in \Cref{fig:path} such that the rank  $r(\cps(W_i))\geq k$ for $i\in \{1,\ldots,k\}$. Then we can choose vertices $v_1,\ldots,v_k$ in $\cps(W_1),\ldots,\cps(W_k)$, respectively, in such a way that $\{v_1,\ldots,v_k\}$ is an independent set of $M$. Then by our observation, we can construct an $(s,t)$-path in $G$ that goes through $v_1,\ldots,v_k$ as it is shown in the figure in green. Suppose that this is not the case. Then, by zooming inside the wall, we can assume that $r(\cps(W))<k$. Moreover, by recursive zooming, we can find a subwall $W'$ of $W$ with the following structural properties (see \Cref{fig:reroute}).
\begin{itemize}
\item There is a packing of $k+1$ subwalls $W_0,W_1,\ldots,W_k$ in $W'$ separated by paths in $W'$ shown in red in \Cref{fig:reroute} such that $r(\cps(W_i))=r(\cps(W'))$ for $i\in \{1,\ldots,k\}$. 
\item The packing of $W_0,W_1,\ldots,W_k$ is surrounded by $\mathcal{O}(k^2)$ ``insulation'' layers of $W'$ shown in blue.
\end{itemize}
We claim that vertices of $W_0$ are irrelevant.

To see this, consider an $(s,t)$-path $P$ of rank at least $k$ in $G$. We show that if $P$ goes through a vertex of $W_0$, then the path can be rerouted as it is shown in \Cref{fig:reroute} in green to avoid $W_0$. Consider an independent set $X\subseteq V(P)$ of rank $k$ and let $u_1,\ldots,u_\ell$ be the vertices of $X$ that are not spanned by $V(\cps(W'))$ in $M$. Then $u_1,\ldots,u_\ell$ are outside $W'$. We prove that there are two distinct vertices $x$ and $y$ on the inner insulation layer of $W'$, and an $(s,x)$-path $P_1$ and an $(y,t)$-path $P_2$ such that (i) $x$ and $y$ are unique vertices of these paths in the inner insulation layer, and (ii) $u_1,\ldots,u_\ell\in V(P_1)\cup V(V_2)$. 
The proof that $\mathcal{O}(k^2)$ insulation layers are sufficient for rerouting $P$ is non-trivial.  In particular, we adapt the ideas from~\cite{Kawarabayashi08} as well as the structural results of Kleinberg~\cite{Kleinberg98}. Further, we show that it is possible to choose vertices $v_1,\ldots,v_k$ in $W_1,\ldots,W_k$, respectively, so that $r(\{v_1,\ldots,v_k\})=r(\cps(W'))$. Then we construct an $(x,y)$-path $Q$ in the inner part of $W'$ such that (i) $Q$ is internally disjoint with $P_1$ and $P_2$, (ii) $Q$ goes through $v_1,\ldots,v_k$, and (iii) $Q$ avoids $W_0$. We have that $P'=P_1QP_2$ is an $(s,t)$-path that goes through $u_1,\ldots,u_\ell$ and   $v_1,\ldots,v_k$. Therefore $r(P')\geq r(X)\geq k$. Since $Q$ avoids $W_0$, $P'$ has the same property. 

 \begin{figure}[h!]
\centering
\includegraphics[scale=0.65]{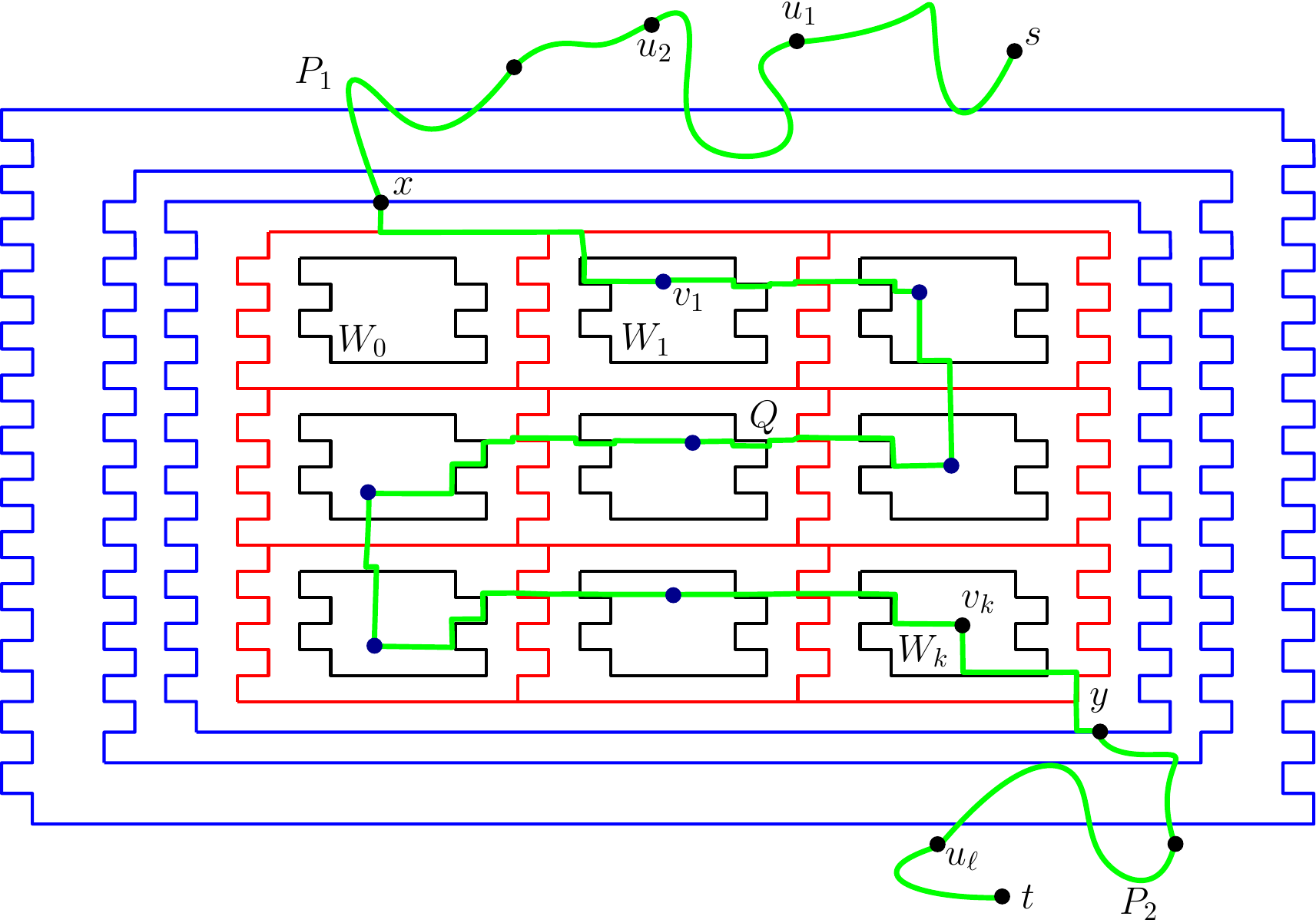}
\caption{Rerouting an $(s,t)$-path.}
\label{fig:reroute}
\end{figure} 

Finally, we note that the algorithm of  Kawarabayashi~\cite{Kawarabayashi08} for \textsc{$T$-cycle} works for general graphs.  
The statement of  \Cref{thm:fpt-planar} is limited to planar graphs and planarity is required to 
  ensure that the rerouting does not decrease the rank of an $(s,t)$-path. It is quite plausible that with additional technicalities our method could be lifted when the underlying graph of the framework is of bounded genus, and more generally,  minor-free. However, it is very unclear, whether rerouting that does not decrease the rank could be achieved for general graphs.  It remains the main obstacle towards 
pushing  the irrelevant vertex technique from frameworks with planar graphs to frameworks with general graphs.

\paragraph{Organization of the paper.}
In~\Cref{sec:prelim}, we present some basic definitions and preliminary results.
In~\Cref{sec:det} we show how to reduce to instances of bounded treewidth using the irrelevant vertex technique,
while in~\Cref{subsec:dp-planar} we present the dynamic programming algorithm that solves the problem in instances of bounded treewidth.
We conclude in~\Cref{sec:conclusion} with open questions and possible future research directions.

\section{Preliminaries}\label{sec:prelim} 
In this section, we introduce basic notation and state some auxiliary results. In~\Cref{subsec:basic}, we provide some basic definitions on parameterized complexity and on graphs, while in~\Cref{subsec:walls-treewidth}, we give some necessary definitions and results on walls and treewidth. We conclude this section with~\Cref{subsec:frameworks}, where we provide some useful notions on matroids and the definition of frameworks.

\subsection{Basic definitions}\label{subsec:basic}

We use $\mathbb{Z}_{\ge 1}$ to denote the set of positive integers and $\mathbb{Z}_{\ge 0}$ the set of non-negative integers. Also, given integers $p,q$ such that $p<q$, we use $[p,q]$ to denote the set $\{p,p+1,\ldots,q\}$ and, if $p\geq1$, we use $[p]$ to denote the set $\{1,\ldots,p\}$.

\paragraph{Parameterized Complexity.} We refer to the book of Cygan et al.~\cite{cygan2015parameterized} for introduction to the area. Here we only briefly mention the notions that are most important to state our results.  A \emph{parameterized problem} is a language $L\subseteq\Sigma^*\times\mathbb{N}$, where $\Sigma^*$ is a set of strings over a finite alphabet $\Sigma$. An input of a parameterized problem is a pair $(x,k)$, where $x$ is a string over $\Sigma$ and $k\in \mathbb{N}$ is a \emph{parameter}. 
A parameterized problem is \emph{fixed-parameter tractable} (or \classFPT) if it can be solved in time $f(k)\cdot |x|^{\mathcal{O}(1)}$ for some computable function~$f$.  
The complexity class \classFPT contains  all fixed-parameter tractable parameterized problems.

\paragraph{Graphs.}
We use standard graph-theoretic terminology and refer to the textbook of Diestel~\cite{Diestel12} for missing notions.
We consider only finite  graphs, and the considered graphs are assumed to be undirected if it is not explicitly said to be otherwise.  For  a graph $G$,  $V(G)$ and $E(G)$ are used to denote its vertex and edge sets, respectively. 
Throughout the paper we use $|G|=|V(G)|$.
For a graph $G$ and a subset $X\subseteq V(G)$ of vertices, we write $G[X]$ to denote the subgraph of $G$ induced by $X$. 
For a vertex $v$, we denote by $N_G(v)$ the \emph{(open) neighborhood} of $v$, i.e., the set of vertices that are adjacent to $v$ in $G$. For $X\subseteq V(G)$, $N_G(X)=\big(\bigcup_{v\in X}N_G(v)\big)\setminus X$.
The \emph{degree} of a vertex $v$ is $d_G(v)=|N_G(v)|$. 

A walk $W$ of length $\ell$ in $G$ is a sequence of vertices $v_1, v_2, \ldots, v_\ell$, where $v_i v_{i+1} \in E(G)$ for all $1 \le i < \ell$.
The vertices $v_1$ and $v_\ell$ are the \emph{endpoints} of $W$ and the vertices $v_2,\ldots,v_{\ell-1}$ are the \emph{internal} vertices of $W$.
A path is a walk where no vertex is repeated.
For a path $P$ with endpoints $s$ and $t$, we say that $P$ is an $(s,t)$-path. 
A cycle is a path with the additional property that $v_\ell v_1 \in E(G)$ and $\ell \ge 3$.

\subsection{Walls and treewidth}\label{subsec:walls-treewidth}

\paragraph*{Walls.}
Let  $k,r\in\mathbb{N}.$ The
\emph{$(k\times r)$-grid} is the
graph whose vertex set is $\{1,\ldots,k\}\times\{1,\ldots,r\}$ and two vertices $(i,j)$ and $(i',j')$ are adjacent if and only if $|i-i'|+|j-j'|=1.$
An  \emph{elementary $r$-wall}, for some odd integer $r\geq 3,$ is the graph obtained from a
$(2 r\times r)$-grid
with vertices $(x,y)\in\{1,\ldots,2r\}\times\{1,\ldots,r\},$
after the removal of the
``vertical'' edges $\{(x,y),(x,y+1)\}$ for odd $x+y,$ and then the removal of 
all vertices of degree one.
Notice that, as $r\geq 3,$  an elementary $r$-wall is a planar graph
that has a unique (up to topological isomorphism) embedding in the plane such that all its finite faces are incident to exactly six
edges.
The {\em perimeter} of an elementary $r$-wall is the cycle bounding its infinite face.

An {\em $r$-wall} is any graph $W$ obtained from an elementary $r$-wall $\bar{W}$
after subdividing edges.
We call the vertices that where added after the subdivision operations {\em subdivision vertices},
while we call the rest of the vertices (i.e., those of $\bar{W}$) {\em branch vertices}.
The {\em perimeter} of $W$, denoted by $\mathsf{perim}(W)$, is the cycle of $W$ whose non-subdivision vertices are the vertices of the perimeter of $\bar{W}$.
A {\em subdivided edge} of $W$ is a path of $W$ whose endpoints are two branch
vertices of $W$ and its internal vertices are subdivision vertices of $W$.
We also call a vertex $v$ an {\em in-peg} of the perimeter of $W$, if $v\in V(\mathsf{perim}(W))$ and $v$ has degree three in $W$.

 A graph $W$ is a {\em wall} if it is an $r$-wall for some odd $r\geq 3$
and we refer to $r$ as the {\em height} of $W.$ Given a graph $G,$
a {\em wall of} $G$ is a subgraph of $G$ that is a wall.
We insist that, for every $r$-wall, the number $r$ is always odd.
Let $W$ be a wall of a graph $G$ and $K'$ be the connected component of $G\setminus V(\mathsf{perim}(W))$ that contains $W\setminus V(\mathsf{perim}(W))$.
We use $\mathsf{inn}(W)$ to denote the graph $K'$.
The {\em compass} of $W$, denoted by $\cps(W)$, is the graph $G[V(\mathsf{inn}(W))\cup V(\mathsf{perim}(W))]$.

The {\em layers} of an $r$-wall $W$, for any odd integer $r\geq 3$, are recursively defined as follows.
The first layer of $W$ is its perimeter.
For $i=2,\ldots, (r-1)/2$, the $i$-th layer of $W$ is the $(i-1)$-th layer of the wall $W'$ obtained from $W$ after removing from $W$ its perimeter and all occurring vertices of
degree one.
Notice that each $(2k+1)$-wall has $k$ layers.
For every $i=1,\ldots, (r-1)/2$, we use $L_i$ to denote the $i$-th layer of $W$.
Also,  $i=2,\ldots, (r-1)/2$ we use $W^{(i)}$ to denote the wall obtained from $W$ after removing from $W$ the layers $L_1,\ldots, L_i$ and all occurring vertices of degree one and we set $W^{(1)} := W$.
Notice that for every $i=1,\ldots, (r-1)/2$, $\mathsf{perim}(W^{(i)}) = L_i$.
See~\autoref{fig:wall} for an example.

\paragraph*{Treewidth.} 
A \emph{tree decomposition} of a graph~$G$
is a pair~$(T,\mathcal{X})$ where $T$ is a tree and $\mathcal{X} = \{X_i \mid i\in V(T)\}$ is a family of subsets of $V(G)$
such that
\begin{itemize}
\item $\bigcup_{t \in V(T)} X_t = V(G),$
\item for every edge~$e$ of~$G$ there is a $t\in V(T)$ such that $X_t$ contains both endpoints of~$e,$ and
\item for every~$v \in V(G),$ the subgraph of~${T}$ induced by $\{t \in V(T)\mid {v \in X_t}\}$ is connected.
\end{itemize}
The {\em width} of $(T,\mathcal{X})$ is equal to $\max\big\{\left|X_t\right|-1 \mid t\in V(T)\big\}$ and the {\em treewidth} of $G$ is the minimum width over all tree decompositions of $G.$

The following result from \cite[Lemma 4.2]{GolovachKMT17} states that given a $q\in \mathbb{N}$ and a graph $G$  with treewidth more than $9q$, we can find a $q$-wall of $G$.

\begin{proposition}\label{prop:wall-vs-tw}
There exists an algorithm that receives as an input a planar graph $G$ and a $q\in \mathbb{N}$ and outputs, in $2^{q^{\mathcal{O}(1)}}\cdot |G|$ time, either a $q$-wall $W$ of $G$
or a tree decomposition of $G$ of width at most $9 q$.
\end{proposition}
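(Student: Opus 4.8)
The plan is the treewidth dichotomy that underlies any irrelevant-vertex scheme. First I would run a fixed-parameter linear-time algorithm that decides $\tw(G)$ against the threshold $9q$ --- for instance Bodlaender's algorithm --- which in time $2^{q^{\mathcal{O}(1)}}\cdot|G|$ either outputs a tree decomposition of $G$ of width at most $9q$, in which case we return it, or reports that $\tw(G)>9q$. From now on assume $\tw(G)>9q$; it remains to construct a $q$-wall of $G$ (we may assume $q\ge 3$ is odd, replacing $q$ by the next odd integer otherwise, which affects only constants).

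Here I would invoke the excluded grid theorem for planar graphs of Robertson, Seymour and Thomas~\cite{RobertsonST94}: with the hidden constant calibrated appropriately, $\tw(G)>9q$ forces $G$ to contain the $(2q\times 2q)$-grid as a minor. By definition an elementary $q$-wall is a subgraph of the $(2q\times q)$-grid, hence a subgraph of the $(2q\times 2q)$-grid, and therefore a minor of $G$. Since an elementary $q$-wall has maximum degree three, and any minor of maximum degree at most three is automatically a topological minor of the host graph, $G$ contains a subdivision of the elementary $q$-wall as a subgraph --- that is, a $q$-wall in the sense of \Cref{subsec:walls-treewidth}. Concretely, from a grid-minor model one produces the wall by keeping $2q$ of the ``horizontal'' and $q$ of the ``vertical'' lines of the model, retaining inside each branch set a subtree with at most three leaves (a subdivided claw or a path) so as to realise the degree-$\le 3$ wall pattern, and discarding the rest; this conversion is routine and runs in time polynomial in $|G|$.

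The delicate point --- and the step I expect to be the main obstacle --- is making the large-treewidth branch \emph{both constructive and fixed-parameter linear}: we do not merely need the \emph{existence} of the $(2q\times 2q)$-grid minor but an algorithm that outputs it, and hence the $q$-wall, in time $2^{q^{\mathcal{O}(1)}}\cdot|G|$. This is exactly the algorithmic form of the planar excluded grid theorem, which can be obtained either by a careful implementation of the proof of~\cite{RobertsonST94} or via the bidimensionality machinery for planar (and, more generally, minor-free) graphs, where constructing a $\Theta(q)$-grid minor or an $\mathcal{O}(q)$-width tree decomposition in (fixed-parameter) linear time is available; alternatively one can sidestep the grid entirely and build the wall directly inside a plane embedding from $\Omega(q)$ nested separating cycles (available when the treewidth is large, by a Baker-style layering argument) together with $\Omega(q)$ radial transversals crossing them. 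Everything else is standard: Bodlaender's algorithm for the small-treewidth side, the containment ``elementary $q$-wall $\subseteq$ $(2q\times q)$-grid'', and the fact that a minor of maximum degree three is a topological minor.
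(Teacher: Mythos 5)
The paper does not prove this proposition at all; it is imported verbatim as~\cite[Lemma~4.2]{GolovachKMT17}, so there is no internal proof to compare against. Your plan is the standard and correct argument underlying that cited lemma: Bodlaender's algorithm for the small-treewidth branch, the constructive form of the Robertson--Seymour--Thomas planar excluded grid theorem for the large-treewidth branch, and the observation that a maximum-degree-$3$ minor is automatically a topological minor to pass from a grid minor to a subdivided wall. You correctly identify that the only real point of care is ensuring the grid-minor extraction is algorithmic within the stated time bound, and you correctly note that constructive versions exist (e.g.\ via the proof of~\cite{RobertsonST94} or the bidimensionality toolbox); the only thing left implicit is pinning down the constant so that the small-treewidth side comes out as exactly $9q$, which is a matter of calibrating the grid dimensions against the specific constant in the chosen version of the excluded grid theorem rather than a conceptual gap.
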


\subsection{Frameworks}\label{subsec:frameworks}
We recall definitions related to frameworks.

\paragraph{Matroids.} We refer to the textbook of Oxley~\cite{Oxley11} for the introduction to Matroid Theory.

\begin{definition}\label{def:matroid}
A pair $M=(V,\mathcal{I})$, where $V$ is a \emph{ground set} and $\mathcal{I}$ is a family of subsets of $V$, called \emph{independent sets of $M$}, is a \emph{matroid} if it satisfies the following conditions, called \emph{independence axioms}:
\begin{itemize}
\item[~{\em (I1)}]  $\emptyset\in \mathcal{I}$, 
\item[~{\em (I2)}]  if $X \subseteq Y $ and $Y\in \mathcal{I}$ then $X\in\mathcal{I}$, 
\item[~{\em (I3)}] if $X,Y  \in \mathcal{I}$  and $ |X| < |Y| $, then there is $v\in  Y \setminus X $  such that $X\cup\{v\} \in \mathcal{I}$.
\end{itemize}
\end{definition}
An inclusion maximal set of $\mathcal{I}$ is called a \emph{base}.
We use $V(M)$ and $\mathcal{I}(M)$ to denote the ground set and the family of independent sets of $M$, respectively.  

Let $M=(V,\mathcal{I})$ be a matroid. We use $2^V$ to denote the set of all subsets of $V$.
A function $r\colon 2^V\rightarrow \mathbb{Z}_{\geq 0}$  such that for every $X\subseteq V$,
\begin{equation*}
r(X)=\max\{|Y|\colon Y\subseteq X\text{ and }Y\in \mathcal{I}\}
\end{equation*}
is called the \emph{rank function} of $M$. The \emph{rank of $M$}, denoted $r(M)$, is $r(V)$; equivalently, the rank of $M$ is the size of any base of $M$.  

\paragraph{Matroid representations.}
Let  $M=(V,\mathcal{I})$ be a matroid and let $\mathbb{F}$ be a field. An $r\times n$-matrix $A$ is a \emph{representation of $M$ over $\mathbb{F}$} if there is a bijective correspondence $f$ between $V$ and the set of columns of $A$ such that for every $X\subseteq V$, $X\in \mathcal{I}$ if and only if the set of columns $f(X)$ consists of linearly independent vectors of $\mathbb{F}^r$. Equivalently, $A$ is a representation of $M$ if $M$ is isomorphic to the \emph{column} matroid of $A$, that is, the matroid whose ground set is the set of columns of the matrix and the independence of a set of columns is defined as the linear independence.   
If $M$ has a  
such a representation, then $M$ is \emph{representable} over $\mathbb{F}$ and it is also said
$M$ is a \emph{linear} (or \emph{$\mathbb{F}$-linear}) matroid. 
We can assume that the number of rows $r=r(M)$ for a matrix representing $M$~\cite{Marx09}. 

Whenever we consider a linear matroid, it is assumed that its representation is given and the size of $M$ is $\|M\|=\|A\|$, that is, the bit-length of the representation matrix.
Notice that given a representation of a matroid, deciding whether a set is independent demands a polynomial number of field operations.
In particular, if the considered field is a finite or is the field of rationals, we can verify independence in time that is a polynomial in $\|M\|$. 
Another standard way to encode a matroid in problem inputs is by using \emph{independence oracles}. Such an oracle, given a subset of the ground set, in unit time correctly returns either \textsf{yes} or \textsf{no} depending on whether the set is independent or not.   Thus a matroid can be fully described by its ground set and the independence oracle.\medskip

\paragraph{Frameworks.}
A framework is a pair $(G, M)$, where $M = (V, \mathcal{I})$ is a matroid whose ground set is the set of vertices of $G$, i.e., $V(M) = V(G)$.
An $(s,t)$-path $P$ in a framework $(G,M)$ has  \emph{rank at least $k$} if there is a set $X \subseteq V(P)$ with $X \in \mathcal{I}$ and $|X| = k$.

 \section{Rerouting paths and cycles}\label{sec:det}

In this section, our goal is to prove~\autoref{thm:fpt-planar} that we restate here.

\thmirrelevantv*

The algorithm of~\autoref{thm:fpt-planar}
consists of two parts.
In the first part, we use the irrelevant vertex technique in order to design an algorithm that removes vertices form the input graph as long as its treewidth is big enough.
In order to do this, in \Cref{subsec:rerouting} we prove a combinatorial result (\autoref{lem:rerout}) that allows us to argue that,
given a planar graph and a wall of it and a vertex set $S$ that lies outside the wall, if there is a path $P$ that contains $S$ and  invades deeply enough inside the wall, we
can find another path $P'$ that contains $S$ (with the same endpoints as $P$) and avoids some ``central area'' of the wall.
Then, in \Cref{subsec:red-tw}, we give an algorithm (\autoref{lem:reducing-tw}) that given a planar graph of ``big enough'' (as a
function of $k$) treewidth, outputs, in time $2^{2^{\mathcal{O}(k\log k)}}\cdot (|G|+\|M\|)^{\mathcal{O}(1)}$, either a path of rank at least $k$ or an irrelevant vertex.
Finally, in \Cref{subsec:dp-planar}, we provide the dynamic programming algorithm that solves the problem in graphs of bounded treewidth.

\subsection{Rerouting paths and cycles}
 \label{subsec:rerouting}

In this subsection, we aim to prove the main combinatorial result (\autoref{lem:rerout}) that allows us to find an $(s,t)$-path that contains a given set $S$ and avoids some inner part of a given wall.
Before stating \autoref{lem:rerout}, we first prove the following result (\autoref{lem:linkage-kawa}) that will be an important tool for the proof of~\autoref{lem:rerout}.
The proof of~\autoref{lem:linkage-kawa} is inspired by the proof of \cite[Lemma~1]{Kawarabayashi08}.

\begin{lemma}\label{lem:linkage-kawa}
Let $G$ be a planar graph, let $k\in\mathbb{N}$, let $W$ be a wall of height at least $2k+3$.
Also, let $E=\{e_1, \ldots, e_k,e_{k+1},e_{k+2}\}$ be a set of $k+2$ edges of $G$, where, for every $i\in\{1,\ldots,k\}$, $e_i = \{v_i, u_i\}$, $e_{k+1} = \{v_{k+1}, s\}$, $e_{k+2}=\{v_{k+2},t\}$, and let $X$ be the set $\{v_{k+1}, v_{k+2}\}\cup \bigcup_{i\in\{1,\ldots,k\}}\{v_i,u_i\}$.
If every $v\in X$ is an in-peg of $\mathsf{perim}(W)$,
then there is an $(s,t)$-path in $G$ that contains the edges $e_1,\ldots, e_{k+2}$ and its intersection with $\cps(W^{(k+1)})$ is a path of $\mathsf{perim}(W^{(k+1)})$ whose endpoints are branch vertices of $W$.
\end{lemma}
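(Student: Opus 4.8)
\textbf{Proof plan for Lemma~\ref{lem:linkage-kawa}.}
The plan is to route the $k+2$ prescribed edges into the wall one terminal pair at a time, exploiting the grid-like connectivity of a wall to keep the partial routes disjoint, and to ``funnel'' everything down to a single path on a deep layer. First I would set up the structure: since $W$ has height at least $2k+3$, it has at least $k+1$ layers, and in particular the compass $\cps(W^{(k+1)})$ is well defined and $2$-connected (indeed, highly connected along the layers). Each vertex $v\in X$ is an in-peg of $\mathsf{perim}(W)=L_1$, so it has degree three in $W$ and hence has a neighbour in $W$ that is not on $\mathsf{perim}(W)$; following the vertical ``rungs'' of the wall, from each $v\in X$ there is a path in $W$ that leaves the perimeter and enters the interior, reaching the layer $L_2$, then $L_3$, and so on. The idea is to reserve for each of the $2k+2$ vertices of $X$ a disjoint ``comb'' of vertical wall-paths leading inward; because a wall of height $r$ has $r/2$ roughly independent vertical channels and $X$ has at most $2k+2$ members while $r\ge 2k+3$, there is enough room, after possibly passing to a cyclic relabelling, to assign to the $j$-th vertex of $X$ (in the cyclic order around $\mathsf{perim}(W)$) a vertical path $R_j$ in $W$ from that vertex down to layer $L_{k+1}$, with the $R_j$ pairwise vertex-disjoint except that we allow them to end on a common layer.

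The main work is then the combinatorial routing on that deep layer. Having brought all $2k+2$ endpoints to (distinct branch vertices on) the cycle $L_{k+1}=\mathsf{perim}(W^{(k+1)})$, I would argue that one can pick which branch vertices they land on so that, reading around the cycle $L_{k+1}$, the order of the arrivals is exactly $s, v_1, u_1, v_2, u_2, \dots, v_k, u_k, t$ — this is possible because the vertical channels of a wall can be permuted/crossed using the interior horizontal edges of intermediate layers, which is precisely where planarity of the wall is used (a planar wall still allows a bounded number of ``swaps''; alternatively one simply chooses the assignment of $X$-vertices to channels to realise the required non-crossing pattern from the start, since the required pattern is already non-crossing). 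Once the arrival order on the cycle $L_{k+1}$ is $s,v_1,u_1,\dots,v_k,u_k,t$, the desired $(s,t)$-path is obtained by walking along $\mathsf{perim}(W^{(k+1)})$ from the arrival point of $s$ to the arrival point of $t$ on the side that visits $v_1,u_1,\dots,v_k,u_k$ in order, then traversing the edges $e_{k+1}=\{v_{k+1},s\}$, $e_1,\dots,e_k$, $e_{k+2}=\{v_{k+2},t\}$ and the reserved vertical paths $R_j$ to splice $s$, the $u_i$–$v_{i+1}$ transitions outside, and $t$ onto this backbone. Concatenating gives an $(s,t)$-path whose intersection with $\cps(W^{(k+1)})$ is exactly the sub-path of $\mathsf{perim}(W^{(k+1)})$ between two branch vertices, as required.

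I would fill in the details by induction on $k$, following the structure of \cite[Lemma~1]{Kawarabayashi08}: after routing the pair $\{v_1,u_1\}$ through a strip of a few layers of the wall (using, say, layers $L_1$ and $L_2$ to reach a controlled pair of consecutive branch vertices of $L_2$ that bound a ``window'', and then recursing with the pair joined off), one is left with a wall of height decreased by a constant (at most $2$), a graph in which $v_1$ and $u_1$ have been identified via the already-built connection, and $k-1$ remaining edges plus the two terminal edges, with all relevant endpoints still in-pegs of the new perimeter. The base case $k=0$ is just connecting $s$ to $t$ through the wall using the two edges $e_{k+1},e_{k+2}$ and a path along $\mathsf{perim}(W^{(1)})$, which a $2$-connected compass of height $\ge 3$ trivially supports. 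The bookkeeping to maintain the invariant ``all endpoints are in-pegs of the current perimeter and the current wall has height at least (twice the remaining count)$+3$'' is the part that needs care, and the one genuinely delicate point — the main obstacle — is guaranteeing that the vertical channels can be chosen disjoint and landing in the prescribed cyclic order on $L_{k+1}$; this is exactly the planar rerouting fact underlying Kawarabayashi's argument, and I would invoke the structure of walls (each pair of consecutive layers contains vertex-disjoint paths joining any matching between their branch vertices that respects cyclic order) to make it precise, losing only an additive constant number of layers per routed pair, hence the hypothesis $2k+3$ on the height.
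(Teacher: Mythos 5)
Your proposal has two genuine gaps.

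\textbf{The ``funnel'' plan does not satisfy the conclusion.} If you route all $2k+2$ endpoints of $X$ down disjoint vertical channels to $L_{k+1}=\mathsf{perim}(W^{(k+1)})$ and then string them together with arcs of $L_{k+1}$ alternating with the prescribed edges $e_i$, the resulting $(s,t)$-path touches $L_{k+1}$ at $2k+2$ distinct points and travels along $k+1$ pairwise-disjoint arcs of that cycle (one between each consecutive pair of channel-landings). Its intersection with $\cps(W^{(k+1)})$ is therefore a union of $k+1$ separate subpaths of $\mathsf{perim}(W^{(k+1)})$, not the single path that the lemma demands. This requirement is not cosmetic: the single-arc property is exactly what is later used in Lemma~\ref{lem:rerout} to excise the part of the path inside the inner compass and replace it by a new subpath, so the funnel construction proves a strictly weaker statement.

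\textbf{The inductive plan merges the wrong pair and does not decrease the parameter.} You describe ``routing the pair $\{v_1,u_1\}$ through a strip of a few layers,'' identifying $v_1$ with $u_1$, and recursing with $k-1$ interior edges. But $v_1$ and $u_1$ are the two endpoints of the same prescribed edge $e_1$; connecting them additionally through the wall would close a cycle, and simply pushing both of them one layer inward and re-encoding $e_1$ as a super-edge between two branch vertices of $L_2$ leaves you with the same number of constrained edges, so no progress is made. The paper's induction instead picks two vertices $x,y\in X$ that are \emph{consecutive in the cyclic order of $X$ along $\mathsf{perim}(W)$ and are not the two endpoints of a single $e_i$}, joins them by the perimeter arc $Q_{x,y}$, and contracts $e_x\cup Q_{x,y}\cup e_y$ into one new edge whose endpoints are the partners $x'$ and $y'$. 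This genuinely decreases the number of prescribed edges from $k+2$ to $k+1$ while spending one layer (height drops by $2$), all remaining $X$-vertices are pushed to become in-pegs of $\mathsf{perim}(W^{(2)})$, and the nested structure of the $k$ successive perimeter arcs plus the final base-case arc guarantees that exactly one arc lies on $\mathsf{perim}(W^{(k+1)})$. That non-partner-merging step is the key idea your plan is missing; the choice of $y\in X\setminus\{x'\}$ is always possible because at most one of the two cyclic neighbours of $x$ in $X$ can be $x'$.

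As a minor point, your order-of-arrivals list includes $s$ and $t$ on $L_{k+1}$, but $s,t\notin X$; the perimeter in-pegs are $v_{k+1},v_1,u_1,\dots,v_k,u_k,v_{k+2}$, with $s$ and $t$ hanging off $v_{k+1},v_{k+2}$ via $e_{k+1},e_{k+2}$ and, crucially, outside the compass entirely.
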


\begin{proof}
Let $H$ be the graph whose vertex set is $\{s,t\}\cup X$ and whose edge set is $\{e_1, \ldots, e_{k+2}\}$.
Observe that $H$ is the disjoint union of $k+2$ edges.

We will prove the statement by induction on $k$.
If $k = 0$, then $|X|=2$, $H$ contains exactly two edges, $e_{1} = \{v_1, s\}$ and $e_{2} = \{v_2,t\}$.
By connecting $s$ and $t$ through a $(v_1,v_2)$-path in $\mathsf{perim}(W)$, we obtained the claimed $(s,t)$-path.

 \begin{figure}[h]
\centering
\includegraphics[width =7.5cm]{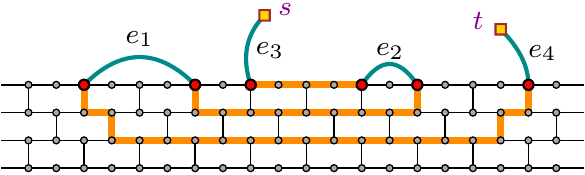}
\caption{Visualization of the proof of~\autoref{lem:linkage-kawa} for $k=2$.
In this example, the edges $e_1, \ldots, e_4$ are depicted in blue and the vertex set $X$ is depicted in red.
The highlighted orange paths inside the wall correspond to the paths used in the proof to construct the claimed $(s,t)$-path.}
\label{fig:route_wall}
\end{figure} 

Suppose that $k\geq 1$.
Take a vertex $x\in X$, let $e_x$ be the edge of $H$ that is incident to $x$,
and let $x'\in V(H)$ be the other endpoint of $e_x$.
Also, let $y$ be a vertex of $X\setminus \{x'\}$ such that there is an $(x,y)$-path $Q_{x,y}$ in $\mathsf{perim}(W)$ such that no internal vertex of $Q_{x,y}$ is in $X$
and let $e_y$ be the edge of $H$ that is incident to $y$.
By the choice of $y$ (i.e., $y\in X\setminus\{x\}$),
$e_x\neq e_y$.
We set $E'$ to be the edge set obtained after contracting the edges in $E(Q_{x,y})\cup\{e_x,e_y\}$ to a single edge, denoted by $e_x'$, and after contracting for each $v\in X\setminus \{x,y\}$, the (unique) edge in $E(\mathsf{inn}(W))$ that is incident to $v$.
Observe that $|E'| = k+1$ and the wall $W^{(2)}$ has height at least $2k+1$ and therefore we can apply the induction hypothesis to find the claimed $(s,t)$-path.
For an illustration of the obtained $(s,t)$-path, see~\autoref{fig:route_wall}.
This completes the proof of the lemma. 
\end{proof}

We are now ready to prove the following.

\begin{lemma}\label{lem:rerout}
There is a function $h:\mathbb{N} \to \mathbb{N}$ such that if $k,z\in\mathbb{N}$, $G$ is a planar graph, $s,t\in V(G)$,
$S$ is a subset of $V(G)$ of size at most $k$,
$W$ is a wall of $G$ of at least $h(k)$ layers and whose compass is disjoint from $S\cup\{s,t\}$,
and $P$ is an $(s,t)$-path of $G$ such that $S\subseteq V(P)$ and $P$
intersects $V(\mathsf{inn}(W^{(h(k))}))$,
then there is an $(s,t)$-path $\tilde{P}$ of $G$ such that $S\subseteq V(\tilde{P})$
and its intersection with $\cps(W^{(h(k))})$ is a path of $\mathsf{perim}(W^{(h(k))})$ whose endpoints are branch vertices of $W$.
Moreover, $h(k) = \mathcal{O}(k^2)$.
\end{lemma}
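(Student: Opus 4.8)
The plan is to take $h(k)=\Theta(k^2)$ and to transform $P$ into $\tilde P$ in three stages: analyze the trace of $P$ on $\cps(W)$, reduce that trace to bounded complexity by an exchange argument, and realize the reduced trace by a single ``clean'' pass through the deep subwall using \Cref{lem:linkage-kawa}. For the decomposition: since $s,t\notin V(\cps(W))$ and $V(\mathsf{perim}(W))$ separates $V(\mathsf{inn}(W))$ from the rest of $G$, the intersection $P\cap\cps(W)$ is a disjoint union of subpaths (``tunnels'') $Q_1,\dots,Q_m$, each with both endpoints on $\mathsf{perim}(W)$, so that $P=A_0Q_1A_1\cdots Q_mA_m$ where each $A_j$ (``arc'') meets $\cps(W)$ only in its endpoints, $A_0$ starts at $s$ and $A_m$ ends at $t$. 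As $S$ is disjoint from $\cps(W)$, every vertex of $S$ lies in the interior of some arc; by planarity the $m$ endpoint-pairs of the tunnels form non-crossing chords of the cycle $\mathsf{perim}(W)$; and since $P$ meets $\mathsf{inn}(W^{(h(k))})$ we have $m\ge 1$.

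\emph{Bounding the number of tunnels.} Next I would replace $P$ by a suitable minimal path with the same endpoints that still contains $S$ (minimizing the number $m$ of tunnels and, secondarily, the number of edges of $P$ inside $\cps(W)$, while retaining the property that the trace reaches deep enough into $W$), and argue that then every internal arc $A_j$ with $1\le j\le m-1$ contains a vertex of $S$; hence $m-1\le |S|\le k$ and $m\le k+1$. The point is that an $S$-free internal arc yields an $S$-free detour $Q_jA_jQ_{j+1}$ that can be re-routed through the outer layers of $W$ so as to decrease the measure while remaining disjoint from the remaining arcs and tunnels, using that $\cps(W')$ is $2$-connected for every nontrivial subwall $W'$ and that the tunnel-chords are non-crossing. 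Making this exchange work without destroying the ``reaches deep'' bookkeeping is one of the technical points of the argument.

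\emph{Realizing the trace.} Now I would contract each arc $A_j$ to a single edge; this yields $m+1\le k+2$ edges whose wall-endpoints are the $2m$ tunnel-endpoints on $\mathsf{perim}(W)$, plus $s$ and $t$. Set $i_0:=h(k)-m\ge 1$ and route, pairwise disjointly through the outer $i_0-1=\Theta(k^2)$ layers of $W$, every tunnel-endpoint ``inward'' to a distinct in-peg of $\mathsf{perim}(W^{(i_0)})$; appending these inward routes to the arcs gives edges $\{s,\hat c_1\},\{\hat d_1,\hat c_2\},\dots,\{\hat d_{m-1},\hat c_m\},\{\hat d_m,t\}$ with the $\hat c_i,\hat d_i$ distinct in-pegs of $W^{(i_0)}$, which still has at least $m$ layers (height at least $2m+1$). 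Applying \Cref{lem:linkage-kawa} to $W^{(i_0)}$ with these $m+1$ edges and $k':=m-1$ then produces an $(s,t)$-path through all of them whose intersection with $\cps\big((W^{(i_0)})^{(m)}\big)=\cps(W^{(h(k))})$ is a path of $\mathsf{perim}(W^{(h(k))})$ with branch-vertex endpoints. Un-contracting the arcs reinserts all of $S$; since the arcs are pairwise disjoint, meet $\cps(W)$ only at their tunnel-endpoints, and the inward routes and the clean perimeter arc live in disjoint bands of layers, the result is a simple $(s,t)$-path $\tilde P$ with $S\subseteq V(\tilde P)$ whose intersection with $\cps(W^{(h(k))})$ is exactly the required perimeter path.

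\emph{Main obstacle.} The hard part — and the reason $h(k)=\Theta(k^2)$ rather than $\Theta(k)$ — is the disjoint ``inward'' rerouting (and, likewise, the detour removal): spreading $\Theta(k)$ prescribed vertices to distinct in-pegs of a deeper layer, and realizing $\Theta(k)$ prescribed non-crossing connections, all by pairwise disjoint paths confined to a wall-annulus, can require $\Theta(k^2)$ layers in the worst case, and proving that this many suffice is where the real work lies. This is precisely the step that adapts the rerouting technique of Kawarabayashi~\cite{Kawarabayashi08} and invokes the structural results of Kleinberg~\cite{Kleinberg98} on how disjoint paths traverse a planar wall. I also expect that verifying the simplicity of $\tilde P$ after all the splicings, and pinning down the constants so that the clean perimeter arc lands precisely on $\mathsf{perim}(W^{(h(k))})$ with branch-vertex endpoints, will require care but no new ideas. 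Note that this entire argument is purely topological: the matroid plays no role here and enters only in \Cref{subsec:red-tw}.
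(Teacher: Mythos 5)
Your high-level architecture (decompose $P$ into pieces inside and outside the compass, contract the $S$-carrying pieces to edges, route their endpoints to in-pegs of a deep layer by disjoint paths, and finish with \autoref{lem:linkage-kawa}) matches the paper's. But the step you use to get the complexity of the trace down to $\mathcal{O}(k)$ is different from the paper's, and it is exactly where your argument has a genuine gap. You bound the number of tunnels by a minimality/exchange argument: if an internal arc $A_j$ is $S$-free, you claim the detour $Q_jA_jQ_{j+1}$, running from a perimeter vertex $a$ to a perimeter vertex $b$, can be rerouted through the outer layers disjointly from the rest of $P$. This can fail topologically: the pairwise non-crossing condition on the tunnel chords $\{a,a'\}$, $\{b',b\}$, $\{c,c'\}$ does not prevent the chord $\{c,c'\}$ of a third tunnel from crossing the chord $\{a,b\}$ (take the circular order $a,a',c,b',b,c'$ on the perimeter), in which case that third tunnel separates $a$ from $b$ inside the disk and no replacement path inside the compass exists. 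One would then have to reroute the blocking tunnel as well, and the induction/measure bookkeeping you would need is not sketched. The paper avoids this entirely: it never bounds the number of perimeter crossings of $P$. Instead it places $k+2$ nested layers $C_1,\dots,C_{k+2}$ spaced $2k$ apart, counts for each depth the number of maximal $S\cup\{s,t\}$-carrying segments of $P$ crossing that depth (trivially between $k+2$ and $2$, and monotone), and pigeonholes to find a depth $i_0$ where the count stabilizes. No rerouting is needed to reach a bounded-complexity trace.

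This choice also matters for your second flagged step, the disjoint inward routing to in-pegs. The paper's \autoref{cl:linkage-one} does not route from arbitrary perimeter vertices: the stabilization at depth $i_0$ guarantees that every connector between consecutive $S$-carrying segments penetrates a further $2k$ layers, which hands you $\ell\le 2k$ disjoint paths from the endpoint set $Y$ to the layer $C'$ for free (as subpaths of $P$ itself); the Menger argument then only has to convert ``reaching $C'$'' into ``reaching distinct branch vertices of $C'$''. After your minimization there is no guarantee that the surviving tunnels go deep at all, so you lose this seed linkage, and $2m$ disjoint paths from $2m$ arbitrary (possibly clustered) perimeter vertices to distinct in-pegs of a deep layer is not something you can get from the wall structure alone without further hypotheses. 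So both of the steps you defer as ``technical'' are not routine adaptations of Kawarabayashi's rerouting: the paper's pigeonhole over nested annuli is the missing idea that makes both of them go through, and it is also what produces the $h(k)=\Theta(k^2)$ bound ($k+2$ annuli of width $2k$), rather than congestion of disjoint paths as you conjecture.
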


\begin{proof}
We set $h(k) :=2k\cdot  (k+2) +2k+1$.
Let $W$ be a wall of at least $h(k)$ layers.
For $i\in\{1,\ldots,k+2\}$, we use $C_i$ to denote the layer $L_{2k\cdot (i-1) +1}$ of $W$.
Intuitively, we take $C_1$ to be the first layer of $W$ and for every $i\in\{2,\ldots, k+2\}$, we take $C_i$ to be the $2k$-th consecutive layer after $C_{i-1}$.
Also, we use $D_i$ to denote the vertex set of $\cps(W^{(2k\cdot(i-1)+1)})$.
Keep in mind that $C_i$ is the perimeter of $W^{(2k\cdot(i-1)+1)}$.
For every $i\in [k+2]$, we consider the collection $\mathcal{F}_i$ of paths of $G$ that are subpaths of $P$ that intersect $D_i$ only on their endpoints and that there is an onto function mapping each vertex $u\in S\cup\{s,t\}$ to the path in $\mathcal{F}_i$ that contains $u$.
Intuitively, for each $u\in S\cup\{s,t\}$ we consider the maximal subpath of $P$ that contains $u$ and intersects $D_i$ only on its endpoints and we define $\mathcal{F}_i$ to be the collection of these maximal paths (see~\autoref{fig:contr_edg} for an example).

 \begin{figure}[h]
\centering
\includegraphics[width = 9cm]{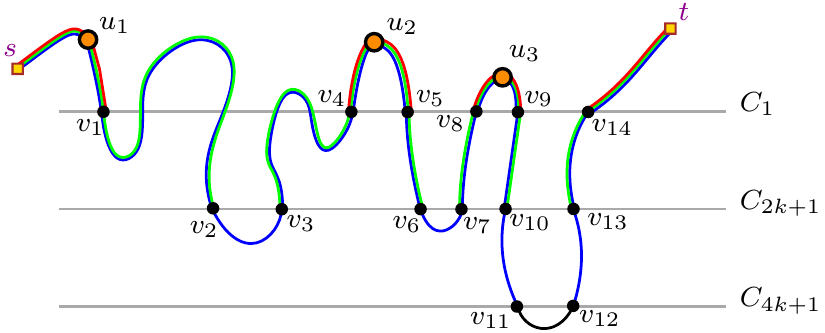}
\caption{An example of an $(s,t)$-path $P$ containing an independent set $S= \{u_1, u_2, u_3\}$.
In this example, $\mathcal{F}_1$ is the collection of the four red paths (the ones with endpoints $(s, v_1)$, $(v_4, v_5)$, $(v_8, v_9)$, and $(v_{14}, t)$), 
$\mathcal{F}_2$ is the collection of the four green paths (the ones with endpoints
$(s, v_2)$, $(v_3, v_6)$, $(v_7,v_{10})$, $(v_{13}, t)$), and
$\mathcal{F}_3$ is the collection of the two blue paths (the $(s,v_{11})$-path and the $(v_{12},t)$-path).}
\label{fig:contr_edg}
\end{figure} 

Observe that $|\mathcal{F}_1|\leq k+2$ (since $|S\cup\{s,t\}|\leq k+2$) and $|\mathcal{F}_{k+2}|\geq 2$ (since  $V(P)\cap V(\mathsf{inn}(W^{(h(k))}))\neq \emptyset$ and therefore $P$ intersects at least twice every $C_i$, $i\in [k+2]$).
For every  $i\in\{1,\ldots, k+2\}$, we assume that $\mathcal{F}_i = \{F_{i,1}, \ldots, F_{i,|\mathcal{F}_i|}\}$, where the ordering is given by traversing $P$ from $s$ to $t$.
For every $i\in\{1,\ldots,k+2\}$, we set $\mathcal{Q}_i = \{Q_{i,1},\ldots, Q_{i, |\mathcal{F}_i|-1}\}$, where, for each $j\in [|\mathcal{F}_i|-1]$, $Q_{i,j}$ is the minimal subpath of $P$ that intersects both $V(F_{i,j})$ and $V(F_{i,j+1})$.
Observe that, for every $i\in\{1,\ldots, k+2\}$, $P$ is the concatenation of the paths $F_{i,1}, Q_{i,1},F_{i,2}, \ldots, Q_{i,|\mathcal{F}_i|-1},F_{i,|\mathcal{F}_i|}$.
In~\autoref{fig:contr_edg}, $\mathcal{Q}_1 = \{Q_{1,1}, Q_{1,2}, Q_{1,3}\}$, where $Q_{1,1}$ is the $(v_1, v_4)$-subpath, $Q_{1,2}$ is the $(v_5, v_8)$-subpath, and $Q_{1,3}$ is the $(v_9, v_{14})$-subpath of $P$,
$\mathcal{Q}_2 = \{Q_{2,1}, Q_{2,2}, Q_{2,3}\}$,
where $Q_{2,1}$ is the $(v_2, v_3)$-subpath, $Q_{2,2}$ is the $(v_6, v_7)$-subpath,
and $Q_{2,3}$ is the $(v_{10}, v_{13})$-subpath of $P$, and $\mathcal{Q}_{3}$ consists of the $(v_{11},v_{12})$-subpath $Q_{3,1}$ of $P$.

It is easy to see that for every $i\in\{1,\ldots, k+1\}$, $|\mathcal{F}_{i+1}|$ is equal to $|\mathcal{F}_{i}|$ minus the number of paths in $\mathcal{Q}_i$ that do not intersect $C_{i+1}$ and therefore, $|\mathcal{F}_{i}|\leq |\mathcal{F}_{i+1}|$.
Therefore, given that $|\mathcal{F}_1|\leq k+2,$ $|\mathcal{F}_{k+2}|\geq 2$, and for every $i\in\{1,\ldots,k+1\}$, $|\mathcal{F}_{i}|\leq |\mathcal{F}_{i+1}|$,
there is an $i_0\in\{1,\ldots,k+1\}$ such that $|\mathcal{F}_{i_0}| = |\mathcal{F}_{i_0+1}|$ (if there are many such $i_0$, we pick the minimal one).
This implies that every path in $\mathcal{Q}_{i_0}$ intersects $C_{i_0 +1}$.

For each $F\in \mathcal{F}_{i_0}$, we denote by $v_F$ and $u_F$ the endpoints of $F$.
We define the graph $G'$ obtained from $G$ after removing the internal vertices of every $F\in \mathcal{F}_{i_0}$ (i.e., the vertex set $\bigcup_{F\in \mathcal{F}_{i_0}} (V(F)\setminus \{v_F,u_F\})$) and adding the edge $\{v_F,u_F\}$ for every $F\in \mathcal{F}_{i_0}$.
Observe that $G'$ is also planar and contains $D_{i_0}$ as a subgraph.
Moreover, notice that, for every $F\in \mathcal{F}_{i_0}$, $\{v_F,u_F\}\in V(C_{j_0})\cup\{s,t\}$.
In~\autoref{fig:contr_edg}, $|\mathcal{F}_1| = |\mathcal{F}_2|$ and thus $G'$ is obtained after replacing each 3-colored path with an edge.

In the rest of the proof we will argue that, in $G'$, there is an $(s,t)$-path that contains all edges $\{v_F,u_F\}$, $F\in \mathcal{F}_{i_0}$, and its intersection with $V(\cps(W^{(h(k))}))$ is the vertex set of a subdivided edge of $W$ that lies in $\mathsf{perim}(W^{(h(k))})$.
Having such a path in hand, we can replace each edge $\{v_F,u_F\}$, $F\in \mathcal{F}_{i_0}$ with the corresponding path $F$ and thus obtain the path $\tilde{P}$ claimed in the statement of the lemma.

We will denote by $C$ the cycle $C_{j_0}$ (that is the layer $L_{2k\cdot (i_0-1)+1}$) and by $C'$ the layer $L_{2k\cdot i_0}$.
To get some intuition, recall that $C_{i_0+1} = L_{2k\cdot i_0 +1}$ and therefore $C'$ is the layer of $W$ ``preceding'' $C_{i_0+1}$.
Since every path in $\mathcal{Q}_{i_0}$ intersects $C_{i_0+1}$, it holds that every path in $\mathcal{Q}_{i_0}$ intersects $C'$ at least twice.
Therefore, if we set $Y:=V(C)\cap \bigcup_{F\in \mathcal{F}_{i_0}} \{v_F,u_F\}$ and $ \ell:=|Y|$, then $\ell\leq 2k$ and there are $\ell$ disjoint paths from $Y$ to $C'$ (for an example, see the left part of~\autoref{fig:pathtobranch}).

Recall that $\mathsf{perim}(W^{(2k\cdot i_0)})=C'$.
We set $B$ to be the set of branch vertices of $W$ that are in $V(C')$ and have degree three in $W^{(2k\cdot i_0)}$.
Also, we set $\mathcal{K}$ to be the graph $G'\setminus V(\mathsf{inn}(W^{(2k\cdot i_0)}))$.
We now argue that there also exist $\ell$ disjoint paths from $Y$ to $B$ in $\mathcal{K}$.
 \begin{figure}[ht]
\centering
\includegraphics[width = 13cm]{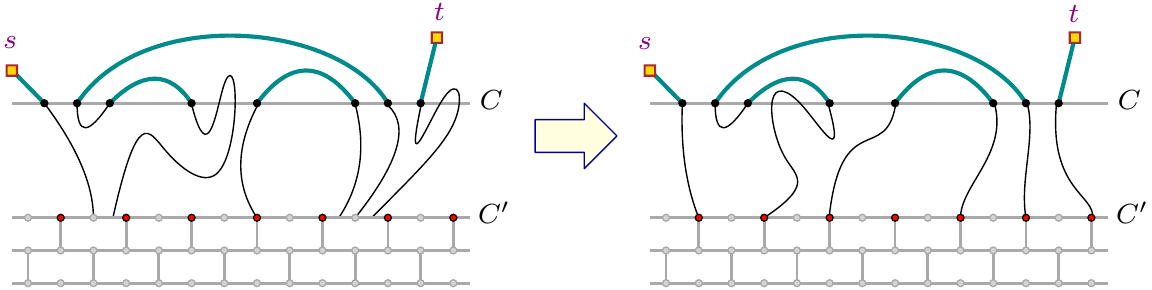}
\caption{A visualization of the statement of \autoref{cl:linkage-one}. In both figures, the edges $\{v_F,u_F\}$ are depicted in blue, the black vertices correspond to the set $Y$ and the red vertices correspond to the set $B$. In the left figure, we illustrate $|Y|$ disjoint paths from $Y$ to $C'$, while in the right figure, we illustrate $|Y|$ disjoint paths from $Y$ to $B$.}
\label{fig:pathtobranch}
\end{figure} 

\begin{claim}\label{cl:linkage-one}
There is a set $X\subseteq B$, a bijection $\rho: Y\to X$, and a collection $\mathcal{P} = \{P_v \mid v\in Y\}$ of pairwise disjoint paths where, for every $v\in Y$, $P_v$ is a $(v,\rho(v))$-path in $\mathcal{K}$.
\end{claim}

\begin{proof}[Proof of \autoref{cl:linkage-one}]
Suppose, towards a contradiction, that there are is a set $S\subseteq V(\mathcal{K})$ of size
at most $\ell-1$ such that there is no
path in $\mathcal{K}\setminus S$ from $Y$ to $B$.

Since there are  $\ell$ disjoint paths from $Y$ to $V(C')$,
there is a connected component $A$ of $\mathcal{K}\setminus S$ that
contains vertices from both $Y$ and $V(C')$.
Since $Y\subseteq V(C)$, $A$ contains vertices from both $V(C)$ and $V(C')$.
Also,
since $C' = L_{2k\cdot i_0}$ where $i_0\in\{1,\ldots,k+2\}$ and $W$ has at least $h(k)$ layers, where $h(k) > 2k\cdot  (k+2) +2k,$ there exist at least $2k$ vertex disjoint paths from $V(C)$ to $B$.
This, together with the fact that $|S|< \ell$ and $\ell \leq 2k$, implies that 
 there is a connected component $A'$ of $\mathcal{K}\setminus S$ that contains
vertices from both $V(C)$ and $B$.

Since both $A$ and $A'$ contain vertices of both $C$ and $C'$,
there exist paths $P,P'$ in $A$ and $A'$ respectively, both intersecting $V(C)$ and $V(C')$.
The fact that $C = L_{2k\cdot (i_0-1)+1}$ and $C'=L_{2k\cdot i_0}$
implies that there are $2k$ layers intersecting both $V(P)$ and $V(P')$,
that yield $2k$ disjoint paths between $V(P)$ and $V(P')$ in $\mathcal{K}$.
Since $S< \ell$ and $\ell\leq 2k$, some of the aforementioned disjoint paths between
 $V(P)$ and $V(P')$ should remain intact in $\mathcal{K}\setminus S$, implying that $A=A'$.
But, given that $A$ contains vertices from $Y$,
and $A'$ contains vertices from $B$, we conclude that $S$ does not separate $Y$ and $B$, a contradiction to the initial assumption.
Therefore, there exist $\ell$ disjoint paths from $Y$ to $B$.
We set $X$ to be the endpoints (in $B$) of these paths and this proves the claim.
\end{proof}

Following \autoref{cl:linkage-one}, let $X\subseteq B$, let a bijection $\rho: Y\to X$, and let a collection $\mathcal{P} = \{P_v \mid v\in Y\}$ of pairwise disjoint paths
such that for every $v\in Y$, $P_v$ is a $(v,\rho(v))$-path in $\mathcal{K}$.

Now, for each $F\in \mathcal{F}_{i_0}$, we consider the path $P_F$ obtained after joining the paths $P_{v_{F}}$ and $P_{u_F}$ by the edge $\{v_F, u_F\}$ (in the case where $s,t\in \{v_F,u_F\}$, we just extend the corresponding path in $\mathcal{P}$ by adding the edge $\{v_F,u_F\}$).
Let $G''$ be the graph obtained from $G'$ after contracting each $P_F, F\in \mathcal{F}_{i_0}$ to an edge $e_{P_F}$ and let $E = \{e_{P_F}\mid  F\in \mathcal{F}_{i_0}\}$.
Then, notice that $G''$ contains $W^{(2k\cdot i_0)}$ as a subgraph and since $h(k) = 2k\cdot (k+2)+2k+1$, the wall $W^{(2k\cdot i_0)}$ has at least $k+1$ layers and therefore height at least $2k+3$.
Therefore, by~\autoref{lem:linkage-kawa}, $G''$ contains an $(s,t)$-path that contains all edges in $E$ and its intersection with $\cps(W^{(2k\cdot i_0+k+1)})$ is a path of $\mathsf{perim}(W^{(2k\cdot i_0+k+1)})$ whose endpoints are branch vertices of $W$.

Thus, using this $(s,t)$-path in $G''$, we can find an $(s,t)$-path $P^\star$ in $G$ that contains $S$ and
its intersection with $\cps(W^{(2k\cdot i_0+k+1)})$ is a path of $\mathsf{perim}(W^{(2k\cdot i_0+k+1)})$ whose endpoints, say $x$ and $y$, are branch vertices of $W$.
Finally, let an $(x,y)$ path $R_{x,y}$ in $\cps(W^{(2k\cdot i_0+k+1)})$ whose intersection with 
$\cps(W^{(h(k))})$ is a path of $\mathsf{perim}(W^{(h(k))})$ whose endpoints are branch vertices of $W$.
The proof concludes by observing that $(P^\star\setminus V(\cps(W^{(2k\cdot i_0+k+1)})))\cup R_{x,y}$ is the $(s,t)$-path claimed in the statement of the lemma.
\end{proof}

We stress that, while~\autoref{lem:rerout} deals with the case of ``rerouting'' an $(s,t)$-path, we can apply the same arguments to ``reroute'' a cycle that contains a fixed set $S$ away from the inner part of some wall.

\subsection{Equivalent instances of small treewidth}
\label{subsec:red-tw}
In this subsection, we prove that there is an algorithm that receives a framework $(G,M)$, where $G$ is a planar graph of ``big enough'' treewidth, and two vertices $s,t\in V(G)$,
and outputs either a report that $G$ contains an $(s,t)$-path of rank at least $k$, or an irrelevant vertex that can be safely removed.
In frameworks, to remove a vertex, one has to remove this vertex from $G$ and also restrict the matroid.
 
 \paragraph*{Restrictions of matroids.}
 Let $M =(V, \mathcal{I})$ be a matroid and let $S\subseteq V$.
 We define the {\em restriction of $M$ to $S$}, denoted by $M|S$, to be the matroid on the set $S$ whose independent sets are the sets in $\mathcal{I}$ that are subsets of $S$.
Given a $v\in V$,
we denote by $M\setminus v$ the matroid $M|(V\setminus \{v\})$.
\medskip

The goal of this subsection is to prove the following.

\begin{lemma}\label{lem:reducing-tw}
There is a function $g:\mathbb{N}\to\mathbb{N}$ and an algorithm that, given an integer $k\in\mathbb{N}$, a framework $(G,M)$, where $M$ is a matroid for which we can verify independence in time $\|M\|^{\mathcal{O}(1)}$,
 and $G$ is a planar graph of treewidth at least $g(k)$, and two vertices $s,t\in V(G)$, outputs, in time $2^{2^{\mathcal{O}(k\log k)}}\cdot (|G|+\|M\|)^{\mathcal{O}(1)}$,
\begin{itemize}
\item either a report that $G$ contains an $(s,t)$-path of rank at least $k$, or
\item a vertex $v\in V(G)$ such that $(G,M,k,s,t)$ and $(G\setminus v, M\setminus v,k,s,t)$ are equivalent instances of \textsc{Maximum Rank  $(s,t)$-Path}.
\end{itemize}
Moreover, $g(k) = 2^{\mathcal{O}(k\log k)}$.
\end{lemma}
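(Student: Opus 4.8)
The plan is to set up a win/win based on the structure established by \autoref{lem:rerout}. Using \autoref{prop:wall-vs-tw}, from the fact that $\tw(G) \ge g(k)$ for a suitable $g(k) = 2^{\mathcal{O}(k\log k)}$ we extract a large wall $W$ of $G$ with many layers — enough so that $W$ contains a nested family of $k+1$ "candidate" subwalls $W_0, W_1, \ldots, W_k$, each of which is itself surrounded by $h(k) = \mathcal{O}(k^2)$ insulation layers coming from \autoref{lem:rerout}, and such that the whole configuration is shielded from $s$ and $t$. Concretely, I would first pass to a biconnected component of $\cps(W)$ containing $W$ (so that $\cps$ of every subwall is $2$-connected), shrink to a subwall avoiding $s,t$, and verify that $G$ has two disjoint paths from $\{s,t\}$ to the perimeter (else some vertex inside is trivially irrelevant). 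The dependence of $g$ is driven by the nesting: we need roughly $k$ copies, each of height $\Theta(k^2)$ for \autoref{lem:rerout}, plus the combinatorial bookkeeping of the recursive "zooming" argument sketched in~\Cref{subsec:overview}, which introduces another factor polynomial or single-exponential in $k$; the wall height we must guarantee is therefore $2^{\mathcal{O}(k\log k)}$, and by \autoref{prop:wall-vs-tw} this forces $g(k) = 2^{\mathcal{O}(k\log k)}$.

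Next comes the rank dichotomy, carried out by the recursive zooming described in the overview. Starting from $W$, I would look for a packing of $k$ subwalls inside the current wall, pairwise separated by subpaths of the wall, whose compasses each have rank at least $k$ in $M$. If such a packing exists, then by the independence-augmentation axiom (I3) we can pick $v_i \in V(\cps(W_i))$ with $\{v_1,\ldots,v_k\}$ independent, and since each $\cps(W_i)$ is $2$-connected we can string these together with the two terminal-to-perimeter paths into an $(s,t)$-path of rank at least $k$ — we report this (Figure~\ref{fig:path}). Otherwise every such packing contains a compass of rank $< k$; descending into that compass and iterating, after at most $k$ levels of recursion we reach a subwall $W'$ with the property that $r(\cps(W')) = r(\cps(W_i))$ for a packing $W_1,\ldots,W_k$ of equal-rank subwalls inside it, plus one extra subwall $W_0$, all wrapped in $\mathcal{O}(k^2)$ insulation layers. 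The vertices of $\mathsf{inn}(W_0)$ are the claimed irrelevant vertices.

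To certify irrelevance, suppose $G$ has an $(s,t)$-path $P$ of rank at least $k$; I must produce one avoiding $\mathsf{inn}(W_0)$. Fix an independent $X \subseteq V(P)$ with $|X| = k$, and split $X = X_{\mathrm{out}} \cup X_{\mathrm{in}}$ according to whether a vertex is spanned by $V(\cps(W'))$ in $M$. The vertices $X_{\mathrm{out}} = \{u_1,\ldots,u_\ell\}$ lie outside $W'$ (being unspanned by its compass). Apply \autoref{lem:rerout} with the set $S := X_{\mathrm{out}}$ and the wall $W'$: either $P$ does not reach deep into $W'$, in which case $P$ already avoids $\mathsf{inn}(W_0)$ and we are done, or \autoref{lem:rerout} yields an $(s,t)$-path $\tilde P$ with $X_{\mathrm{out}} \subseteq V(\tilde P)$ whose intersection with $\cps(W'^{(h(k))})$ is a subpath of $\mathsf{perim}(W'^{(h(k))})$ with branch-vertex endpoints $x,y$. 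Then, using (I3) together with $r(\{v_1,\ldots,v_k\}) = r(\cps(W'))$ (chosen so that $\{v_1,\ldots,v_k\}$ spans everything $\cps(W')$ spans, in particular $X_{\mathrm{in}}$), I reroute the portion of $\tilde P$ inside the insulated region: build an $(x,y)$-path $Q$ in the inner part of $W'$ that is internally disjoint from the two outer pieces of $\tilde P$, passes through $v_1,\ldots,v_k$, and avoids $\mathsf{inn}(W_0)$ — this is exactly the $2$-connectivity / grid-routing argument of~\Cref{subsec:overview} and Figure~\ref{fig:reroute}. The resulting path $P' = P_1 Q P_2$ contains $X_{\mathrm{out}} \cup \{v_1,\ldots,v_k\}$, and since $\{v_1,\ldots,v_k\}$ spans $X_{\mathrm{in}}$ while being independent together with $X_{\mathrm{out}}$ (planarity keeping the two regions separated is what makes this work), $r(V(P')) \ge r(X) \ge k$; and $P'$ avoids $\mathsf{inn}(W_0)$ as required. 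This establishes equivalence of $(G,M,k,s,t)$ and $(G\setminus v, M\setminus v, k, s, t)$ for $v \in V(\mathsf{inn}(W_0))$.

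Finally, the running time: \autoref{prop:wall-vs-tw} costs $2^{q^{\mathcal{O}(1)}}|G|$ with $q = g(k) = 2^{\mathcal{O}(k\log k)}$, giving the $2^{2^{\mathcal{O}(k\log k)}}$ factor; the rank dichotomy evaluates rank on $\mathcal{O}(k)$ subwall compasses per recursion level over $\mathcal{O}(k)$ levels, each rank computation being $(|G|+\|M\|)^{\mathcal{O}(1)}$ since independence is testable in time $\|M\|^{\mathcal{O}(1)}$; and the rerouting constructions are polynomial-time graph manipulations. I expect the main obstacle to be the combinatorics of the zooming step — arguing that $\mathcal{O}(k^2)$ insulation layers genuinely suffice to absorb the rerouting of all $\ell \le k$ "outside" vertices simultaneously (this is where the Kawarabayashi-style linkage argument of \autoref{lem:linkage-kawa} and the Kleinberg-type structural results come in), and, crucially, verifying that planarity forces $X_{\mathrm{out}}$ and the substituted $\{v_1,\ldots,v_k\}$ to remain jointly independent after the swap — equivalently, that the rerouting does not destroy rank. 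This rank-preservation is the one place where the argument genuinely breaks for general graphs.
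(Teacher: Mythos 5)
Your proposal is correct and takes essentially the same approach as the paper's proof, which instantiates the overview of Section~1.1: a rank dichotomy over a packing of subwalls, recursive zooming (packaged in the paper as \autoref{lem:packings}) to produce an equal-rank configuration, and irrelevance via \autoref{lem:rerout} followed by swapping the $\cps(W')$-spanned part of the solution for matching-rank vertices chosen from the sibling packed subwalls. You leave the precise parameter cascade ($b,x,z,q,r,g$) and the pigeonhole step that produces the clean geometric split of $S$ into $S_{\mathrm{in}}$ and $S_{\mathrm{out}}$ at sketch level, but these are exactly the details the paper fills in, and the structure otherwise matches.
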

Keep in mind that, if $M$ is represented over a finite field or $\mathbb{Q}$, we can verify independence in time that is a polynomial in $\|M\|$. 
In order to prove \autoref{lem:reducing-tw}, we need some additional definitions and results.

\paragraph*{Packings of walls.}
Let $G$ be a planar graph and $W$ be a wall of $G$.
Let $z,q$ be two non-negative odd integers and let $r\in\mathbb{N}$.
We say that $W$ {\em admits an $(z,r,q)$-packing of walls},
if there is a collection $\mathcal{W} =\{W_0,W_1,\ldots, W_r\}$ of subwalls of $W$, where 
$W_0$ is a subwall of $W$ of height $j$, for some odd $j\geq 2z$, and
for every $i\in\{1,\ldots,r\}$, $W_i$ is a subwall of $W_0$ of height at least $q$ such that $V(W_i)$ is a subset of $V(W_0^{(z+1)})$, and
for every $i,j\in\{1,\ldots,r\},$ with $i\neq j$, $V(\cps(W_i '))$ and $V(\cps(W_j '))$ are disjoint.
We call $\mathcal{W}$ an {\em $(z,r,q)$-packing} of $W$ (see~\autoref{fig:reroute} for a visualization of a packing of a wall $W$).

\begin{observation}\label{obs:pack}
Given odd integers $z,q\in \mathbb{N}$, an $r\in\mathbb{N}$, and a planar graph $G$, every wall $W$ of $G$ of height at least $z+\lceil \sqrt{r}\rceil\cdot (q+1)$ admits a $(z,r,q)$-packing.
\end{observation}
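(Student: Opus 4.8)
The plan is to read the packing directly off the brick-like structure of $W$, by cutting a large central piece of $W$ into an $\lceil\sqrt{r}\rceil\times\lceil\sqrt{r}\rceil$ array of well-separated subwalls. First I would fix $W_0$: since $W$ has height at least $z+\lceil\sqrt{r}\rceil\cdot(q+1)$, I take $W_0:=W$ (shrinking it to a subwall of a prescribed odd height if convenient), so that the height $j$ of $W_0$ satisfies $j\ge 2z$ as the definition of a packing demands, and deleting its first $z$ layers $L_1,\dots,L_z$ produces the wall $W_0^{(z+1)}$, whose height is still at least $\lceil\sqrt{r}\rceil\cdot(q+1)$. These $z$ peeled-off layers are exactly the ``insulation'' around the subwalls to be placed, so what remains is to exhibit $r$ pairwise well-separated subwalls of $W_0$, each of height at least $q$ and with vertex set contained in $V(W_0^{(z+1)})$.

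Next I would grid $W':=W_0^{(z+1)}$. Recall that (a subdivision of) an elementary wall of height $h'$ carries a natural plane embedding whose bounded faces form a brick pattern of $\Theta(h')$ rows and columns, and that any axis-aligned contiguous block of that pattern spanning at least $q$ brick-rows and $q$ brick-columns supports, on its branch vertices, a subwall of height at least $q$. I would partition the brick pattern of $W'$ into $m:=\lceil\sqrt{r}\rceil$ horizontal and $m$ vertical strips, reserving one brick-row (resp.\ brick-column) as a buffer between consecutive strips; since $W'$ has height at least $m(q+1)$, each of the $m^2\ge r$ resulting blocks still spans at least $q$ brick-rows and $q$ brick-columns. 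Picking any $r$ of these blocks, let $W_1,\dots,W_r$ be subwalls of $W_0$ of height at least $q$ living on the branch-vertex subgrids of those blocks; then $V(W_i)\subseteq V(W')=V(W_0^{(z+1)})$ for every $i$. Finally, distinct blocks are separated by a buffer brick-row or brick-column, so the relevant (possibly slightly shrunk) subwalls $W_i'$ have compasses lying in the interiors of pairwise disjoint blocks, whence $V(\cps(W_i'))\cap V(\cps(W_j'))=\emptyset$ for all $i\ne j$; thus $\{W_0,W_1,\dots,W_r\}$ is the desired $(z,r,q)$-packing.

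The one step I expect to require genuine care is the gridding step: one must pin down exactly how many brick-rows and brick-columns a height-$q$ subwall occupies, and how the layer count of $W_0^{(z+1)}$ translates into the dimensions of the brick pattern, so as to confirm that the stated height $z+\lceil\sqrt{r}\rceil\cdot(q+1)$ of $W$ genuinely leaves room for $z$ insulation layers together with an $m\times m$ array of height-$q$ subwalls with buffers. This is a routine — if somewhat fiddly — computation with the combinatorics of elementary walls, and I would expect the reverse implication (that this height is also roughly necessary) to follow from the same accounting; everything else is immediate from unwinding the definitions of layers, compasses, and $(z,r,q)$-packings.
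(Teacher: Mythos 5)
Your approach — take $W_0:=W$, peel off the first $z$ layers to expose $W_0^{(z+1)}$, and tile that interior with an $\lceil\sqrt r\rceil\times\lceil\sqrt r\rceil$ array of height-$q$ subwalls separated by single-brick buffers — is the natural proof and surely what the paper (which gives no proof for this observation) has in mind. It matches the role the observation plays in \autoref{lem:packings}.

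However, the one step you yourself flag as fiddly is exactly where the proof breaks as written. Deleting a layer of a wall decreases its height by \emph{two}, not one: a wall of height $h$ has $(h-1)/2$ layers, and $W^{(i)}$ has height $h-2(i-1)$. So starting from $W_0=W$ of height $z+\lceil\sqrt r\rceil(q+1)$, the wall $W_0^{(z+1)}$ has height $\lceil\sqrt r\rceil(q+1)-z$, not $\lceil\sqrt r\rceil(q+1)$ as you assert. The same off-by-one pattern shows up at the outset: the definition of a $(z,r,q)$-packing insists $W_0$ has height $j\geq 2z$, yet $j=z+\lceil\sqrt r\rceil(q+1)\geq 2z$ only when $\lceil\sqrt r\rceil(q+1)\geq z$, which you treat as automatic. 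Concretely, for $z=5$, $r=1$, $q=3$ the stated height bound is $9$, but any candidate $W_0$ would need odd height $\geq 2z=10$, i.e.\ $\geq 11$ — impossible inside a $9$-wall — so the observation as written is false for these parameters, and your argument cannot repair this. The correct height requirement for your (and the paper's implicit) tiling is on the order of $2z+\lceil\sqrt r\rceil(q+1)$; this discrepancy is asymptotically harmless for the paper's downstream bounds (where $q\gg z$ never holds, but $f$ is adjusted by at most a constant factor), but your proof should surface and fix it rather than inherit it by silently conflating ``peel $z$ layers'' with ``lose $z$ height.''
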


Let $W$ be a wall of a planar graph.
We use $\rho(W)$ to denote $r(V(\cps(W)))$.

\begin{lemma}\label{lem:packings}
There is a function $f:\mathbb{N}^4\to\mathbb{N}$ and an algorithm that, given integers $k,z,r,q\in\mathbb{N}$, where $z,q$ are odd, a framework $(G,M)$, where $G$ is planar and $M$ is a matroid for which we can verify independence in time $\|M\|^{\mathcal{O}(1)}$, and a wall $W$ of $G$ of height at least $f(k,z,r,q)$ such that $\rho(W)\leq k$,
outputs, in $k\cdot r\cdot (|G|+\|M\|)^{\mathcal{O}(1)}$ time, a $(z,r,q)$-packing $\mathcal{W}=\{W_0,W_1,\ldots, W_r\}$ of $W$ such that
for every $i\in\{1,\ldots,r\}$, $\rho(W_i) = \rho(W_0)$.
Moreover, $f(k,z,r,q) = \mathcal{O}(r^{k/2}\cdot z\cdot q)$.
\end{lemma}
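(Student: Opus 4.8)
The plan is to produce the packing by a \emph{zoom-in} argument: the matroid rank $\rho(\cdot)$ is bounded by $k$ on every subwall of $W$ and is monotone under passing to more central subwalls, so it can strictly decrease at most $k$ times, and as soon as it ``stabilizes'' any disjoint packing of the interior works. First I would record the monotonicity: if $A$ is a subwall of a wall $B$ with $V(A)\subseteq V(B^{(2)})$, then in the plane embedding the closed disk bounded by $\mathsf{perim}(A)$ lies inside the one bounded by $\mathsf{perim}(B)$, so $\cps(A)$ is an induced subgraph of $\cps(B)$ and hence $\rho(A)=r(V(\cps(A)))\le r(V(\cps(B)))=\rho(B)$; in particular every subwall that the algorithm will ever look at has rank at most $\rho(W)\le k$.

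Next I would fix the threshold. Put $f(0,z,r,q):=2z+\lceil\sqrt r\rceil(q+1)$ and, for $\kappa\ge 1$, $f(\kappa,z,r,q):=2z+\lceil\sqrt r\rceil(\max\{q,f(\kappa-1,z,r,q)\}+1)$; this is nondecreasing in $\kappa$, and unrolling the recursion (a routine calculation) gives $f(k,z,r,q)=\mathcal{O}(r^{k/2}\cdot z\cdot q)$. The algorithm is recursive in a rank bound $\kappa$, starting at $\kappa=k$, and maintains the invariant that the current wall $U$ is a subwall of $W$ of height at least $f(\kappa,z,r,q)$ with $\rho(U)\le\kappa$ (the hypothesis $\rho(W)\le k$ is the base of the invariant). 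At a node with bound $\kappa$, set $q':=\max\{q,f(\kappa-1,z,r,q)\}$ if $\kappa\ge1$ and $q':=q$ if $\kappa=0$. Since $U$ has height at least $f(\kappa,z,r,q)$, a routine wall-surgery argument (the one proving \Cref{obs:pack}) lets me carve, inside the subwall $U^{(z+1)}$ obtained by deleting the $z$ outermost layers of $U$, a family $W_1,\dots,W_r$ of pairwise compass-disjoint subwalls of $U$, each of height at least $q'\ge q$; together with $W_0:=U$, which has height at least $f(\kappa,z,r,q)\ge 2z$, these form a $(z,r,q)$-packing of $W$. I then compute each $\rho(W_i)=r(V(\cps(W_i)))$ by the matroid greedy algorithm, using at most $|G|$ independence tests, each in $\|M\|^{\mathcal{O}(1)}$ time; by the monotonicity above (with $A:=W_i$, $B:=U$, using $z\ge1$) we have $\rho(W_i)\le\rho(U)\le\kappa$ for all $i$. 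If $\rho(W_i)=\rho(U)$ for every $i$, I output $\{W_0,W_1,\dots,W_r\}$. Otherwise I pick $i$ with $\rho(W_i)<\rho(U)\le\kappa$, so $\rho(W_i)\le\kappa-1$; since $W_i$ has height at least $q'\ge f(\kappa-1,z,r,q)$ and is a subwall of $W$, the invariant holds for $W_i$ with bound $\kappa-1$, so I recurse on $W_i$ and return its output, which is a $(z,r,q)$-packing of $W_i$ and hence of $W$.

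For correctness and running time I would argue: $\kappa$ drops by at least one per recursive call and stays nonnegative, so the recursion has depth at most $k+1$; by the time $\kappa$ reaches $0$ the invariant forces $\rho(U)=0$, whence $\rho(W_i)=0=\rho(U)$ for all $i$ and the non-recursive branch is taken, so the procedure terminates and, by construction, every returned collection is a valid $(z,r,q)$-packing of $W$ with $\rho(W_i)=\rho(W_0)$ for every $i\in\{1,\dots,r\}$. Since the recursion is linear it has at most $k+1$ nodes, each doing one carving in $|G|^{\mathcal{O}(1)}$ time and $r$ rank computations in $(|G|+\|M\|)^{\mathcal{O}(1)}$ time, for a total of $k\cdot r\cdot(|G|+\|M\|)^{\mathcal{O}(1)}$, as required.

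I expect the main technical point to be the bookkeeping linking the height threshold $f(\kappa,z,r,q)$ to the sizes of the subwalls carved at each level, so that the recursion can always be continued after a rank drop while still guaranteeing, at the terminating level, subwalls of height $\ge q$ and at least $2z$ of insulation; the combinatorial core---zoom in until the rank stabilizes, then any disjoint packing works---is short. A secondary point worth care is the planarity-based fact that nested subwalls have nested compasses, which is precisely what yields the monotonicity $\rho(W_i)\le\rho(U)$.
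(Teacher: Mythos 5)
Your proposal is correct and takes essentially the same approach as the paper: carve a $(z,r,\cdot)$-packing whose subwalls are tall enough to recurse into, compare the compass ranks of $W_1,\ldots,W_r$ to that of $W_0$, and if some $W_i$ has strictly smaller rank, recurse into $W_i$ with the rank bound reduced by one, using the same recurrence for the height threshold. Your choice to run the recursion all the way to $\kappa=0$, where $\rho(U)\le 0$ forces every $\rho(W_i)$ to equal $\rho(W_0)$, is in fact slightly cleaner than the paper's base case at $k=1$.
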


\begin{proof}
We define the function $f:\mathbb{N}^4\to\mathbb{N}$ so that, for every $z,r,q\in\mathbb{N}$, $f(1,z,r,q) = z+\lceil \sqrt{r}\rceil\cdot (q+1)$, while for $k>1$, we set $f(k,z,r,q) = z+\lceil \sqrt{r}\rceil\cdot (f(k-1,z,r,q)+1).$
Observe that, since $r,q$ are odd, $f(k,z,r,q)$ is odd for every $k,z\in\mathbb{N}$.

We prove the lemma by induction on $k$.
Clearly, if $k=1$, then the lemma holds trivially, as, by~\autoref{obs:pack}, there is a $(z,r,q)$-packing $\mathcal{W}$ of $W$ and also, given that for each $W'\in \mathcal{W}$ $\cps(W')$ is a subgraph of $\cps(W)$,  we have that $\rho(W')\leq \rho(W)$ and thus $\rho(W')=1$.

Suppose now that $k>1$ and that the lemma holds for smaller values of $k$.
We set $w=f(k-1,z,r,q)$.
Since $W$ has height at least $z+\lceil \sqrt{r}\rceil\cdot (w+1)$,~\autoref{obs:pack} implies that $W$ admits a $(z,r,w)$-packing $\mathcal{W} = \{W_0,W_1,\ldots, W_r\}$.
Since, by definition, for every $i\in\{1,\ldots,r\}$ $V(\cps(W_i))$ is a subset of $V(\cps(W_0))$,
it holds that $\rho(W_i)\leq \rho(W_0)$, for every $i\in\{1,\ldots,r\}$.
We compute $\rho(W_0)$ and $\rho(W_i)$, for every $i\in\{1,\ldots,r\}$.
This can be done in $r\cdot (|G|+\|M\|)^{\mathcal{O}(1)}$ time.
If there is an $i\in\{1,\ldots,r\}$ such that $\rho(W_i)<\rho(W_0)$, then, from the induction hypothesis applied to $W_i$, we have that there exists a $(z,r,q)$-packing $\mathcal{W}_i$ of $W_i$ such that all walls in $\mathcal{W}'$ have the same rank.
The lemma follows by observing that $\mathcal{W}_i$ is also a $(z,r,q)$-packing of $W$ and that $f(k,z,r,q)=\mathcal{O}(r^{k/2}\cdot z\cdot q).$
\end{proof}

We are now ready to prove \autoref{lem:reducing-tw}.

\begin{proof}[Proof of \autoref{lem:reducing-tw}]
We set
\begin{eqnarray*}
b\!\!\! & =&\!\!\!\! h(k), \hspace{3.63cm} x  =  k+1, \hspace{4.65cm} z  = (k+1)\cdot b,\\
 q \!\!\! & =&\!\!\!\!  f(k-1,z,x,3), \hspace{1.9cm}
r  = 1+\lceil\sqrt{k}\rceil\cdot (q+1),\hspace{2cm}
g(k) = 36(r+1).
\end{eqnarray*}
We first assume that $G$ is 2-connected.
If $G$ is not connected, then we break the problem in subproblems, each one corresponding to a 2-connected component $B$ of $G$ and if the vertices of $B$ are separated from $s$ or $t$ by a cut-vertex $v$ of $G$, then we consider the problem where $v$ is set to be $s$ or $t$, respectively.

Since the treewidth of $G$ is at least $g(k) = 36(r+1)$, by~\autoref{prop:wall-vs-tw}, there is a $(4r+1)$-wall of $G$.
We then consider an $r$-wall $W$ of $G$ such that $s,t\notin\cps(W)$ and  an $(1,k,q)$-packing $\mathcal{W} = \{W_0,W_1,\ldots, W_k\}$ of $W$.
This $(1,k,q)$-packing exists because of the fact that $r= 1+\lceil\sqrt{k}\rceil\cdot (q+1)$ and due to~\autoref{obs:pack} and we can find it in $\mathcal{O}(n)$ time.
For every $i\in\{1,\ldots,k\}$, we set $K_i := V(\cps(W_i))$.
Then, compute the rank of $K_i$, for each $i\in\{1,\ldots,k\}$.
This can be done in time $k\cdot (|G|+\|M\|)^{\mathcal{O}(1)}.$

If every $K_i$ has rank at least $k$, then notice that
there is a set $S\subseteq V(G)$ such that $r(S)=k$ and for every $i\in\{1,\ldots,k\}$, $|S\cap K_i| = 1$.
To obtain an $(s,t)$-path $P$ such that $S\subseteq V(P)$, we do the following:
We first pick two disjoint paths $P_s, P_t$ from the perimeter of $W_0$ to $s$ and $t$ respectively (these exist since $G$ is 2-connected).
Let $D$ be the perimeter of $W_0$ and let $s'$ and $t'$ be the endpoints of $P_s$ and $P_t$ in $D$.
Also, let $L_2$ be the second layer of $W_0$.
Observe that, since the compass of a wall is a connected graph, there is also a path $\overline{P}$ in $G$ such that the endpoints, say $x,y$, of $\overline{P}$ are in $L_2$, no internal vertex of $\overline{P}$ is a vertex of $L_2$, and $S\subseteq V(\overline{P})$.
Finally, observe that there exist two disjoint paths $P_{s'x}, P_{t'y}$ in the closed disk bounded by $D$ and $L_2$ connecting $s'$ with $x$ and $t'$ with $y$, respectively, and that $P:=P_s\cup P_{s'x}\cup \overline{P}\cup P_{t'y}\cup P_t$ is an $(s,t)$-path such that  $S\subseteq V(P)$ (see~\autoref{fig:path}).

Suppose now that there is an $i\in\{1,\ldots,k\}$ such that the rank of $K_i$ is at most $k-1$.
Since the corresponding wall $W_i$ has height at least $q = f(k-1,z,x,3)$, by \autoref{lem:packings}, we can find a $(z,x,3)$-packing $\mathcal{W}' = \{W_0', W_1',\ldots, W_x'\}$ of $W_i$, such that for every $j\in\{1,\ldots, x\}$, $\rho(W_j') = \rho(W_0')$.
We set $v$ to be one central vertex of $W_1'$.

We now prove that $(G,M,k,s,t)$ and $(G\setminus v, M\setminus v,k,s,t)$ are equivalent instances of \textsc{Maximum Rank  $(s,t)$-Path}.
We show that if $(G,M,k,s,t)$ is a \textsf{yes}-instance, then $(G\setminus v, M\setminus v,k,s,t)$ is also a \textsf{yes}-instance, since the other implication is trivial.
If $(G,M,k,s,t)$ is a \textsf{yes}-instance, then there is a set of vertices $S=\{v_1, \ldots, v_k\}\subseteq V(G)$ and an $(s,t)$-path $P$ in $G$ such that $r(S) = k$ and $S\subseteq V(P)$.
The fact that $z= (k+1)\cdot b$ implies that there is an $i\in\{1,\ldots, k+1\}$ such that the vertex set $V(\cps(W^{\prime((i-1)\cdot b+1)})\setminus V(\mathsf{inn}(W^{\prime(i\cdot b)})))$, which we denote by $D_{i}$, does not intersect $S$.
Let $S_\mathrm{in}$ be the vertices of $S$ that are contained in $\cps(W^{\prime ({i\cdot b})})$ and let $S_\mathrm{out}$ be the set $S\setminus S_\mathrm{in}$.
We will show that there is a set $S'\in \mathcal{I}(M\setminus v)$ and a path $P'$ such that $|S_\mathrm{out}\cup S'|\geq k$, $S_\mathrm{out}\cup S'\subseteq V(P')$ and $V(P')\subseteq V(G\setminus v)$.

We assume that $v\in V(P)$, otherwise we set $S':=S_\mathrm{in}$ and $P':=P$ and the lemma follows.
By \autoref{lem:rerout},
there is a path $\tilde{P}$ such that $S_\mathrm{out} \subseteq V(\tilde{P})$
and $V(\tilde{P}) \cap V(\cps(W_0^{\prime (i\cdot b)}))$ is the vertex set of a path  $\hat{P}$ of $W_0'$ that lies in $\mathsf{perim}(W_0^{\prime(i\cdot b)})$ and whose endpoints are branch vertices of $W_0^{\prime(i\cdot b)}$.
Let $v_{\hat{P}}$ and $u_{\hat{P}}$ be the endpoints  of  $\hat{P}$.

For every $j\in\{1,\ldots, x\}$, since $\rho(W_j ') = \rho(W_0 ')$ and $S_\mathrm{in}$ is an independent set of $M$ that is a subset of $\cps(W_0)$,
there is an independent set $S_j$ such that $|S_j| = |S_\mathrm{in}|$.
We set $S' = \{y_2, \ldots, y_{x}\}$, where $y_j$ is a vertex in $S_j$, $j\in\{1,\ldots, x\}$.
Observe that $S'$ is an independent set of $M$ of size $x-1$.
Since $x=k+1$, we have that $|S'| = k$ and therefore $|S_\mathrm{out}\cup S'|\geq k$.  
Also, notice that, for every $u_1,u_2\in L_{z}$, there is a $(u_1, u_2)$-path $P^\star$ in $W^{(z)}\setminus (V(L_z)\setminus \{u_1,u_2\})$ that contains $S'$ and avoids $v$.
It is easy to see that there exist two disjoint paths $Q_1,Q_2$ in $\cps(W_0^{\prime(i\cdot b)})$
connecting $\{v_{\hat{P}},u_{\hat{P}}\}$ with $\{u_1,u_2\}$ and that these paths can be picked to be internally disjoint from $\hat{P}$ and $P^\star$.
Thus, if $\tilde{P}''$ is the graph obtained from $\tilde{P}'$ after removing all internal vertices of $\hat{P}$, then $\tilde{P}''\cup Q_1 \cup Q_2 \cup P^\star$ is the claimed $(s,t)$-path that contains $S'\cup S_\mathrm{out}$ and avoids $v$ (see~\autoref{fig:reroute}).
\end{proof}

\section{Dynamic programming for instances of small treewidth}
\label{subsec:dp-planar}

In this section, we aim to describe a dynamic programming algorithm that solves 
\textsc{Maximum Rank  $(s,t)$-Path} for frameworks $(G,M)$, where $G$ has treewidth at most $q$ and $M$ is a linear matroid.
 
\begin{lemma}\label{lem:dp-planar}
Let $\mathbb{F}$ be a finite field or $\mathbb{Q}$.
There is an algorithm that, given a framework $(G,M)$, where $M$ is an $\mathbb{F}$-linear matroid and $G$ is a graph,
two non-negative integers $k$ and $q$, where $k\leq q$, and a tree decomposition of $G$ of width $q$, outputs, in time $2^{q^{\mathcal{O}(1)}}\cdot (|G|+\|M\|)^{\mathcal{O}(1)}$,
 a report whether $G$ contains an $(s,t)$-path of rank at least $k$ or not.
\end{lemma}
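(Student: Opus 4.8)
The plan is to design a standard dynamic programming algorithm over a nice tree decomposition of width $q$, where the only nonstandard ingredient is how we encode the ``matroid part'' of a partial solution. First I would preprocess the given tree decomposition into a nice tree decomposition (with introduce, forget, join, and leaf bags) in polynomial time; this does not increase the width and multiplies the number of bags by a polynomial factor. For the path-structure, I would use the classical bookkeeping for \probkPath-type problems on bounded treewidth: a state at a bag $X_i$ records, for the restriction of a candidate $(s,t)$-path $P$ to the subgraph $G_i$ induced by the vertices below $i$, a partial matching on $X_i$ describing which pairs of bag-vertices are linked by a subpath of $P$ lying in $G_i$, together with the set of bag-vertices that have degree $0$, $1$, or $2$ in $P\cap G_i$ and a flag for whether $s$ and $t$ have already been ``committed''. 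The subtle point is that we do not fix $k$ specific vertices in advance; instead, alongside the path-structure we must track enough information about which independent sets of $M$ the partial path can realize.

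The core idea, as the paper indicates, is to store for each path-structure state a family of independent sets of $M$ that the corresponding partial solutions visit, but to keep only a \emph{representative} family rather than all of them. Concretely, for a partial path $P\cap G_i$ realizing a given path-structure, let $Z\subseteq V(P\cap G_i)$ be an independent set of $M$ of size $j\le k$ that we have chosen to ``carry'' — intuitively the vertices already earmarked toward the final independent set of size $k$. What matters about $Z$ for the future is only its span in $M$, and more precisely whether it can be extended by $k-j$ further independent vertices coming from the part of the path not yet constructed. This is exactly the setting of the representative-sets machinery of Fomin, Lokshtanov, Panolan, and Saurabh~\cite{FominLPS16} and its deterministic counterpart of Lokshtanov, Misra, Panolan, and Saurabh~\cite{LokshtanovMPS18}: from a family of independent sets each of size $j$ we can, in time polynomial in the family size, $\|M\|$, and singly-exponential in $j\le k\le q$, compute a $q$-representative subfamily of size at most $\binom{q}{j}$, which suffices because the ``part not yet seen'' will contribute at most $q$ further elements (in fact at most $k\le q$). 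Thus at every bag and every path-structure state we maintain, for each $j\in\{0,1,\dots,k\}$, a representative family of independent $j$-sets; its size is bounded by $2^{\mathcal{O}(q)}$.

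The transitions are then routine. At an introduce node we either leave the new vertex $v$ out of $P$, or give it degree $1$ or $2$, updating the partial matching accordingly, and — if $v$ is added to $P$ — we may optionally try to extend a carried independent set $Z$ to $Z\cup\{v\}$ whenever that remains independent in $M$ (checked in $\|M\|^{\mathcal{O}(1)}$ time). At a forget node we discard the forgotten vertex from the bag, merge states that become equal, and re-run the representative-sets reduction to keep the families small. At a join node we combine two child states with compatible path-structures (the usual compatibility: the union of the two partial matchings is acyclic or forms the single $s$–$t$ path, degrees add up to at most $2$), and for the matroid part we take, over all pairs $(Z_1,Z_2)$ of carried sets from the two children with $Z_1\cup Z_2$ independent, the union $Z_1\cup Z_2$; again we immediately apply the representative-sets reduction. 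Correctness follows from the exchange property guaranteed by representative families: if some \emph{global} $(s,t)$-path of rank $\ge k$ exists, then at each stage an independent set witnessing it, restricted to the already-built part, is ``represented'' by something we kept, so by induction over the tree decomposition the root node will report a state with a carried independent set of size $k$ consistent with a full $s$–$t$ path. At the root we simply answer \textbf{yes} iff some final state has a complete $(s,t)$-path structure and a carried independent set of size exactly $k$.

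For the running time: the number of path-structure states per bag is $2^{\mathcal{O}(q\log q)}$ (partitions/matchings on a $q$-set together with degree labels), the number of carried-rank values is $k+1\le q+1$, each representative family has size $2^{\mathcal{O}(q)}$, and each representative-sets computation and each transition costs $(\text{family size})\cdot(|G|+\|M\|)^{\mathcal{O}(1)}\cdot 2^{\mathcal{O}(q)}$; summing over the polynomially many bags gives the claimed bound $2^{q^{\mathcal{O}(1)}}\cdot(|G|+\|M\|)^{\mathcal{O}(1)}$. The main obstacle — and the only place where we genuinely need $M$ to be $\mathbb{F}$-linear — is the size control of the carried independent-set families: without representative sets the families could have size $n^{\Theta(k)}$, blowing up the running time. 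The deterministic construction of~\cite{LokshtanovMPS18} requires a linear representation of $M$ (over a finite field or $\mathbb{Q}$, where field arithmetic is polynomial in $\|M\|$), which is precisely the hypothesis of the lemma; everything else is the standard treewidth-DP for finding an $(s,t)$-path. I would devote the bulk of the write-up to stating the representative-family guarantee precisely and verifying the exchange argument at the join nodes, and treat the path-structure bookkeeping as standard with a pointer to~\cite{CyganFKLMPPS15,cygan2015parameterized}.
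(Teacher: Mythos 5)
Your proposal is correct and follows essentially the same route as the paper's proof: a nice tree decomposition, states recording a partial matching (the paper's ``semi-matchings'') describing how the partial path meets the bag, and, for each such state and each size $j\le k$, a representative family of carried independent sets computed deterministically via Lokshtanov, Misra, Panolan, and Saurabh~\cite{LokshtanovMPS18}, with correctness resting on the exchange property of representative families propagated through introduce, forget, and join nodes. The only cosmetic difference is that the paper earmarks a vertex into the carried independent set at the forget step rather than at the introduce step, which does not affect correctness or the running-time bound.
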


The section is organized as follows.
In~\Cref{subsec:nice-repsets}, we define nice tree decompositions, the combinatorial structure on which we will perform the dynamic programming, and representative sets, which are used to efficiently encode partial solutions to tables of the dynamic programming.
In~\Cref{subsec:partial-solutions}, we define the partial solutions of our problem.
Then, in~\Cref{subsec:dyn-prog}, we present the dynamic programming algorithm of~\Cref{lem:dp-planar} and in~\Cref{subsec:corr} we prove its correctness.
We conclude this section by giving the proof of~\Cref{thm:fpt-planar} (\Cref{subsec:proofThm2}).

\subsection{Nice tree decompositions and Representative Sets}\label{subsec:nice-repsets}
We start this subsection with the definition of nice tree decompositions.
\paragraph*{Nice tree decompositions.}
Let $G$ be a graph.
A tree decomposition $\mathcal{T}=(T,\mathcal{X})$ of $G$ is called {\em nice tree decomposition} of $G$ if $T$ is rooted to some leaf $r$ and
\begin{itemize}
\item for any leaf $l\in V(T)$, $X_l=\emptyset$ (we call $X_l$ {\em leaf node} of $\mathcal{T}$, except from $X_r$ which we call {\em root node}),
\item every $t\in V(T)$ has at most two children,
\item if $t$ has two children $t_1$ and $t_2$, then $X_t= X_{t_1}= X_{t_2}$ and $X_t$ is called a {\em join node},
\item if $t$ has one child $t'$, then
\begin{itemize}
\item either $X_{t} = X_{t'}\cup\{v\}$ for some $v\in V(G)$ (we call $X_t$ an {\em insert node}),
\item or $X_t = X_{t'}\setminus \{v\}$ for some $v\in V(G)$  (we call $X_t$ a {\em forget node}).
\end{itemize}
\end{itemize}

It is known that any tree decomposition of $G$ can be transformed into a nice tree decomposition maintaining the same width in linear time~\cite{Kloks94}.
We use $G_t$ to denote the graph induced by the vertex set $\bigcup_{t'} X_{t'}$,
where $t'$ ranges over all descendants of $t$, including $t$.

In the rest of the paper, given an instance $(G,M,k,s,t)$ of \textsc{Maximum Rank  $(s,t)$-Path} and a nice tree decomposition $(T, \mathcal{X})$ of $G$,
we will consider the tree decomposition $(T, \mathcal{X}')$
obtained from $(T, \mathcal{X})$ after adding 
the vertices
$s$ and $t$ in every bag in $\mathcal{X}$.
Therefore, the leaf nodes and the root node will be equal to the set $\{s,t\}$.
\medskip

Let $(G,M)$ be a framework and let $(T,\mathcal{X})$ be a tree decomposition of $G$.
For every $t\in V(T)$ and every $i\in\mathbb{N}$, we define $\mathcal{S}_t^{(i)}$ to be the collection of all sets $S\subseteq V(G_t)\setminus X_t$ that are independent sets of $M$ of size $i$.

\paragraph*{Representative sets.}  Our algorithms use results obtained by Fomin et al.~\cite{FominLPS16} and Lokshtanov et al.~\cite{LokshtanovMPS18}.
\begin{definition}[$q$-Representative Set]\label{def:repset}
Let $M=(V,\mathcal{I})$ be a matroid and let $\mathcal{S}$ be a family of subsets of $V$. For a positive integer $q$, a subfamily $\widehat{\mathcal{S}}$ is \emph{$q$-representative for $\mathcal{S}$}  if the following holds: for every set $Y\subseteq V$ of size at most $q$, if there us a set $X\in\mathcal{S}$ disjoint from $Y$ with $X\cup Y\in \mathcal{I}$, then there is $\widehat{X}\in\widehat{\mathcal{S}}$ disjoint from $Y$ with $\widehat{X}\cup Y\in \mathcal{I}$. 
 \end{definition}
We write $\widehat{\mathcal{S}}\subseteq_{rep}^q\mathcal{S}$ to denote that $\widehat{\mathcal{S}}\subseteq\mathcal{S}$ is  $q$-representative for $S$. It is crucial for us that representative families can be computed efficiently for linear matroids. To state these results, we say that a family of sets $\mathcal{S}$ is a \emph{$p$-family} for an integer $p\geq 0$ if $|S|=p$ for every $S\in\mathcal{S}$. 

\begin{theorem}[{\cite[Theorem 3.8]{FominLPS16}}]\label{thm:rep-rand}
Let  $M=(V,\mathcal{I})$ be a linear matroid and let $\mathcal{S}=\{S_1,\ldots,S_t\}$ be a $p$-family of independent sets. Then there exists    $\widehat{\mathcal{S}}\subseteq_{rep}^q\mathcal{S}$ of size at most $\binom{p+q}{p}$. Furthermore, given a representation $A$ of $M$ over a field $\mathbb{F}$, there is a randomized algorithm computing  $\widehat{\mathcal{S}}\subseteq_{rep}^q\mathcal{S}$ of size at most $\binom{p+q}{p}$ in $\mathcal{O}(\binom{p+q}{p}tp^\omega+t\binom{p+q}{q}^{\omega-1})+\|A\|^{\mathcal{O}(1)}$ operations over $\mathbb{F}$, where 
$\omega$ is the exponent of matrix multiplication.\footnote{The currently best value is $\omega\approx 2.3728596$~\cite{AlmanW21}.}
\end{theorem}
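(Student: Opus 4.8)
The plan is to follow the classical Lov\'asz--Marx argument for computing representative families of linear matroids via exterior algebra, together with the pruning trick that bounds the running time. First I would recall the algebraic setup: fix a representation $A$ of $M$ over $\mathbb{F}$ with $r(M)$ rows; since $\mathcal{S}$ is a $p$-family of independent sets, to each $S_j = \{a_1,\dots,a_p\}$ one associates the wedge product $v_{S_j} = u_{a_1}\wedge\cdots\wedge u_{a_p}$ in the exterior power $\bigwedge^p \mathbb{F}^{r(M)}$, where $u_a$ is the column of $A$ indexed by $a\in V$ (well-defined up to sign, which is irrelevant for linear (in)dependence). The key linear-algebraic fact (Lov\'asz) is: for an independent set $Y$ of size $q$ with $Y\cap S_j=\emptyset$, the set $S_j\cup Y$ is independent in $M$ if and only if $v_{S_j}\wedge v_Y \neq 0$, equivalently $v_{S_j}$ is not annihilated by wedging with $v_Y$, which is a \emph{linear} condition on $v_{S_j}$. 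From this one gets the clean combinatorial statement: a subfamily $\widehat{\mathcal{S}}\subseteq\mathcal{S}$ is $q$-representative as soon as $\{v_{S}: S\in\widehat{\mathcal{S}}\}$ spans the same subspace of $\bigwedge^p\mathbb{F}^{r(M)}$ as $\{v_S : S\in\mathcal{S}\}$. Hence any maximal linearly independent subset of the $v_{S_j}$ is $q$-representative, and since $\bigwedge^p\mathbb{F}^{r(M)}$ has dimension $\binom{r(M)}{p}$ this already gives a representative family of size $\binom{r(M)}{p}$ — but we need the sharper bound $\binom{p+q}{p}$, which does not depend on $r(M)$.

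To get the $\binom{p+q}{p}$ bound, I would invoke the standard reduction to a matroid of small rank: by a generic truncation / random projection argument (or, deterministically, by the truncation construction of Marx, cited as \cite{Marx09} in the excerpt), one may assume without loss of generality that $r(M)\le p+q$, because in the representative-set condition only the ranks of sets of size $\le p+q$ matter, and a truncation to rank $p+q$ preserves independence of all such sets. Under this assumption the ambient exterior power has dimension $\binom{p+q}{p}$, so a maximal independent subset of $\{v_{S_j}\}$ has size at most $\binom{p+q}{p}$ and is $q$-representative by the span argument above. Existence of $\widehat{\mathcal{S}}\subseteq_{rep}^q\mathcal{S}$ of size $\le\binom{p+q}{p}$ follows.

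For the algorithmic part, the naive approach — write down all $t$ vectors $v_{S_j}\in\mathbb{F}^{\binom{p+q}{p}}$ and run Gaussian elimination — costs roughly $t\cdot\binom{p+q}{p}^{\omega-1}$ field operations for the row reduction, but computing each $v_{S_j}$ explicitly already costs $\binom{p+q}{p}$ operations, giving the additive/multiplicative terms in the claimed bound $\mathcal{O}\!\big(\binom{p+q}{p}tp^\omega + t\binom{p+q}{q}^{\omega-1}\big)$ plus the $\|A\|^{\mathcal{O}(1)}$ overhead for preparing the truncated representation. More precisely I would process the $S_j$ one at a time, maintaining a row-echelon basis of the span of the $v_{S_i}$ seen so far; each new vector is reduced against the current basis in $\binom{p+q}{q}^{\omega-1}$-type time (using fast matrix multiplication to batch the reductions, which is where the $\omega$ exponent enters), and the factor $p^\omega$ accounts for forming the Pl\"ucker/wedge coordinates. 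Keeping only those $S_j$ whose $v_{S_j}$ increases the rank yields $\widehat{\mathcal{S}}$ of size $\le\binom{p+q}{p}$; this is the randomized version because the truncation to rank $p+q$ is obtained by a random linear map over (an extension of) $\mathbb{F}$, which preserves all the relevant ranks with high probability by the Schwartz--Zippel lemma.

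\medskip
\noindent\textbf{Main obstacle.} The conceptual core is routine once the exterior-algebra dictionary is in place; the genuinely delicate points are (i) justifying the truncation step so that the dimension bound $\binom{p+q}{p}$ is independent of $r(M)$ — this is exactly where randomization is unavoidable in the Fomin et al.\ version and must be handled with care over finite fields (one may need to pass to a field extension of size polynomial in the input to have enough room for the random choices) — and (ii) organizing the incremental rank maintenance so that fast matrix multiplication can be applied in blocks, yielding the stated $t\binom{p+q}{q}^{\omega-1}$ term rather than a naive $t\binom{p+q}{p}^2$. Everything else (the sign ambiguity of wedge products, the bijection between $V$ and columns of $A$, verifying the $q$-representative condition from the span equality) is bookkeeping.
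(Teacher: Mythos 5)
This statement is imported verbatim from \cite[Theorem~3.8]{FominLPS16}; the paper you are working from gives no proof of it and uses it as a black box. Your sketch correctly reproduces the argument of the cited source: the Lov\'asz wedge-product criterion ($S\cup Y$ independent iff $v_S\wedge v_Y\neq 0$, with disjointness forced automatically since a repeated column kills the wedge), reduction to rank $p+q$ via Marx's randomized truncation (the sole source of randomness and of the $\|A\|^{\mathcal{O}(1)}$ term), and extraction of a column basis of the $\binom{p+q}{p}\times t$ matrix of wedge vectors by fast Gaussian elimination, which accounts for the $t\binom{p+q}{q}^{\omega-1}$ term while the $\binom{p+q}{p}tp^{\omega}$ term comes from evaluating the $p\times p$ determinants giving the Pl\"ucker coordinates. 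This is essentially the same proof as in the original reference, so there is nothing to object to.
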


Observe that the algorithm in Theorem~\ref{thm:rep-rand} is randomized. This is due to fact that one of the steps of the algorithm is constructing of a $k$-truncation\footnote{A matroid
$M'=(V,\mathcal{I}') $ is a \emph{$k$-truncation} of $M=(V,\mathcal{I})$ if for every $X\subseteq V$, $X\in \mathcal{I}'$ if and only if $X\in \mathcal{I}$ and $|X|\leq k$.
} of $M$ for $k=p+q$.  
A $k$-truncation can be constructed algorithmically for linear matroids, but for general linear matroids, only a randomized algorithm is known~\cite{Marx09}. 
In~\cite{LokshtanovMPS18}, Lokshtanov et al. gave a deterministic algorithm for linear matroid represented over any
field in which the field operations can be done efficiently. In particular, this includes any finite field and the field of rational numbers. This way, they obtained the following theorem.

\begin{theorem}[{\cite[Theorem 1.3]{LokshtanovMPS18}}]\label{thm:rep-det}
Let  $M=(V,\mathcal{I})$ be a linear matroid of rank $r$ 
and let $\mathcal{S}=\{S_1,\ldots,S_t\}$ be a $p$-family of independent sets. 
Let $A$ be an $r\times |V|$-matrix representing $M$ over a field $\mathbb{F}$, and let $\omega$ is the exponent of matrix multiplication.
Then there are deterministic algorithms computing $\widehat{\mathcal{S}}\subseteq_{rep}^q\mathcal{S}$ as follows:
\begin{itemize}
\item A family $\widehat{\mathcal{S}}$ of size at most $\binom{p+q}{p}$ in $\mathcal{O}(\binom{p+q}{p}^2tp^3r^2+t\binom{p+q}{q}^\omega rp)+(r+|V|)^{\mathcal{O}(1)}$ operations over $\mathbb{F}$.
\item A family $\widehat{\mathcal{S}}$ of size at most $rp\binom{p+q}{p}$ in $\mathcal{O}(\binom{p+q}{p}tp^3r^2+t\binom{p+q}{q}^{\omega-1} (rp)^{\omega-1})+(r+|V|)^{\mathcal{O}(1)}$ operations over $\mathbb{F}$.
\end{itemize}
\end{theorem}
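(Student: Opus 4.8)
The plan is to follow the proof of Theorem~\ref{thm:rep-rand} of~\cite{FominLPS16} and to replace its unique randomized ingredient by a deterministic construction. Recall that the algorithm of~\cite{FominLPS16} proceeds in two stages: first it passes from $M$ to its $(p+q)$-truncation $M'=M_{p+q}$ (a set of size at most $p+q$ is independent in $M'$ exactly when it is independent in $M$), and then it works purely by linear algebra in the exterior power of the ambient space of a representation of $M'$. Only the first stage is randomized, and only because the sole known efficient way to obtain a representation of the truncation of a \emph{general} linear matroid (Marx~\cite{Marx09}) samples a generic matrix. Hence it suffices to produce \emph{deterministically} a representation of $M'=M_{p+q}$ over a field in which arithmetic is efficient; feeding it into the (already deterministic) second stage then proves both items of the statement.

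So suppose we are given a $(p+q)\times n$ matrix $A'$ representing $M_{p+q}$ over a field $\mathbb{L}\supseteq\mathbb{F}$. Following~\cite{FominLPS16}, map each $S=\{v_1,\ldots,v_p\}\in\mathcal{S}$ to the exterior product $w_S=A'_{v_1}\wedge\cdots\wedge A'_{v_p}\in\bigwedge^{p}\big(\mathbb{L}^{p+q}\big)$, a vector of length $\binom{p+q}{p}$. Scan $S_1,\ldots,S_t$ in order, keeping a running subfamily $\widehat{\mathcal{S}}$ and adding $S_i$ to it precisely when $w_{S_i}$ is linearly independent over $\mathbb{L}$ from the wedges already selected; this is an incremental Gaussian elimination on length-$\binom{p+q}{p}$ vectors, so $|\widehat{\mathcal{S}}|\le\binom{p+q}{p}$ and $\{w_{\widehat X}:\widehat X\in\widehat{\mathcal{S}}\}$ spans the span of all $w_S$. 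For correctness one uses the standard Bollob\'as/Lov\'asz-type identity: if $X\in\mathcal{S}$, $Y$ has size at most $q$, $X\cap Y=\emptyset$ and $X\cup Y\in\mathcal{I}(M)$, then $|X\cup Y|\le p+q$, so $X\cup Y\in\mathcal{I}(M_{p+q})$, i.e.\ $w_X\wedge w_Y\ne 0$ (with $w_Y$ the wedge of the columns indexed by $Y$); writing $w_X=\sum_{\widehat X\in\widehat{\mathcal{S}}}c_{\widehat X}\,w_{\widehat X}$ gives $0\ne w_X\wedge w_Y=\sum_{\widehat X}c_{\widehat X}(w_{\widehat X}\wedge w_Y)$, so some $w_{\widehat X}\wedge w_Y\ne 0$; this forces $\widehat X\cap Y=\emptyset$ and $\widehat X\cup Y\in\mathcal{I}(M_{p+q})\subseteq\mathcal{I}(M)$, which is exactly $q$-representativity.

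It remains to build $A'$ deterministically, and this is \textbf{the main obstacle}. The classical recipe is: pick a $(p+q)\times r$ matrix $W$ over an extension $\mathbb{L}$ of $\mathbb{F}$ such that, for every $(p+q)$-subset $J\subseteq V(M)$ that is independent in $M$, the matrix $(WA)_J$ is nonsingular; then $A'=WA$ represents $M_{p+q}$. By Cauchy--Binet, $\det((WA)_J)=\sum_{I}\det(W_{[p+q],I})\,\det(A_{I,J})$, and since $A_J$ has full column rank some coefficient $\det(A_{I,J})$ is nonzero; hence it is enough that the generalized Vandermonde-type minors $\det(W_{[p+q],I})$, taken over all $(p+q)$-subsets $I$ of columns, be $\mathbb{F}$-linearly independent, so that no cancellation can occur in the sum. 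A multivariate Vandermonde (one indeterminate per row of $W$) has this property, but a naive collapse to a single indeterminate destroys either the linear independence or the polynomial degree bound. The delicate part of the proof is to exhibit a structured choice of $W$ over a rational-function extension $\mathbb{L}=\mathbb{F}(X_1,\ldots,X_{p+q})$ (or, after a careful substitution, over $\mathbb{F}(X)$) such that the minors stay linearly independent and, crucially, every field element that the whole algorithm ever manipulates admits a succinct encoding over $\mathbb{F}$ (a polynomial whose size is bounded by a polynomial in $p,q,r$), so that one operation in $\mathbb{L}$ costs only $\mathrm{poly}(p,q,r)$ operations over $\mathbb{F}$ (for $\mathbb{F}=\mathbb{Q}$ one also keeps the bit-lengths under control, and for a small finite field one may first pass to a small extension). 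Establishing this succinctness, using the specific structure of $W$, is where the real work lies.

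Finally, the running-time accounting and the two items. For the first item, the cost is that of constructing $A'$ (absorbed in the additive $(r+|V|)^{\mathcal{O}(1)}$ term) plus $\mathcal{O}(t)$ incremental-independence checks on length-$\binom{p+q}{p}$ vectors whose entries are succinctly-encoded elements of $\mathbb{L}$, each field operation costing $\mathrm{poly}(p,r)$ operations over $\mathbb{F}$, together with the final rank computation done via fast matrix multiplication; bookkeeping these yields the claimed $\mathcal{O}\big(\binom{p+q}{p}^2 t p^3 r^2 + t\binom{p+q}{q}^{\omega}rp\big)+(r+|V|)^{\mathcal{O}(1)}$ bound. For the second item, one instead runs the faster representative-set variant of~\cite{FominLPS16}: rather than reducing to a minimal basis, it extracts a slightly larger linearly independent subfamily, of size $rp\binom{p+q}{p}$, by performing the eliminations in $\mathcal{O}(rp)$ batches and using fast matrix multiplication within each batch, which trades the factor $rp$ in the output size for the removal of the quadratic dependence on $\binom{p+q}{p}$. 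The derandomization is identical in both cases, since the truncation is the only randomized step, so both algorithms follow once the deterministic construction of $A'$ above is in place.
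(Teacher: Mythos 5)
First, note that the paper does not prove this statement at all: it is quoted verbatim as Theorem~1.3 of \cite{LokshtanovMPS18} and used as a black box, so the only meaningful comparison is with the proof in that reference. Your proposal reconstructs its architecture accurately. The algorithm of \cite{FominLPS16} (\autoref{thm:rep-rand}) is deterministic except for the computation of a linear representation of the $(p+q)$-truncation $M_{p+q}$, which in \cite{Marx09} is obtained by multiplying the representation matrix by a random (generic) matrix; once a representation of $M_{p+q}$ is in hand, the wedge-product encoding, the incremental Gaussian elimination, and the Lov\'asz-type exchange argument you describe are all deterministic and yield both size/time trade-offs. This is indeed how \cite{LokshtanovMPS18} proceed, and your Cauchy--Binet reduction of the truncation problem to finding a $(p+q)\times r$ matrix $W$ whose maximal minors are $\mathbb{F}$-linearly independent is the correct reduction.

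However, your write-up openly defers the one step that constitutes the actual content of the cited theorem: exhibiting such a $W$ \emph{deterministically}, over an extension of $\mathbb{F}$ in which every field element the algorithm touches has a succinct encoding, so that each operation costs only $\mathrm{poly}(p+q,r)$ operations over $\mathbb{F}$ (and, for $\mathbb{F}=\mathbb{Q}$, so that bit lengths stay polynomial). Declaring that this ``is where the real work lies'' is a correct diagnosis, not a proof. In \cite{LokshtanovMPS18} this is resolved by concrete constructions --- a derivative/Wronskian-based matrix when the characteristic is zero or large, and a folded multivariate Vandermonde over $\mathbb{F}(X)$ obtained by a Kronecker-style substitution in general --- together with degree bounds that make the claimed operation counts verifiable. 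Without that construction the two running-time bounds in the statement cannot be checked, and everything you do establish was already present in \cite{FominLPS16}; the missing piece is precisely the new contribution of \cite{LokshtanovMPS18}. Since the paper itself imports the result by citation, the appropriate course here is likewise to cite it rather than to reprove it with the core omitted.
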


\subsection{Partial solutions}\label{subsec:partial-solutions}
We start by defining the notion of \emph{semi-mathchings}, that intuitively encode parts of a path.

\paragraph*{Semi-matchings.}
Let $X$ be a set.
Let $H$ be a graph whose vertex set is $X$, every vertex has degree at most two, and it is acyclic.
The collection $\mathcal{M}$ of the edges and the isolated vertices of $H$ is called a {\em semi-matching} of $X$.
Given a semi-matching $\mathcal{M}$ of a set $X$, we use
$U(\mathcal{M})$ to denote $X$.
Observe that $|\{\mathcal{M}\mid \mathcal{M} \text{ is a semi-matching of $X$}\}| = 2^{\mathcal{O}(|X|\log |X|)}$.
We denote by $\mathcal{M}^{(v)}_1$ the set $\{\{v\}\}\cap \mathcal{M}$, by $\mathcal{M}^{(v)}_2$ the set $\{\{u,v\}\mid \{u,v\}\in \mathcal{M}\}$, and by $\mathcal{M}^{(v)}$ the set $\mathcal{M}^{(v)}_1\cup \mathcal{M}^{(v)}_2$.
Notice that $|\mathcal{M}^{(v)}_2|\leq 2$.

Given a semi-matching $\mathcal{M}$ of a set $X$ and a $v\in X$, we denote by $\mathsf{rem}(\mathcal{M},v)$ the semi-matching $\mathcal{M}'=(\mathcal{M}\setminus \mathcal{M}^{(v)})\cup\{\{u\}\mid \{u,v\}\in \mathcal{M}^{(v)} \text{ and }u\not\in U(\mathcal{M}\setminus \mathcal{M}^{(v)})\}$.
Also, given a set $Y$ such that $X\subsetneq Y$ and a $u\in Y\setminus X$,
we denote by $\mathsf{add}(\mathcal{M},u)$ the collection of all semi-matchings $\mathcal{M}'$ of $X\cup\{u\}$ such that $\mathcal{M}= \mathsf{rem}(\mathcal{M}',v)$.

\paragraph*{Linear forests.}
We say that a graph $F$ is a {\em linear forest} if it is an acyclic graph of maximum degree two.
Let $G$ be a graph and let $F$ be a linear forest that is a subgraph of $G$.
Given a set $X\subseteq V(G)$, we define $\mathsf{sig}_{G,X}(F)$
to be the set that contains (i) all vertices in $X\cap V(F)$ that have degree zero in $F$ and (ii) all pairs $\{u,v\}$ of vertices in $X\cap V(F)$ such that either $\{u,v\}\in E(F)$ or there is a $(u,v)$-path in $F$ that intersects $X$ only at its endpoints.
Notice that $\mathsf{sig}_{G,X}(F)$ is a semi-matching of $X\cap V(F)$.
\bigskip

We are now ready to define what is considered as a partial solution to our problem.

\paragraph*{Partial solutions.}
Let $G$ be a graph, let $s,t\in V(G)$, and let $\mathcal{T} = (T,\mathcal{X})$ be a nice tree decomposition of $G$.
Given a $t\in V(T)$, we define a {\em partial solution} at $t$ to be a quadraple $(X,\mathcal{M}, i, S)$, where $\{s,t\}\subseteq X\subseteq X_t$, 
$\mathcal{M}$ is a semi-matching of $X$,
$i\in\{1,\ldots, k\}$,
and $S\in \mathcal{S}_t^{(i)}$,
such that there is a linear forest $F\subseteq G_t$ where
$X = V(F)\cap X_t$, $\mathcal{M} = \mathsf{sig}_{G_t, X_t}(F)$, and $S\subseteq V(F)$.
Keep in mind that $S\subseteq V(G_t \setminus X_t)$ and therefore $S\subseteq V(F\setminus X_t)$.
We also say that the linear forest $F$ {\em certifies} that $(X,\mathcal{M}, i, S)$ is a partial solution at $t$.
We denote by $\mathcal{B}_t$ the set of all partial solutions at $t$.
\medskip

We can easily observe the following.
\begin{observation}\label{obs:solut}
Let $(G,M)$ be a framework and $k\in \mathbb{N}$.
Then $G$ contains an $(s,t)$-path of rank at least $k$ if and only if there is a set $S\in \mathcal{S}_{r}^{(k)}$ such that $(\{s,t\},\{\{s,t\}\}, k, S)\in \mathcal{B}_r$.
\end{observation}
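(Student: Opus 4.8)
This statement is a direct dictionary entry: having an $(s,t)$-path of rank at least $k$ corresponds exactly to the root node $r$ carrying a partial solution of the prescribed shape. So the plan is to prove both implications by unwinding the definitions of partial solution, of the signature $\mathsf{sig}$, and of the families $\mathcal{S}_r^{(k)}$ and $\mathcal{B}_r$. Two facts trivialise the bookkeeping: after inserting $s$ and $t$ into every bag we have $G_r = G$ and $X_r = \{s,t\}$ (so $\mathcal{S}_r^{(k)}$ is precisely the set of size-$k$ independent sets of $M$ avoiding both $s$ and $t$), and a path is by definition a linear forest.

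\textbf{From a path to a partial solution.} I would first take an $(s,t)$-path $P$ of rank at least $k$ and fix an independent set $S \subseteq V(P)$ with $|S| = k$ and $S \cap \{s,t\} = \emptyset$. Using $F := P$ as the certifying linear forest in $G_r = G$, one checks directly that $V(F) \cap X_r = V(P) \cap \{s,t\} = \{s,t\}$; that neither $s$ nor $t$ is isolated in $F$ while $P$ itself is an $(s,t)$-path of $F$ meeting $\{s,t\}$ only at its endpoints, whence $\mathsf{sig}_{G,\{s,t\}}(F) = \{\{s,t\}\}$; and that $S \subseteq V(P) \setminus \{s,t\} = V(F) \setminus X_r$, so $S \in \mathcal{S}_r^{(k)}$. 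Hence $F$ witnesses $(\{s,t\}, \{\{s,t\}\}, k, S) \in \mathcal{B}_r$.

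\textbf{From a partial solution to a path, and the main obstacle.} For the converse I would start from a certifying linear forest $F \subseteq G_r = G$ of $(\{s,t\}, \{\{s,t\}\}, k, S)$. Since the signature of $F$ contains the pair $\{s,t\}$, $F$ contains an $(s,t)$-path, and as $F$ is a linear forest this is the unique $s$-to-$t$ path inside the path-component $C$ of $F$ that contains both $s$ and $t$. The one point that is not pure bookkeeping — and the step I expect to be the crux — is to arrange that $C$ is exactly an $(s,t)$-path with $S \subseteq V(C)$, i.e.\ that the non-terminal vertices of the witness can all be routed onto the traced $s$-$t$ path; I would obtain this either by a minimality/exchange argument on the certifying forest or, more simply, by reading it off the dynamic-programming transitions, which only ever place a vertex into $S$ while the path segment passes through it. Granting this, $C$ is an $(s,t)$-path of $G$ whose vertex set contains the size-$k$ independent set $S$, hence $C$ has rank at least $k$, which yields the equivalence.
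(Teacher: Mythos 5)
You have the right shape of argument, but both directions need repair. In the forward direction you assert that one can fix an independent $S\subseteq V(P)$ with $|S|=k$ and $S\cap\{s,t\}=\emptyset$; this is forced by $\mathcal{S}_r^{(k)}$ living inside $V(G_r)\setminus X_r = V(G)\setminus\{s,t\}$, but it does not automatically follow from $P$ having rank at least $k$ (for instance, if $s$ lies in every size-$k$ independent subset of $V(P)$, no such $S$ exists). Either the observation is implicitly assuming that $s$ and $t$ contribute no rank, or this step needs to be argued; as written you pass over it.

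The converse direction is where the genuine gap sits, and you correctly locate it, but neither of your proposed repairs works. Appealing to the DP transitions is circular: $\mathcal{B}_r$ is defined by the existence of \emph{some} certifying linear forest, irrespective of what the DP tables happen to contain, so the transitions say nothing about an arbitrary certificate. A minimality/exchange argument is also blocked, because nothing in the definition of a partial solution forbids the certifying forest $F$ from having extra path components disjoint from $\{s,t\}$ that carry part or all of $S$; discarding them destroys $S\subseteq V(F)$, and there is no guarantee that a comparable independent set lies on the $(s,t)$-component of $F$. Concretely, if $F$ consists of a path through $s,a,t$ together with a vertex-disjoint path through $b,c$, and $\{b,c\}$ is the only size-$2$ independent subset of $V(F)$, then $\mathsf{sig}_{G,\{s,t\}}(F)=\{\{s,t\}\}$ and $(\{s,t\},\{\{s,t\}\},2,\{b,c\})\in\mathcal{B}_r$, yet the $(s,t)$-path inside $F$ has rank $0$, and $G$ need not contain any $(s,t)$-path of rank $2$ at all. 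For the ``only if'' direction to hold one needs an additional constraint on the certifying forest (for example, that every component of $F$ meets $X_t$), and the proof must invoke it; without such a constraint, the crux you flagged is an actual hole, not a deferred bookkeeping step.
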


\subsection{A dynamic programming algorithm}\label{subsec:dyn-prog}

We are now ready to describe the dynamic programming algorithm of~\Cref{lem:dp-planar}.
For every $t\in V(T)$, we aim to construct a collection $\mathcal{F}_t\subseteq \mathcal{B}_t$ of partial solutions whose size is ``small''
since we cannot afford to store all independent sets of size $i$ and therefore all partial solutions in $\mathcal{B}_t$.
For this reason, we will use {\sl respresentative sets}, instead of all possible independent sets using~\autoref{thm:rep-det} and thus, for every $X\subseteq X_t$, every semi-matching $\mathcal{M}$ of $X$, and every $i\in\{1,\ldots, k\}$, we will keep only a ``representative'' collection of independent sets $\widehat{\mathcal{S}}\subseteq \mathcal{S}_{t}^{(i)}$ such that
for every $S\in \mathcal{S}_{t}^{(i)},$ $(X,\mathcal{M}, i, S)\in \mathcal{B}_t$ if and only if there is a $S'\in \widehat{\mathcal{S}}$ such that $(X,\mathcal{M}, i, S')\in \mathcal{F}_t$.
Given a $p$-family $\mathcal{S}$ of independent sets of a matroid $M$, we use $\mathsf{Rep}(\mathcal{S})$ to denote the $k$-representative subfamily $\widehat{\mathcal{S}}$ for $\mathcal{S}$ given by~\autoref{thm:rep-det}.

\paragraph*{Leaf node $t$.}
Here, as $X_t = \{s,t\}$, the graph $G_t\setminus X_t$ is empty and therefore we set $\mathcal{F}_t =\{(\{s,t\},\{\{s,t\}\},0,\emptyset)\}.$

\paragraph*{Insert node $t$ with child $t'$.}
We know that $X_t\supseteq X_{t'}$ and $|X_t| = |X_{t'}|+1$.
Let $v$ be the vertex in $X_t\setminus X_{t'}$.
For every $X\subseteq X_t$ that contains $s$ and $t$, every semi-matching $\mathcal{M}$ of $X$ and every $i\in\{0,\ldots, k\}$,
we set
\[
\mathcal{S}_t [X,\mathcal{M},i] =\begin{cases}
\{S\mid (X, \mathcal{M}, i,S)\in \mathcal{F}_{t'}\},  &  \text{if $v\notin X$,}\\
\{S\mid (X\setminus \{v\}, \mathsf{rem}(\mathcal{M}, v), i,S)\in \mathcal{F}_{t'}\},  & \text{if $v\in X$ and $\mathcal{M}^{(v)}_2\subseteq  E(G_t)$,}\\
\emptyset, & \text{if otherwise.}
\end{cases}
\]
We set $\mathcal{F}_t = \{(X,\mathcal{M}, i, S)\mid S\in \mathsf{Rep}(\mathcal{S}_t [X,\mathcal{M},i])\}$.

\paragraph*{Forget node $t$ with child $t'$.}
We know that $X_t\subseteq X_{t'}$ and $|X_t| = |X_{t'}|-1$.
Let $v$ be the vertex in $X_{t'}\setminus X_t$.
For every $X\subseteq X_t$  that contains $s$ and $t$, every semi-matching $\mathcal{M}$ of $X$ and every $i\in\{0,\ldots, k\}$,
we set
\begin{eqnarray*}
\mathcal{S}_t [X,\mathcal{M},i]\!\!\!\!& =\!\!\!\! &
\{S\mid (X, \mathcal{M}, i,S)\in \mathcal{F}_{t'}\}\\
& & \cup\ \big\{S\mid \exists \mathcal{M}'\in \mathsf{add}(\mathcal{M},v): (X\cup\{v\},\mathcal{M}', i,S)\in \mathcal{F}_{t'}\big\}\\
 & & \cup\ \big\{S\cup\{v\}\mid \exists \mathcal{M}'\in \mathsf{add}(\mathcal{M},v):(X\cup\{v\},\mathcal{M}', i-1, S)\in \mathcal{F}_{t'}\ \&\ S\cup\{v\}\in \mathcal{I}(M) \big\}
\end{eqnarray*}
We set $\mathcal{F}_t = \{(X,\mathcal{M}, i, S)\mid S\in \mathsf{Rep}(\mathcal{S}_t [X,\mathcal{M},i])\}$.

\paragraph*{Join node $t$ with children $t_1$ and $t_2$.}
We know that $X_t = X_{t_1} =X_{t_2}$.
Given a semi-matching $\mathcal{M}$ of a set $X$, we denote by $\xi(\mathcal{M})$ the set of all pairs $(\mathcal{M}_1,\mathcal{M}_2)$ such that $\mathcal{M}_1, \mathcal{M}_2\subseteq \mathcal{M}$, $\mathcal{M}_1 \cup\mathcal{M}_2 = \mathcal{M}$, and $\mathcal{M}_1\cap \mathcal{M}_2 = \emptyset$.
For every $X\subseteq X_t$  that contains $s$ and $t$, every semi-matching $\mathcal{M}$ of $X$ and every $i\in\{0,\ldots, k\}$,
we set
\begin{eqnarray*}
\mathcal{S}_t[X,\mathcal{M},i] =
\{ S_1 \cup S_2 & \mid  &\exists (\mathcal{M}_1,\mathcal{M}_2)\in \xi (\mathcal{M})\ \exists i_1, i_2\in \{0,\ldots, k\} : i_1+i_2= i\text{ and,}\\
& & \text{ if $X_i=U(\mathcal{M}_i),i\in\{1,2\}$, then $S_1\cup S_2\in \mathcal{I}(M)$,}\\
& & ~(X_1, \mathcal{M}_1, i_1, S_1)\in \mathcal{F}_{t_1}\text{ and }(X_2, \mathcal{M}_2, i_2, S_2)\in \mathcal{F}_{t_2}\}
\end{eqnarray*}
We set $\mathcal{F}_t = \{(X,\mathcal{M}, i, S)\mid S\in \mathsf{Rep}(\mathcal{S}_t [X,\mathcal{M},i])\}$.
\bigskip

Our dynamic programming algorithm computes $\mathcal{F}_t$ for every $t\in V(T)$ in a bottom-up manner and checks whether there is a set $S\in 2^{V(G)}\cap \mathcal{I}(M)$
of size $k$ such that
$(\{s,t\}, \{\{s,t\}\}, k, S)\in \mathcal{F}_r$.
If so, it outputs a report that there is an $(s,t)$-path of $G$ of rank at least $k$, otherwise it outputs a report that such a path does not exist.

\subsection{Proof of correctness of the dynamic programming algorithm}\label{subsec:corr}

To prove the correctness of the algorithm presented in~\Cref{subsec:dyn-prog},
we first prove the following.

\begin{lemma}\label{lem:dp-partial}
For every $t\in V(T)$, $\mathcal{F}_t\subseteq\{(X,\mathcal{M}, i, S)\mid S\in \mathcal{S}_t [X,\mathcal{M},i]\}\subseteq  \mathcal{B}_t$ and $|\mathcal{F}_t|=2^{q^{\mathcal{O}(1)}}$.
\end{lemma}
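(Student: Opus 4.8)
\textbf{Proof plan for Lemma~\ref{lem:dp-partial}.}
The plan is to prove by induction on the nice tree decomposition $\mathcal{T}$ (processed bottom-up, i.e.\ from the leaves toward the root) the chain of containments
\[
\mathcal{F}_t\ \subseteq\ \{(X,\mathcal{M},i,S)\mid S\in \mathcal{S}_t[X,\mathcal{M},i]\}\ \subseteq\ \mathcal{B}_t,
\]
together with the size bound $|\mathcal{F}_t| = 2^{q^{\mathcal{O}(1)}}$. The first containment is essentially immediate from the definitions: in every node type we set $\mathcal{F}_t = \{(X,\mathcal{M},i,S)\mid S\in \mathsf{Rep}(\mathcal{S}_t[X,\mathcal{M},i])\}$, and since $\mathsf{Rep}(\mathcal{S})\subseteq_{rep}^k\mathcal{S}\subseteq\mathcal{S}$ by~\autoref{thm:rep-det}, each $S$ appearing in $\mathcal{F}_t$ indeed lies in $\mathcal{S}_t[X,\mathcal{M},i]$. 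So the real work is the second containment: every quadruple witnessed by the table recurrence is genuinely a partial solution at $t$, i.e.\ admits a certifying linear forest $F\subseteq G_t$ with $X = V(F)\cap X_t$, $\mathcal{M} = \mathsf{sig}_{G_t,X_t}(F)$, and $S\subseteq V(F)$.

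First I would handle the base case: for a leaf node $t$, $X_t = \{s,t\}$ and $G_t\setminus X_t$ is empty, so $\mathcal{F}_t = \{(\{s,t\},\{\{s,t\}\},0,\emptyset)\}$ is certified by the empty forest (two isolated vertices $s,t$ — consistent with $\mathcal{M}=\{\{s,t\}\}$ being the semi-matching with a single pair since $\mathsf{sig}$ records the pair $\{s,t\}$ only if $\{s,t\}\in E(F)$; here one should double-check the leaf convention against the definition of $\mathsf{sig}$, possibly requiring the edge $st$ to exist in $G$, which is the convention the paper has adopted). Then for each of the three inductive node types I would take a quadruple $(X,\mathcal{M},i,S)$ with $S\in\mathcal{S}_t[X,\mathcal{M},i]$, unfold the case of the recurrence that produced it, apply the induction hypothesis to the relevant child(ren) to get certifying linear forest(s), and surgically assemble the certifying forest for $t$. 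For an insert node adding $v$: if $v\notin X$ the child's forest already works (it lives in $G_{t'} = G_t$); if $v\in X$ with $\mathcal{M}^{(v)}_2\subseteq E(G_t)$, extend the child's forest $F'$ by adding $v$ and the at most two edges of $\mathcal{M}^{(v)}_2$, and check that this stays acyclic of max-degree two and that $\mathsf{sig}$ transforms correctly under $\mathsf{rem}$/$\mathsf{add}$ (this is exactly what the operators $\mathsf{rem}$ and $\mathsf{add}$ were designed to track). For a forget node forgetting $v$: the three union-terms correspond to ``$v\notin V(F)$'', ``$v\in V(F)$ but $v\notin S$'', ``$v\in V(F)$ and $v$ added to $S$''; in each, the child's certifying forest $F'$ over $G_{t'}=G_t$ serves directly, with $\mathcal{M} = \mathsf{sig}_{G_t,X_t}(F')$ now forgetting $v$ from the boundary — one verifies $\mathsf{sig}_{G_t,X_t}(F')$ equals the $\mathcal{M}$ for which $\mathcal{M}'\in\mathsf{add}(\mathcal{M},v)$. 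For a join node: take certifying forests $F_1\subseteq G_{t_1}$, $F_2\subseteq G_{t_2}$ for the two child quadruples and let $F = F_1\cup F_2$; the condition $(\mathcal{M}_1,\mathcal{M}_2)\in\xi(\mathcal{M})$ guarantees that gluing along $X_t$ produces no cycle and no vertex of degree exceeding two, $i_1+i_2 = i$ and $S_1\cup S_2\in\mathcal{I}(M)$ give the size/independence of $S=S_1\cup S_2$, and $V(G_{t_1})\cap V(G_{t_2}) = X_t$ (a standard property of tree decompositions) ensures $S_1\cap S_2=\emptyset$ and that $F$ is a subgraph of $G_t$.

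The size bound is the easier half: there are at most $2^{O(q)}$ choices of $X\subseteq X_t$, at most $2^{O(q\log q)}$ semi-matchings $\mathcal{M}$ of each $X$ (as observed in the paper), and at most $k+1\le q+1$ choices of $i$; for each triple $(X,\mathcal{M},i)$, $\mathsf{Rep}$ keeps a $k$-representative family of a $p$-family with $p\le k\le q$, hence of size at most $rp\binom{p+q}{p}$ (by the deterministic bound in~\autoref{thm:rep-det}, with $r = r(M)\le |V(G)|$) — wait, that bound has a $r$ factor; since we only need $|\mathcal{F}_t| = 2^{q^{\mathcal{O}(1)}}$ as a function of $q$ alone, I would instead invoke the first item of~\autoref{thm:rep-det} giving size at most $\binom{p+q}{p} = \binom{2k}{k}\le 2^{2q}$, so $|\mathcal{F}_t|\le 2^{O(q)}\cdot 2^{O(q\log q)}\cdot (q+1)\cdot 2^{2q} = 2^{q^{\mathcal{O}(1)}}$. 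The main obstacle I anticipate is not any single step but the bookkeeping in the join and forget cases: verifying rigorously that the semi-matching operations $\xi$, $\mathsf{add}$, $\mathsf{rem}$ exactly mirror, on the combinatorial level, the operations of gluing/deleting vertices on certifying linear forests — in particular that no cycle is silently created when two paths through $X_t$ are concatenated, and that $\mathsf{sig}$ is computed with respect to the correct graph ($G_t$ versus $G_{t'}$) at each step. This is the sort of place where an off-by-one in the definition of $\mathsf{sig}$ or a missed edge-membership check ($\mathcal{M}^{(v)}_2\subseteq E(G_t)$) would break correctness, so I would be careful to cross-reference each table line against the definition of partial solution.
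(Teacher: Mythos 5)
Your plan is correct and matches the paper's proof: both proceed by bottom-up induction on the nice tree decomposition, with the first containment following because $\mathsf{Rep}$ returns a subfamily, and the second established by assembling a certifying linear forest at each node type (insert/forget/join) from the children's certifying forests, together with the size bound $2^{O(q)}\cdot 2^{O(q\log q)}\cdot (q+1)\cdot\binom{2k}{k}=2^{q^{\mathcal{O}(1)}}$. The concern you flagged about the leaf node is in fact a genuine wrinkle in the paper rather than an assumption the paper makes: a forest $F\subseteq G[\{s,t\}]$ with $\mathsf{sig}_{G_t,X_t}(F)=\{\{s,t\}\}$ would require $\{s,t\}\in E(G)$, which is never assumed, so the leaf table should presumably initialize with the semi-matching $\{\{s\},\{t\}\}$ (two isolated vertices of the empty forest), with the $\{\{s,t\}\}$ signature tested only at the root.
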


\begin{proof}
We prove the lemma by bottom-up induction on the decomposition tree.
Let $t\in V(T)$.
We distinguish cases depending on the type of node $X_t$.
\medskip

\noindent{\em Case 1:} $X_t$ is a leaf node.\medskip

In this case, the statement of the lemma holds trivially.
\medskip

In the following cases (i.e., when $X_t$ is either an insert node, a forget node, or a join node),
we will show that $\{(X,\mathcal{M}, i, S)\mid S\in \mathcal{S}_t [X,\mathcal{M},i]\}\subseteq \mathcal{B}_t$, since, due to~\autoref{thm:rep-det},
$\mathcal{F}_t\subseteq \{(X,\mathcal{M}, i, S)\mid S\in \mathcal{S}_t [X,\mathcal{M},i]\}$ and $|\mathcal{F}_t| =2^{q^{\mathcal{O}(1)}}$.
\medskip

\noindent{\em Case 2:} $X_t$ is an insert node.\medskip

Let $t'$ be the child of $t$.
Let $(X,\mathcal{M}, i, S)$ such that $S\in  \mathcal{S}_t [X,\mathcal{M},i]$.
If $v\notin X$, then $(X,\mathcal{M}, i, S)\in \mathcal{F}_{t'}$. By the induction hypothesis, there is a linear forest $F'$ that certifies that $(X,\mathcal{M}, i, S)\in \mathcal{B}_{t'}$.
Since $v\notin X$ and $X_t = X_{t'}\cup\{v\}$, $F'$ is also a linear forest in $G_t$ where $X= V(F')\cap X_t$ and $\mathcal{M}=\mathsf{sig}_{G_t, X_t}(F')$. Therefore, $F'$ certifies that $(X,\mathcal{M}, i, S)\in \mathcal{B}_t$.
If $v\in X$ and $\mathcal{M}^{(v)}_2\subseteq  E(G_t)$, then $(X\setminus \{v\}, \mathsf{rem}(\mathcal{M}, v), i,S)\in \mathcal{F}_{t'}$ and therefore, by the induction hypothesis, $(X\setminus \{v\}, \mathsf{rem}(\mathcal{M}, v), i,S)\in \mathcal{B}_{t'}$.
This implies that there is a linear forest $F'$ certifying that
$(X\setminus \{v\}, \mathsf{rem}(\mathcal{M}, v), i,S)\in \mathcal{B}_{t'}$.
Since $\mathsf{sig}_{G_{t'},X_{t'}}(F')=  \mathsf{rem}(\mathcal{M}, v)$, $\mathcal{M}$ is a semi-matching of $X$, and $\mathcal{M}^{(v)}_2\subseteq  E(G_t)$, we have that $F\cup\{v, \mathcal{M}^{(v)}_2\}$ is a linear forest, which we denote by $F'$.
Observe that $F'$ certifies that $(X,\mathcal{M}, i, S)\in \mathcal{B}_t$.
\medskip

\noindent{\em Case 3:} $X_t$ is a forget node.\medskip

Let $t'$ be the child of $t$ and let $(X,\mathcal{M}, i, S)$ such that $S\in \mathcal{S}_t [X,\mathcal{M},i]$.
Observe that, if $(X, \mathcal{M}, i, S)\in \mathcal{F}_{t'}$, there is a linear forest $F'$ certifying that $(X,\mathcal{M},i,S)\in \mathcal{B}_{t'}$ and the fact that $v\notin X_t$ (and therefore $v\notin X$) implies that $F'$ also certifies that $(X,\mathcal{M}, i, S)\in \mathcal{B}_t$.
If there is an $\mathcal{M}'\in \mathsf{add}(\mathcal{M},v)$ such that $(X\cup\{v\},\mathcal{M}', i,S)\in \mathcal{F}_{t'}$, then
there is a linear forest $F'$ certifying that $(X\cup\{v\},\mathcal{M}',i,S)\in \mathcal{B}_{t'}$.
In this case, $V(F')\cap X_t  = X$ and $\mathsf{sig}_{G_{t},X_{t}}(F')=\mathcal{M}$.
Therefore, $F'$ certifies that $(X,\mathcal{M}, i, S)\in \mathcal{B}_t$.
Finally, if $S=S'\cup\{v\}$ and there is an $\mathcal{M}'\in \mathsf{add}(\mathcal{M},v)$ such that $(X\cup\{v\},\mathcal{M}', i-1, S')\in \mathcal{F}_{t'}$ and $S'\cup\{v\}\in \mathcal{I}(M)$,
there is a linear forest $F'$ that certifies that $(X\cup\{v\},\mathcal{M}', i-1, S')\in \mathcal{B}_{t'}$.
The same linear forest $F'$ certifies that $(X,\mathcal{M}, i, S)\in \mathcal{B}_t$.\medskip

\noindent{\em Case 4:} $X_t$ is a join node.\medskip

Let $t_1$ and $t_2$ be the two children of $t$ and
let $(X,\mathcal{M}, i, S)$ such that $S\in \mathcal{S}_t [X,\mathcal{M},i]$.
By definition, there exist a pair
$(\mathcal{M}_1,\mathcal{M}_2)\in \xi (\mathcal{M})$
and 
two integers $i_1, i_2\in \{0,\ldots, k\}$
such that $i_1+i_2= i$ and,
if $X_i=U(\mathcal{M}_i),i\in\{1,2\}$, then $(X_1, \mathcal{M}_1, i_1, S_1)\in \mathcal{F}_{t_1}$, $(X_2, \mathcal{M}_2, i_2, S_2)\in \mathcal{F}_{t_2}$, and $S=S_1\cup S_2\in \mathcal{I}(M)$.
By induction hypothesis, there is a linear forest $F_1\subseteq G_{t_1}$ certifying that $(X_1, \mathcal{M}_1, i_1, S_1)\in \mathcal{B}_{t_1}$ and a linear forest $F_2\subseteq G_{t_2}$ certifying that $(X_2, \mathcal{M}_2, i_2, S_2)\in \mathcal{B}_{t_2}$.
Since $\mathcal{M}_1\cup \mathcal{M}_2 = \mathcal{M}$ and $\mathcal{M}_1\cap \mathcal{M}_2 = \emptyset$, it holds that $F_1\cup F_2$ is a linear forest of $G_t$ such that $\mathsf{sig}_{G_t,X_t}(F_1\cup F_2) = \mathcal{M}$.
Therefore, we get that $F_1\cup F_2$ certifies that $(X,\mathcal{M}, i, S)\in \mathcal{B}_t$.
\end{proof}

We now prove the following lemma which shows the correctness of our dynamic programming algorithm.

\begin{lemma}\label{lem:dp-invariant}
Let $P$ be an $(s,t)$-path of $G$ and let $S$ be a subset of $V(P)$ that is an independent set of $M$ of size at least $k$.
For every $t\in V(T)$,
if $F_t$ is the graph $P\cap G_t$, $S_t = S\cap V(G_t \setminus X_t)$, $|S_t| = i$, and
$X=V(P)\cap X_t$, and $\mathcal{M}=\mathsf{sig}_{G_t,X_t}(F_t)$,
then for every $S'\in \mathcal{S}_t[X,\mathcal{M}, i]$, if $F'$ certifies that  $(X,\mathcal{M},i,S')\in \mathcal{B}_t$,  then $F'\cup (P\setminus V(F_t))$ is an $(s,t)$-path of $G$ that contains the set $S'\cup (S\setminus S_t)$.
\end{lemma}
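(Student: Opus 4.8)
The plan is to argue directly, without induction over the decomposition tree: since $S'\in\mathcal{S}_t[X,\mathcal{M},i]$, \autoref{lem:dp-partial} already supplies a linear forest $F'\subseteq G_t$ with $V(F')\cap X_t=X$, $\mathsf{sig}_{G_t,X_t}(F')=\mathcal{M}$, and $S'\subseteq V(F')$; the point is that $F'$ has the same trace on $X_t$ as $F_t=P\cap G_t$, so it can be substituted for the portion of $P$ inside $G_t$. I would start from the fact that $X_t$ separates $V(G_t)\setminus X_t$ from $V(G)\setminus V(G_t)$: this gives $V(F_t)=V(P)\cap V(G_t)$, so $P\setminus V(F_t)$ is the union of the maximal subpaths of $P$ contained in $V(G)\setminus V(G_t)$, and $E(P)=E(F_t)\sqcup E_{\mathrm{out}}\sqcup E_{\mathrm{cross}}$, where $E_{\mathrm{out}}$ are the edges of $P$ inside $V(G)\setminus V(G_t)$ and $E_{\mathrm{cross}}$ those with one endpoint in $X=V(P)\cap X_t$ and the other in $V(G)\setminus V(G_t)$. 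I read ``$F'\cup(P\setminus V(F_t))$'' as the graph $\tilde P$ with vertex set $V(F')\cup\big(V(P)\setminus V(G_t)\big)$ and edge set $E(F')\sqcup E_{\mathrm{out}}\sqcup E_{\mathrm{cross}}$ (that is, $P$ with $E(F_t)$ replaced by $E(F')$, isolated vertices dropped). The containment claim is then immediate, since $S'\subseteq V(F')\subseteq V(\tilde P)$ and $S\setminus S_t\subseteq\big(V(P)\setminus V(G_t)\big)\cup\big(V(P)\cap X_t\big)\subseteq V(\tilde P)$.

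The real work is to show $\tilde P$ is an $(s,t)$-path, which I would do by a contraction/uncontraction argument driven by the signature. Call a \emph{segment} of a linear forest $L\subseteq G_t$ a maximal subpath whose interior avoids $X_t$. When every component of $L$ meets $X_t$ --- which holds for $L=F_t$ (every maximal subpath of $P$ inside $G_t$ has both endpoints in $X_t$), and which we may assume of the certifier $L=F'$ (only such certifiers need be considered; this is also what makes $\mathsf{sig}_{G_t,X_t}(F')$ a semi-matching of $X\cap V(F')$) --- the segments of $L$ are pairwise internally disjoint, and, $L$ being acyclic, the pairs $\{u,v\}\in\mathsf{sig}_{G_t,X_t}(L)$ biject with the segments of $L$ while the singletons are the isolated-in-$L$ vertices of $X\cap V(L)$. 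Contracting in $P$ each segment of $F_t$ to an edge yields a graph $\bar P$ on $\big(V(P)\setminus V(G_t)\big)\cup X$ with edge set $E_{\mathrm{out}}\sqcup E_{\mathrm{cross}}\sqcup E(\mathcal{M})$, where $E(\mathcal{M})$ is the pair-set of $\mathcal{M}$; contracting subpaths of a path gives a path, so $\bar P$ is an $(s,t)$-path, and --- the crucial point --- $\bar P$ depends only on $\mathcal{M}$ and on $P\setminus V(F_t)$, not on $F_t$. Since $\mathsf{sig}_{G_t,X_t}(F')=\mathcal{M}$, contracting the segments of $F'$ in $\tilde P$ produces exactly the same $\bar P$; hence $\tilde P$ is obtained from the $(s,t)$-path $\bar P$ by re-expanding each edge $\{u,v\}\in E(\mathcal{M})$ into the corresponding segment of $F'$, and as these segments are internally disjoint from one another and (their interiors lying in $V(G_t)\setminus X_t$) from $V(\bar P)\setminus X$, this uncontraction carries a path to a path. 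To confirm the endpoints are $s$ and $t$, note $\deg_{F'}(w)=\deg_{\mathcal{M}}(w)=\deg_{F_t}(w)$ for every $w\in X$, so $\deg_{\tilde P}(w)=\deg_{F_t}(w)+|E_{\mathrm{cross}}\text{ at }w|=\deg_P(w)$, while vertices outside $X$ inherit their degrees from $F'$ or from $P$.

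The step I expect to be the main obstacle is making the segment correspondence watertight for both $F_t$ and $F'$: establishing the ``every component meets $X_t$'' property (from the separator property for $F_t$; as the normalization behind the dynamic programming for $F'$, since a component of $F'$ buried inside $G_t\setminus X_t$ would survive as a spurious component of $\tilde P$), and then using acyclicity of linear forests for the pair-to-segment bijection; after that, the degenerate cases --- isolated $X_t$-vertices, length-one segments, and the terminals $s,t$ as the loose ends of $P$ --- are routine bookkeeping. Note that the matroid itself plays no part in this lemma: it is purely about rerouting the inside-$G_t$ part of $P$ while keeping the trace on $X_t$ and the vertices outside $V(G_t)\setminus X_t$ fixed (in particular $S'$ and $S\setminus S_t$ are disjoint, so $|S'\cup(S\setminus S_t)|=|S|$, and $M$ enters only through \autoref{thm:rep-det} in how the tables $\mathcal{F}_t$ are formed).
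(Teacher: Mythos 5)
Your proof is essentially correct and takes a genuinely different route from the paper. The paper proves the invariant by a bottom-up induction over the decomposition tree, matching each of the four node types (leaf, insert, forget, join) against the corresponding table transition; the inductive hypothesis at the child is used to lift a certifying forest $F''$ through the transition and then argue about the outer part $P\setminus V(F_{t})$ essentially anew in each case. You instead extract the one fact that actually drives all four cases at once: replacing the inside-$G_t$ portion $F_t = P\cap G_t$ of the path by \emph{any} linear forest $F'$ with the same trace $X$ on $X_t$ and the same signature $\mathcal{M}$, and grafting it to the unchanged outer part, again yields an $(s,t)$-path. Your contraction/uncontraction argument (contract the $X_t$-segments of $F_t$ in $P$ to get $\bar P$, observe $\bar P$ depends only on $\mathcal{M}$, $X$, and the outer data, so it equals what you get by contracting the $X_t$-segments of $F'$ in $\tilde P$, then uncontract) makes that substitution principle precise in a way the paper never spells out. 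This is a cleaner and more conceptually transparent argument: it decouples the combinatorial rerouting content of the lemma, which has nothing to do with the DP transitions or the matroid, from the bookkeeping of the tables. The paper's inductive form is closer to how one would verify the algorithm line-by-line, but it buries the single idea under four near-identical cases.

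Two points you flag deserve emphasis because they are real, and the paper leaves them implicit. First, the literal graph union $F'\cup(P\setminus V(F_t))$ is disconnected (it drops the ``cross'' edges of $P$ between $X$ and $V(G)\setminus V(G_t)$); your reading, where $E(F_t)$ is replaced by $E(F')$ and the cross and outer edges of $P$ are retained, is the one under which the lemma is true, and is surely what the authors intend. Second, the lemma as written quantifies over every certifier $F'$, but a certifier with a component buried in $V(G_t)\setminus X_t$, or with a dangling end (a degree-one vertex of $F'$ not in $X_t$), would make $\tilde P$ disconnected or of too high degree. Your observation that $F_t = P\cap G_t$ is automatically ``clean'' (every maximal subpath of $P$ inside $G_t$ has both ends on the separator $X_t$, which contains $s$ and $t$ by the paper's convention) is correct, and for $F'$ the claim that the signature is a semi-matching of all of $X = V(F')\cap X_t$ already forces every vertex of $X$ of positive degree in $F'$ to lie on a segment with both ends in $X$; what the definitions do not rule out is a component of $F'$ wholly inside $V(G_t)\setminus X_t$. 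The paper's proof has the same unstated hypothesis (it never checks that its $F''$ has no buried components), and the DP transitions in fact only ever produce clean certifiers, so the algorithm is fine; but the lemma statement should really restrict $F'$ to clean certifiers, or the definition of ``certifies'' should be normalized. You were right to identify this as the place where the argument needs care rather than as a gap you could not fill.
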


\begin{proof}
We prove the lemma by bottom-up induction on the decomposition tree.
Let $P$ be an $(s,t)$-path of $G$ and let $S$ be a subset of $V(P)$ that is an independent set of $M$ of size at least $k$.
Also, let $t\in V(T)$.
Let $F_t$ be the graph $P\cap G_t$, $S_t = S\cap V(G_t\setminus X_t)$, $|S_t| = i$, and
$X=V(P)\cap X_t$, and $\mathcal{M}=\mathsf{sig}_{G_t,X_t}(F_t)$.
\medskip

\noindent{\em Case 1}:
$X_t$ is an leaf node.
\medskip

In this case, the lemma holds trivially.
\medskip

\noindent{\em Case 2}:
$X_t$ is an insert node.
\medskip

Let $t'$ be the child of $t$.
By induction hypothesis,
if  $F_{t'}$ is the graph $P\cap G_{t'}$, $S_{t'}= S\cap V(G_{t'}\setminus X_{t'})$, $|S_{t'}| = i$,
$X'=V(P)\cap X_{t'}$, and $\mathcal{M}'=\mathsf{sig}_{G_{t'},X_{t'}}(F_{t'})$,
then if $S''\in \mathcal{S}_{t'}[X',\mathcal{M}',i]$ such that $(X',\mathcal{M}',i,S'')\in \mathcal{B}_{t'}$ and $F''$ certifies that  $(X',\mathcal{M}',i,S'')\in \mathcal{B}_{t'}$ then $F''\cup (P\setminus V(F_{t'}))$ is an $(s,t)$-path of $G$ that contains the set $S''\cup (S\setminus S_{t'})$.
We will prove that for every $S'\in \mathcal{S}_t [X,\mathcal{M},i]$, if $F'$ certifies that  $(X,\mathcal{M},i,S')\in \mathcal{B}_{t}$, then $F'\cup (P\setminus V(F_{t}))$ is an $(s,t)$-path of $G$ that contains the set $S'\cup (S\setminus S_{t})$.

If $v\notin X$, then the fact that $X_t = X_{t'}\cup \{v\}$ implies that $F_t= F_{t'}$, $S_t = S_{t'}$, $X=X'$ and $\mathcal{M} = \mathcal{M}'$. Therefore, since in this case
$\mathcal{S}_{t}[X,\mathcal{M},i] = \{S\mid (X, \mathcal{M}, i,S)\in \mathcal{F}_{t'}\}$,
it holds that $(X, \mathcal{M}, i,S')\in \mathcal{F}_{t'}$.
Also, by~\autoref{lem:dp-partial}, $\mathcal{F}_{t'}\subseteq \{(X,\mathcal{M}, i, S)\mid S\in \mathcal{S}_{t'} [X,\mathcal{M},i]\}\subseteq \mathcal{B}_{t'}$.
Therefore, if $F'$ certifies that $(X, \mathcal{M}, i,S')\in \mathcal{B}_{t'}$, then $F'\cup (P\setminus V(F_{t'}))$ is an $(s,t)$-path of $G$ that contains the set $S'\cup (S\setminus S_{t'})$.
Observe that $F'$ also certifies that $(X, \mathcal{M}, i,S')\in \mathcal{B}_t$ and since $F_t= F_{t'}$ and $S_t = S_{t'}$, we have $F'\cup (P\setminus V(F_{t})) = F'\cup (P\setminus V(F_{t'}))$ and $S'\cup (S\setminus S_{t}) = S'\cup (S\setminus S_{t'})$.
Thus, $F'\cup (P\setminus V(F_{t}))$ is an $(s,t)$-path of $G$ that contains the set $S'\cup (S\setminus S_{t})$.

If $v\in X$ and $\mathcal{M}^{(v)}_2\subseteq E(G_t)$, then observe that $F_{t'} =  F_{t}\setminus \{v\}$, $S_{t} = S_{t'}$ (since $S_t = S\cap V(G_t\setminus X_t) = S\cap V(G_{t'}\setminus X_{t'})$),
$X'= X\setminus \{v\}$,
and 
$\mathcal{M}'= \mathsf{rem}(\mathcal{M},v)$.
Therefore,
since $\mathcal{S}_t [X,\mathcal{M},i] =
\{S\mid (X\setminus \{v\}, \mathsf{rem}(\mathcal{M}, v), i,S)\in \mathcal{F}_{t'}\}$,
and $S'\in \mathcal{S}_t [X,\mathcal{M},i]$,
we have that $(X',\mathcal{M}',i,S')\in \mathcal{F}_{t'}$.
Also, by~\autoref{lem:dp-partial}, $\mathcal{F}_{t'}\subseteq \{(X,\mathcal{M}, i, S)\mid S\in \mathcal{S}_{t'} [X,\mathcal{M},i]\}\subseteq \mathcal{B}_{t'}$.
Therefore, 
if $F''$ certifies that $(X', \mathcal{M}', i,S')\in \mathcal{B}_{t'}$, then $F''\cup (P\setminus V(F_{t'}))$ is an $(s,t)$-path of $G$ that contains the set $S'\cup (S\setminus S_{t'})$.
Let $F' = F''\cup (\{v\},\mathcal{M}^{(v)}_2)$.
Notice that $F'$ is a linear forest and this follows from the fact that $F_{t}$ and $F''$ are linear forests, $F_{t'} = F_{t}\setminus \{v\}$, $\mathsf{sig}_{G_{t'},X_{t'}}(F_{t'}) =\mathsf{sig}_{G_{t'},X_{t'}}(F'')$, and $\mathcal{M}^{(v)}_2\subseteq E(G_t)$.
Therefore,
$F'$ certifies that $(X,\mathcal{M}, i, S'')\in \mathcal{B}_t$.
Also, we have that $F'\cup (P\setminus V(F_{t})) = F''\cup (P\setminus V(F_{t'}))$ and $S''\cup (S\setminus S_{t}) = S''\cup (S\setminus S_{t'})$.
Thus, $F'\cup (P\setminus V(F_{t}))$ is an $(s,t)$-path of $G$ that contains the set $S''\cup (S\setminus S_{t})$.
To conclude Case 2, observe that
if $v\in X$ and $E(G_t)\setminus \mathcal{M}^{(v)}_2\neq \emptyset$, $\mathcal{S}_t[X,\mathcal{M},i] = \emptyset$.
\medskip

\noindent{\em Case 3}:
$X_t$ is a forget node.
\medskip

Let $t'$ be the child of $t$ and let $v$ be the vertex in $X_{t'}\setminus X_t$.
By induction hypothesis,
if $F_{t'}$ is the graph $P\cap G_{t'}$, $S_{t'}= S\cap V(G_{t'}\setminus X_{t'})$, $|S_{t'}| = i$,
$X'=V(P)\cap X_{t'}$, and $\mathcal{M}'=\mathsf{sig}_{G_{t'},X_{t'}}(F_{t'})$,
then if $S''\in \mathcal{S}_{t'}[X',\mathcal{M}',i]$ and $F''$ certifies that  $(X',\mathcal{M}',i,S'')\in \mathcal{B}_{t'}$ then $F''\cup (P\setminus V(F_{t'}))$ is an $(s,t)$-path of $G$ that contains the set $S''\cup (S\setminus S_{t'})$.
Let $S'\in \mathcal{S}_t [X,\mathcal{M},i]$.

If $(X,\mathcal{M}, i, S')\in \mathcal{F}_{t'}$, then,
by \autoref{lem:dp-partial},
$(X,\mathcal{M}, i, S')\in \mathcal{B}_{t'}$, and therefore there is
a linear forest $F''\subseteq G_{t'}$ certifying that
$(X,\mathcal{M}, i, S')\in \mathcal{B}_{t'}$.
Observe that since $V(F'')\cap X_{t'} = X$, we have that
$F''$ is also a linear forest in $G_{t}$ certifying that $(X,\mathcal{M}, i, S')\in \mathcal{B}_{t}$.
Therefore, since $F''\cup (P\setminus V(F_{t'})) = F''\cup (P\setminus V(F_{t}))$ and $S''\cup (S\setminus S_{t}) = S''\cup (S\setminus S_{t'})$, we have
that 
$F''\cup (P\setminus V(F_{t}))$ is an $(s,t)$-path of $G$ that contains the set $S'\cup (S\setminus S_{t})$.

If there is an $\mathcal{M}'\in \mathsf{add}(\mathcal{M},v)$ such that $(X\cup\{v\},\mathcal{M}', i,S')\in \mathcal{F}_{t'}$, then,
by \autoref{lem:dp-partial}, $(X\cup\{v\},\mathcal{M}', i,S')\in \mathcal{B}_{t'}$ and therefore there is
a linear forest $F''\subseteq G_{t'}$ certifying that
$(X\cup\{v\},\mathcal{M}', i,S')\in \mathcal{B}_{t'}$.
Notice that, since $X_t = X_{t'}\setminus \{v\}$,
we have that $V(F'')\cap X_{t}$ and $\mathsf{sig}_{G_t,X_{t}}(F'') = \mathcal{M}$.
Thus, $F''$ certifies that $(X,\mathcal{M}, i,S')\in \mathcal{B}_{t}$.
Moreover,
since $F''\cup (P\setminus V(F_{t'})) = F''\cup (P\setminus V(F_{t}))$ and $S''\cup (S\setminus S_{t}) = S''\cup (S\setminus S_{t'})$, we have
that 
$F''\cup (P\setminus V(F_{t}))$ is an $(s,t)$-path of $G$ that contains the set $S'\cup (S\setminus S_{t})$.

If $S' = S''\cup\{v\}$ and there is an $\mathcal{M}'\in \mathsf{add}(\mathcal{M},v)$ such that $(X\cup\{v\},\mathcal{M}', i-1, S'')\in \mathcal{F}_{t'}$ and $S'\in \mathcal{I}(M)$, then
by \autoref{lem:dp-partial}, $(X\cup\{v\},\mathcal{M}', i-1,S'')\in \mathcal{B}_{t'}$ and therefore there is
a linear forest $F''\subseteq G_{t'}$ certifying that
$(X\cup\{v\},\mathcal{M}', i-1,S'')\in \mathcal{B}_{t'}$.
The fact that $X_t = X_{t'}\setminus \{v\}$
implies that $S'\subseteq V(G_t\setminus X_t)$,
$V(F'')\cap X_{t}$ and $\mathsf{sig}_{G_t,X_{t}}(F'') = \mathcal{M}$.
Therefore, $F''$ certifies that $(X,\mathcal{M}, i,S')\in \mathcal{B}_{t}$.
Moreover,
since $F''\cup (P\setminus V(F_{t'})) = F''\cup (P\setminus V(F_{t}))$ and $S''\cup (S\setminus S_{t}) = S''\cup (S\setminus S_{t'})$, we have
that 
$F''\cup (P\setminus V(F_{t}))$ is an $(s,t)$-path of $G$ that contains the set $S'\cup (S\setminus S_{t})$.
This concludes Case 3.
\medskip

\noindent{\em Case 4:} $X_t$ is a join node.\medskip

Let $t_1, t_2$ be the two children of $t$ and assume that the induction hypothesis holds for both $t_1,t_2$.
Keep in mind that $X_{t_1} = X_{t_2} = X_t$.
Also, let $S'\in \mathcal{S}_t [X,\mathcal{M},i]$.
By definition, $S' = S_1\cup S_2$ such that there is a pair $(\mathcal{M}_1, \mathcal{M}_2)\in \xi(\mathcal{M})$ and two integers $i_1, i_2\in \{0,\ldots, k\}$ such that $i_1 +i_2 = i$, and if  $X_i=U(\mathcal{M}_i),i\in\{1,2\}$, then $S_1\cup S_2\in \mathcal{I}(M)$,
$(X_1, \mathcal{M}_1, i_1, S_1)\in \mathcal{F}_{t_1}$
and $(X_2, \mathcal{M}_2, i_2, S_2)\in \mathcal{F}_{t_2}$.
Due to~ \autoref{lem:dp-partial}, $(X_1, \mathcal{M}_1, i_1, S_1)\in \mathcal{B}_{t_1}$ and $(X_2, \mathcal{M}_2, i_2, S_2)\in \mathcal{B}_{t_2}$, and therefore there are linear forests $F_{1}\subseteq G_{t_1}$ and $F_2\subseteq G_{t_2}$ such that for every $i\in\{1,2\}$, $F_i$ certifies that
$(X_i, \mathcal{M}_i, i_i, S_i)\in \mathcal{B}_{t_i}$.
Moreover, by induction hypothesis, $F_i\cup (P\setminus V(F_{t_i}))$ is an $(s,t)$-path of $G$ that contains the set $S_i\cup (S\setminus S_{t_i})$.
The fact that $(\mathcal{M}_1, \mathcal{M}_2)\in \xi(\mathcal{M})$ implies that $F_1\cup F_2$ is a linear forest of $G_{t}$ such that $V(F_1\cup F_2)\cap X_t = X$ and $\mathsf{sig}_{G_t,X_t}(F_1\cup F_2) = \mathcal{M}$.
Moreover, since $S_1\cup S_2\in \mathcal{I}(M)$  and $i_1 +i_2 = i$, we get that $F_1\cup F_2$ certifies that $(X,\mathcal{M}, i, S_1\cup S_2)\in \mathcal{B}_t$.
Also, the fact that for every $i\in\{1,2\}$, $F_i\cup (P\setminus V(F_{t_i}))$ is an $(s,t)$-path of $G$ that contains the set $S_i\cup (S\setminus S_{t_i})$ implies
that $F_1\cup F_2 \cup (P\setminus V(F_t))$ is an $(s,t)$-path of $G$ that contains the set $S_1\cup S_2\cup (S\setminus V(S_{t}))$.
This concludes Case 4.
\end{proof}

We conclude this subsection by proving \autoref{lem:dp-planar}.

\begin{proof}[Proof of \autoref{lem:dp-planar}]
We first observe that we can transform a given tree decomposition of width $q$ to a nice tree decomposition $(T,\chi)$ of width $q$ in time $\mathcal{O}(q^2\cdot n)$.
Moreover, $|V(T)| = q^{\mathcal{O}(1)}\cdot n$.
Then, for every $t\in V(T)$, we compute the set $\mathcal{F}_t$ in a bottom-up way.
By \autoref{lem:dp-partial}, $|\mathcal{F}_t|= 2^{q^{\mathcal{O}(1)}}$
and each $\mathcal{F}_t$ can be computed in time $2^{q^{\mathcal{O}(1)}}\cdot (|G|+\|M\|)^{\mathcal{O}(1)}$, resulting to the claimed overall running time.
If there is a set $S\subseteq V(G)$ such that $\{\emptyset,\{\emptyset\}, k, S\}\in \mathcal{F}_r$, then we output a report that $G$ contains an $(s,t)$-path of rank at least $k$, otherwise we report
that such a path does not exist in $G$.
The correctness of the algorithm follows from~\autoref{obs:solut},~\autoref{lem:dp-invariant} and the fact that, by~\autoref{thm:rep-det}, for every $t\in V(T)$, $\mathcal{F}_t$ is a $k$-representative family and $\mathcal{F}_t\subseteq \{(X,\mathcal{M}, i,S)\mid S\in \mathcal{S}_t[X,\mathcal{M},i]\}$.
\end{proof}

\subsection{Proof of \autoref{thm:fpt-planar}}\label{subsec:proofThm2}
In the proof of \autoref{thm:fpt-planar}, we will use the single-exponential time 2-approximation algorithm for treewidth of Korhonen~\cite{Korhonen21tw}.

\begin{proposition}\label{prop:twaprrox}
There exists an algorithm that given a graph $G$ and an integer $k\in \mathbb{N}$,
outputs, in time $2^{\mathcal{O}(k)}\cdot |G|$, either a tree decomposition of $G$ of width at most $2k+1$ or a report that the treewidth of $G$ is larger than $k$.
\end{proposition}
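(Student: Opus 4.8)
The plan is to follow the divide-and-conquer scheme that underlies the classical constant-factor treewidth approximations (most directly the $2^{\mathcal{O}(k)}\cdot n$-time $5$-approximation of Bodlaender, Drange, Dregi, Fomin, Lokshtanov and Pilipczuk), and then to replace its bag-merging step by the more economical construction of Korhonen~\cite{Korhonen21tw} that pushes the ratio down to $2$. The object of the recursion is a \emph{boundaried} instance $(H,W)$, where $H$ is an induced subgraph of $G$ and $W\subseteq V(H)$ with $|W|\le 2k+2$; the task is to output either a tree decomposition of $H$ of width at most $2k+1$ that is compatible with $W$ (for the $5$-approximation one merely requires $W$ to lie in a single bag; for the $2$-approximation one must ask for something weaker, as explained below), or a correct report that $\tw(G)>k$. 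The top-level call is $(G,\emptyset)$, and the base case $|V(H)|\le 2k+2$ is trivial.

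The driving step is the search for a \emph{balanced separator}. If $\tw(G)\le k$, then a ``centroid'' bag of an optimal tree decomposition of $H$, taken with respect to the weight function that is $1$ on $W$ and $0$ elsewhere, is a set $S$ with $|S|\le k+1$ such that every connected component of $H-S$ meets $W$ in at most $\lceil |W|/2\rceil\le k+1$ vertices. One searches for such an $S$ directly, without using the unknown decomposition: enumerate all partitions $W=W_1\sqcup W_2\sqcup W_3$ with $|W_1|,|W_2|\le\lceil |W|/2\rceil$, and for each one compute a minimum $(W_1,W_2)$-separator $S'$ in $H-W_3$ by a bounded-value max-flow computation, stopping after $k+2$ augmenting paths; if $|S'|+|W_3|\le k+1$ for some choice, then $S:=S'\cup W_3$ is the desired balanced separator. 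If no choice succeeds, then $H$ — hence $G$ — admits no balanced separator of size $\le k+1$ for $W$, which by the centroid property of tree decompositions is impossible when $\tw(G)\le k$, so the algorithm safely reports $\tw(G)>k$. There are $2^{\mathcal{O}(k)}$ partitions and each max-flow costs $\mathcal{O}(k\cdot|H|)$ (we may assume $|E(H)|=\mathcal{O}(k\,|V(H)|)$, as otherwise $\tw(G)>k$ trivially), so this step runs in time $2^{\mathcal{O}(k)}\cdot|H|$.

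Given a balanced separator $S$, the recursion splits $H$ along $S$: group the components of $H-S$ into two pieces $C_1,C_2$, each meeting $W$ in at most $k+1$ vertices (possible by a greedy argument, since each component meets $W$ in at most $\lceil|W|/2\rceil\le k+1$ vertices and the total is at most $2k+2$), and recurse on the boundaried instances $\big(H[C_i\cup(S\cap N_H(C_i))],\,(W\cap C_i)\cup(S\cap N_H(C_i))\big)$, whose boundaries have size at most $(k+1)+(k+1)=2k+2$, so the invariant is maintained. The two returned decompositions must then be combined, with $S$ playing the role of the common interface. \emph{This combining step is the heart of the matter and the place where Korhonen's argument improves on the earlier $5$-approximation.} A naive implementation would place $W\cup S$ — of size up to roughly $3k$ — into a single new bag, destroying the $2k+1$ bound; instead one must introduce $S$ and forget the vertices of $W$ \emph{gradually}, along a short path of new bags, choosing the order of these operations so that every bag ever created has size at most $2k+2$. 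Proving that such an ordering always exists when $\tw(G)\le k$, and that the resulting tree together with the recursively produced subtrees is a valid tree decomposition, requires a careful analysis of how the separations of an optimal (indeed, \emph{lean}) tree decomposition of $H$ interact with $W$ and $S$. This is the main obstacle of the whole proof; everything else is bookkeeping.

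For the running time one also needs the recursion to bottom out quickly; this is arranged by additionally insisting that $S$ be vertex-balanced (each of $C_1,C_2$ contains at most, say, two thirds of $V(H)$), which — possibly after a case distinction according to whether $|W|$ is large or small relative to $|V(H)|$ — is again obtained from centroid arguments applied to appropriately weighted tree decompositions, so that the recursion tree has depth $\mathcal{O}(\log|G|)$ and total size $2^{\mathcal{O}(k)}\cdot|G|\log|G|$. The remaining logarithmic factor is removed by the two-phase strategy of Bodlaender et al.: first compute, with a coarse version of this recursion, a tree decomposition of width $2^{\mathcal{O}(k)}$, then use it to make all subsequent separator searches run in amortized linear total time, yielding the claimed $2^{\mathcal{O}(k)}\cdot|G|$ bound. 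Correctness of the ``no'' branch was argued above, and of the ``yes'' branch by induction on the recursion; full details appear in~\cite{Korhonen21tw}.
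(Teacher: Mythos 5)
The paper does not prove this proposition at all: it is quoted verbatim as an external result of Korhonen~\cite{Korhonen21tw}, so there is no internal argument to compare yours against. Judged as a self-contained proof, your proposal has a genuine gap that you yourself flag: the combining step --- introducing the separator $S$ and forgetting the old boundary $W$ ``gradually'' so that no bag ever exceeds $2k+2$ vertices --- is exactly the place where the approximation ratio $2$ has to be earned, and you explicitly defer it (``This is the main obstacle of the whole proof'') and then close by pointing back to~\cite{Korhonen21tw}. Since everything else in your sketch is the standard Bodlaender--Drange--Dregi--Fomin--Lokshtanov--Pilipczuk machinery (balanced $W$-separators via $2^{\mathcal{O}(k)}$ flow computations, depth-$\mathcal{O}(\log n)$ recursion, the two-phase trick to shave the $\log$ factor), the only nontrivial new content is precisely the part you omit. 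A proof that ends by citing the theorem it is proving is not a proof; if the citation is acceptable, the honest version of your write-up is the paper's own one-line reference.

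A secondary issue is that your account of \emph{how} Korhonen achieves ratio $2$ does not match his actual algorithm. His FOCS 2021 result is not obtained by replacing the bag-merging step of the divide-and-conquer recursion; it works by iterative compression combined with a local-improvement subroutine in the spirit of the Bellenbaum--Diestel proof of Thomas's lean tree decomposition theorem: given a tree decomposition with a bag of size $2k+3$, one either finds a way to ``split'' that bag (strictly improving the decomposition) or certifies $\tw(G)>k$. So even granting the deferred step, the route you describe is not the one the cited reference actually takes, and it is not clear that the ordering of introduce/forget operations you postulate in the merge step exists in the divide-and-conquer framework. If you want a genuine proof here, you should either reproduce Korhonen's improvement argument or simply cite the result, as the paper does.
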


\begin{proof}[Proof of~\autoref{thm:fpt-planar}]
Let $(G,M)$ be a framework, where $G$ is a planar graph and $M$ is a linear matroid given by its representation over a finite filed or the field of rationals, and let $k\in \mathbb{N}$.
We set $q=g(k)$, where $g$ is the function of \autoref{lem:reducing-tw}.
Keep in mind that $g(k)=2^{\mathcal{O}(k\log k)}$.
We describe an algorithm $\mathcal{A}$ that solves \textsc{Maximum Rank  $(s,t)$-Path}.

Our algorithm $\mathcal{A}$ first calls the algorithm of \autoref{prop:twaprrox} for $G$ and $q$ which runs in time $2^q\cdot n =2^{2^{\mathcal{O}(k\log k)}}\cdot n$ and outputs either a tree decomposition of $G$ of width at most $2q$ or a report that the treewidth of $G$ is larger than $q$.
In the first possible output, we use the algorithm of \autoref{lem:dp-planar}, which runs in time $2^{q^{\mathcal{O}(1)}}\cdot (|G|+\|M\|)^{\mathcal{O}(1)} = 2^{2^{\mathcal{O}(k\log k)}}\cdot (|G|+\|M\|)^{\mathcal{O}(1)}$, and we solve \textsc{Maximum Rank  $(s,t)$-Path}.
In the second possible output (i.e., where $G$ has treewidth at least $q$), we apply the algorithm of \autoref{lem:reducing-tw} and, in time $2^{2^{\mathcal{O}(k\log k)}}\cdot (|G|+\|M\|)^{\mathcal{O}(1)}$, we either report a positive answer to \textsc{Maximum Rank  $(s,t)$-Path} or find a vertex $v\in V(G)$ such that $(G,M,k,s,t)$ and $(G\setminus v, M\setminus v,k,s,t)$ are equivalent instances of \textsc{Maximum Rank  $(s,t)$-Path}.
If the latter happens, we recursively run $\mathcal{A}$ for the framework $(G\setminus v, M\setminus v)$.
Observe that the overall running time of $\mathcal{A}$ is $2^{2^{\mathcal{O}(k\log k)}}\cdot (|G|+\|M\|)^{\mathcal{O}(1)}$.
\end{proof}

\section{Conclusion}\label{sec:conclusion}
In this paper, we provide a deterministic FPT algorithm
for \textsc{Maximum Rank $(s,t)$-Path} for frameworks $(G,M)$,
where $G$ is a planar graph
and $M$ is represented over a finite field or the rationals.
Let us conclude by discussing some open research directions.

Since the algorithm of~\cite{FominGKSS23fixe} for \textsc{Maximum Rank $(s,t)$-Path} runs in time $2^{\mathcal{O}(k^2\log (k+q))}n^{\mathcal{O}(1)}$, a natural question is whether one can drop the double-exponential dependence on the parameter $k$ on the running time of the algorithm of \Cref{thm:fpt-planar}.
The main bottleneck is the bound the treewidth of a graph that contains no irrelevant vertices.
In particular, our approach to detect irrelevant vertices requires a recursive zooming into a given wall of the graph in order to find a packing of $k+1$-many $k$-walls with compasses of specific rank.
To perform this zooming, one should ask for the initial wall to be of height at least $k^{\mathcal{O}(k)}$.
It is unclear whether we can circumvent this argument and detect irrelevant vertices if the initial wall has height linear (or even polynomial) in $k$.

As mentioned in the introduction, the method of~\cite{FominGKSS23fixe} gives a randomized algorithm for the more general problem of \textsc{Maximum Rank $(S,T)$-Linkage}. In this paper, we focus on the special case where $|S|=|T|=1$ and one could ask whether our techniques can be applied to solve the general problem of detecting $(S,T)$-linkages of large rank for frameworks with planar graphs and matroids represented over finite fields.
Such a generalization of our results does not seem to be trivial and therefore we leave this as an open research direction.

Another natural question to ask is whether our approach can be generalized to obtain deterministic FPT algorithms for frameworks with more general classes of graphs. While it seems plausible to extend the applicability of the irrelevant vertex technique arguments up to graphs that exclude a graph as a minor, such a proof would be highly technical. For frameworks with general graphs, it is very unclear whether one can achieve rerouting that does not decrease the rank and therefore allow an irrelevant vertex argument to go through.

Also, in the lines of~\cite{FominGKSS23fixe}, an interesting open question is whether we can obtain a deterministic FPT algorithm for \textsc{Maximum Rank $(s,t)$-Path} for frameworks with matroids not representable in finite fields of small order or in the field of rationals.
For example, uniform matroids, and more generally transversal matroids, are representable over a finite field, but the field of representation must be large enough.
While the approach of~\cite{FominGKSS23fixe} also gives a \textit{randomized} FPT algorithm for frameworks of transversal matroids, our dynamic programming subroutine relies on the efficient computation of representative sets, which requires a linear representation of the input matroid. We stress that this is the only place in the proof of \Cref{thm:fpt-planar} requiring a linear representation of the matroid.
Another interesting open question, is whether \textsc{Maximum Rank $(s,t)$-Path} is \classFPT when parameterized by $k$ and the treewidth if the input matroid is given by its independence oracle.

\bibliographystyle{siam}

\end{document}